\algnewcommand{\algorithmicassumption}{\textbf{Requirement:}}
\algnewcommand{\Assume}{\item[\algorithmicassumption]}
\algnewcommand{\InlineIf}[2]{
  \algorithmicif\ #1\ \algorithmicthen\ #2}
\algnewcommand{\InlineElse}[1]{
  \algorithmicelse\ #1}
\algnewcommand{\InlineIfElse}[3]{
  \algorithmicif\ #1\ \algorithmicthen\ #2\ \algorithmicelse\ #3}
\algnewcommand{\InlineFor}[2]{\algorithmicfor\ #1\ \algorithmicdo\ #2} 
\algnewcommand{\CommentLine}[1]{\(\triangleright\) \emph{\small #1}}
\algnewcommand{\algorithmicand}{\textbf{and}}
\algnewcommand{\algorithmicor}{\textbf{or}}
\algnewcommand{\FOR}{\algorithmicfor}
\algnewcommand{\OR}{\algorithmicor}
\algnewcommand{\AND}{\algorithmicand}
\algnewcommand{\IF}{\algorithmicif}
\algnewcommand{\THEN}{\algorithmicthen}
\algnewcommand{\ELSE}{\algorithmicelse}
\newcommand{\bigO}[1]{\mathchoice{O\left(#1\right)}{O(#1)}{O(#1)}{O(#1)}} 
\newcommand{\softO}[1]{\mathchoice{\tilde{O}\left(#1\right)}{O\tilde{~}(#1)}{O\tilde{~}(#1)}{O\tilde{~}(#1)}} 
\newcommand{\expmm}{\omega} 
\newcommand{\ZZ}{\mathbb{Z}} 
\newcommand{\NN}{\mathbb{Z}_{\ge 0}} 
\newcommand{\ZZp}{\mathbb{Z}_{> 0}} 
\newcommand{\card}[1]{\##1}  
\newcommand{\var}{x} 
\newcommand{\field}{\mathbb{K}} 
\newcommand{\polRing}{\field[\var]} 
\newcommand{\matRing}[2]{\field^{#1 \times #2}} 
\newcommand{\pmatRing}[2]{\polRing^{#1 \times #2}} 
\newcommand{\rdim}{m} 
\newcommand{\cdim}{n} 
\newcommand{\rk}{r} 
\newcommand{\rp}{j} 
\newcommand{\rps}{J} 
\newcommand{\stind}{\theta} 
\newcommand{\row}[1]{\mathbf{#1}} 
\newcommand{\col}[1]{\mathbf{#1}} 
\newcommand{\mat}[1]{\mathbf{#1}} 
\newcommand{\matrow}[2]{{#1}_{#2,*}} 
\newcommand{\matrows}[2]{{#1}_{#2,*}} 
\newcommand{\matcol}[2]{{#1}_{*,#2}} 
\newcommand{\matcols}[2]{{#1}_{*,#2}} 
\newcommand{\matsub}[3]{{#1}_{#2,#3}} 
\newcommand{\trsp}[1]{#1^\mathsf{T}} 
\newcommand{\diag}[1]{\mathrm{diag}(#1)}  
\newcommand{\xdiag}[1]{\mat{\var}^{#1\,}} 
\newcommand{\idMat}[1]{\mat{I}_{#1}} 
\newcommand{\matz}{\mat{0}}  
\newcommand{\anyMat}{\boldsymbol{\ast}}  
\newcommand{\rank}[1]{\mathrm{rank}(#1)}
\newcommand{\tuple}[1]{\boldsymbol{#1}}  
\newcommand{\sumTuple}[1]{|#1|} 
\newcommand{\subTuple}[2]{{#1}_{#2}} 
\newcommand{\rdeg}[2][]{\mathrm{rdeg}_{{#1}}(#2)} 
\newcommand{\cdeg}[2][]{\mathrm{cdeg}_{{#1}}(#2)} 
\newcommand{\lmat}[2][]{\mathrm{lm}_{#1}(#2)} 
\newcommand{\shiftz}{\mathbf{0}} 
\newcommand{\shifts}{\tuple{s}} 
\newcommand{\shiftt}{\tuple{t}} 
\newcommand{\shiftr}{\tuple{\hat{s}}} 
\newcommand{\cdegs}{\tuple{d}} 
\newcommand{\pivDeg}{\delta} 
\newcommand{\pivDegs}{\boldsymbol{\pivDeg}} 
\newcommand{\pivInd}{\pi} 
\newcommand{\pivInds}{\boldsymbol{\pi}} 
\newcommand{\nonPivInds}{\boldsymbol{\pi}^c} 
\newcommand{\dd}{D} 
\newcommand{\modRel}[2]{\operatorname{\mathcal{R}}_{#1}(#2)} 
\newcommand{\modKer}[1]{\operatorname{\mathcal{K}}(#1)} 
\newcommand{\order}{\tau} 
\newcommand{\pt}{\alpha} 
\newcommand{\sys}{\mat{F}} 
\newcommand{\kbas}{\mat{K}} 
\newcommand{\kvec}{\row{p}} 
\newcommand{\abas}{\mat{A}} 
\newcommand{\mmat}{\mat{M}} 
\newcommand{\pmat}{\mat{P}} 
\keywords{Polynomial matrix; kernel basis; rank profile; complexity.}
\begin{document}
\fancyhead{}

\title[Rank-Sensitive Computation of the Rank Profile of a Polynomial Matrix]
{Rank-Sensitive Computation of the Rank Profile \texorpdfstring{\\}{} of a Polynomial Matrix}

\author{George Labahn}
\affiliation{%
  \institution{Cheriton School of Computer Science}%
  \city{University of Waterloo}%
  \country{Ontario, Canada} }

\author{Vincent Neiger}
\affiliation{%
	\institution{Sorbonne Universit\'e, CNRS, LIP6}
	\city{F-75005 Paris}
	\postcode{75252}\country{France}}

\author{Thi Xuan Vu}
\affiliation{%
  \institution{Department of Mathematics and Statistics}
  \city{UiT, The Arctic University of Norway, Troms\o{}} 
  \postcode{N-9037}\country{Norway}
}

\author{Wei Zhou}
\affiliation{%
  \institution{Cheriton School of Computer Science}%
  \city{University of Waterloo}%
  \country{Ontario, Canada} }

\begin{abstract}
Consider a matrix $\mathbf{F} \in \mathbb{K}[x]^{m \times n}$ of univariate
polynomials over a field~$\mathbb{K}$. We study the problem of computing the
column rank profile of $\mathbf{F}$. To this end we first give an algorithm
which improves the minimal kernel basis algorithm of Zhou, Labahn, and
Storjohann (Proceedings ISSAC 2012). We then provide a second algorithm which
computes the column rank profile of $\mathbf{F}$ with a rank-sensitive
complexity of $O\tilde{~}(r^{\omega-2} n (m+D))$ operations in $\mathbb{K}$.
Here, $D$ is the sum of row degrees of $\mathbf{F}$, $\omega$ is the exponent of
matrix multiplication, and $O\tilde{~}(\cdot)$ hides logarithmic factors. 
\end{abstract}

\maketitle

\section{Introduction}
\label{sec:intro}

In this paper, we consider the computation of rank properties of a univariate
polynomial matrix \(\sys \in \pmatRing{\rdim}{\cdim}\) over some base field
\(\field\). The rank of \(\sys\) can be determined by computing a basis for the
left (or for the right) kernel of \(\sys\). Under the assumption $\rdim \ge
\cdim$ (which implicitly requires the input matrix to have full rank, see
\cref{sec:kbas_rkprof}), an algorithm due to Zhou, Labahn, and Storjohann
\cite{ZhLaSt12} computes a minimal basis for the left kernel of $\sys$ using
$\softO{\rdim^{\expmm} \lceil \rho/\cdim \rceil}$ operations in \(\field\),
where \(\rho\) is the sum of the \(\cdim\) largest row degrees of \(\sys\).
In this cost bound, $\expmm$ is the exponent of matrix multiplication, and
\(\softO{\cdot}\) is $\bigO{\cdot}$ but ignoring logarithmic factors. A natural
alternative is to compute a basis for the row space of $\sys$, called a row
basis (or, similarly, a column basis).  However, the fastest known row basis
algorithm \cite{ZhoLab13} starts by computing a basis of the left kernel of
\(\sys\), so one may as well get the rank directly from the latter.

Currently the best known cost bound for computing the rank of \(\sys\) only
depends on the matrix dimensions $\rdim$ and $\cdim$, and is not influenced by
the rank $\rk$. More generally, the fastest known algorithms for basic
computations with univariate polynomial matrices are not rank-sensitive. This
is a significant drawback for the manipulation of matrices whose rank is
unknown, and possibly low a priori.

Furthermore, there are specific situations where rank deficiency is actually
expected by design, and one would like to take advantage of this in algorithms.
Recently, in a context of computing generators of linearly recurrent sequences,
minimal approximant bases of rank-deficient, structured matrices \(\sys\) have
been encountered \cite[Sec.\,5]{HyunNeigerSchost2021}. It has also been
observed that, for the computation of the Hermite normal form of \(\sys\),
finding the \emph{(column) rank profile} of \(\sys\) provides a direct
reduction to the case of a square, nonsingular matrix \cite{NeiRosSol18}, for
which fast methods are known \cite{LaNeZh17}. A third situation occurs in
verification protocols for polynomial matrices: most protocols proposed in
\cite{LucasNeigerPernetRocheRosenkilde2021} rely, directly or indirectly, on
one for certifying ``\(\rank{\sys} \ge \gamma\)''
\cite[Prot.\,3]{LucasNeigerPernetRocheRosenkilde2021}, itself asking the Prover
to locate a square, nonsingular submatrix of \(\sys\) which has rank at least
\(\gamma\).

In fast \(\field\)-linear algebra, rank-sensitive algorithms and complexity
bounds have proved highly valuable. For example, rank-sensitive Gaussian
elimination costs \(\bigO{\rk^{\expmm-2} \rdim \cdim}\) operations in
\(\field\) \cite{StorjohannMulders1998,Storjohann00,Jeannerod2006}, for an
input (constant) matrix \(\mat{A} \in \matRing{\rdim}{\cdim}\) of rank \(\rk\).
More recently, research on this topic has led to improvements of both
complexity bounds and software implementations, and has also provided deep
insight into the rank-related properties that are revealed depending on the
chosen elimination strategies
\cite{JeannerodPernetStorjohann2013,DumasPernetSultan2017}. For finding the
rank or rank profile and for solving linear systems,
\cite{CheungKwokLau2013,StorjohannYang2015} report on running times as low as
\((\rk^\expmm + \rdim + \cdim + |\mat{A}|)^{1+o(1)}\), with \(|\mat{A}|\) the
number of nonzero entries of \(\mat{A}\). Now, for univariate polynomial
matrices, despite the impact this would have on many computations, there is
still an important lack of efficient rank-sensitive methods which would
incorporate both fast linear algebra techniques and fast univariate polynomial
multiplication.

One possibility is to make use of classical algorithms such as fraction-free
Gaussian elimination (see \cite{GeddesCzaporLabahn92}) while also keeping track
of row or column operations to obtain a kernel basis and rank profile. The cost
of such algorithms depends on \(\rdim\), \(\cdim\), \(\rk\) and the degree of
matrices but does not involve the exponent of matrix multiplication \(\expmm\).
This is also the case for the algorithm of Mulders and Storjohann
\cite{MulSto03} which transforms \(\sys\) to weak Popov form and computes the
rank profile with a cost of \(\bigO{\rk \rdim \cdim  \deg(\sys)^2}\). Storjohann
\cite[Chap.\,2]{Storjohann00} gives a recursive version of fraction-free
Gaussian elimination which computes a kernel basis and rank profile of \(\sys\)
having complexity of \(\softO{\rk^{\expmm-1} \rdim \cdim \deg(\sys)^2}\)
operations in \(\field\). Storjohann and Villard \cite{StoVil05} later gave a
Las Vegas randomized algorithm which computed the rank and kernel basis of a
polynomial matrix with complexity \(\softO{\rk^{\expmm-2} \rdim \cdim
\deg(\sys)}\).

The main contribution of this paper is a column rank profile algorithm with a
rank-sensitive cost of \(\softO{\rk^{\expmm-2} \cdim (\rdim + \dd)}\)
operations in \(\field\). Here \(\dd\) is the sum of the row degrees of
\(\sys\), with \(\dd \le \rdim\deg(\sys)\). This is a follow-up and improvement
to the algorithm given in the PhD thesis of Zhou \cite[Sec.\,11]{Zhou12}.

We first revisit \cite[Algo.\,11.1]{Zhou12}, to augment the minimal kernel
basis algorithm of \cite{ZhLaSt12} so that it also determines the column rank
profile of the input matrix. How the variant here improves upon those in the
last two references is explained at the beginning of \cref{sec:kbas_rkprof}. In
particular, within the same complexity bound, the new version supports
arbitrary dimensions \(\rdim, \cdim\) and rank \(\rk\) of \(\sys\), which is
essential for our purpose. This algorithm is not rank-sensitive: it has a cost
of \(\softO{\rdim^{\expmm-2} (\rdim+\cdim) (\rdim+\dd)}\) operations in
\(\field\).

We then give a rank-sensitive column rank profile algorithm, which uses the
above kernel basis algorithm as its main subroutine. A sketch of a similar
result has been given before in \cite[Sec.\,11.2]{Zhou12}, where the approach
is to incorporate the \emph{columns} gradually, always maintaining a number of
columns which is bounded by the rank. At each step the above kernel basis
procedure is called to obtain a partial column rank profile and discard rows
that are \(\polRing\)-linearly dependent. At each step as well, to keep control
of the cost of this kernel computation, a row basis computation is applied
beforehand to reduce to a matrix having full row rank.

Here, we follow another path, by incorporating \emph{rows} gradually. This
allows us to benefit from the fact that the kernel procedure has quasi-linear
cost with respect to the column dimension \(\cdim\), without having to resort
to row basis computations. To enable proceeding row-wise, we exploit a property
of kernel bases in so-called weak Popov form, showing that they give direct
access to a set of linearly independent rows of the input in addition to its
column rank profile. Once all rows of \(\sys\) have been processed and a set of
\(\rk\) linearly independent rows of \(\sys\) has been found, the column rank
profile of \(\sys\) can be extracted efficiently again through the kernel
algorithm.

\emph{Outline.} In \cref{sec:preliminaries}, we give the basic definitions and
properties of our building blocks for polynomial matrix arithmetic including
kernel bases, pivot profiles and rank profiles, and weak Popov forms.
\cref{sec:rank_properties} introduces specific rank profile and kernel
properties used in our algorithms.  \cref{sec:kbas_rkprof} describes our
algorithm for computing the rank profile and kernel basis, while
\cref{sec:rkprof_indeprows} presents our algorithm for the rank-sensitive
computation of the rank profile. The paper ends with topics for future
research, and contains as an appendix an illustration of our examples through
SageMath code.

\section{Preliminaries}
\label{sec:preliminaries}

In this section we describe the notations used in this paper, and then give the
basic definitions and a number of properties of polynomial matrices including
\emph{shifted degrees} and \emph{pivot profiles}, \emph{relation bases} and
{\em kernel bases}, \emph{reduced forms} and \emph{weak Popov forms}.

\subsection{Notation}

We let $\polRing$ denote a univariate polynomial
ring over a field $\field$ with  $\pmatRing{\rdim}{\cdim}$ being the
set of ${\rdim} \times {\cdim}$ univariate polynomial matrices.
For $\sys \in \pmatRing{\rdim}{\cdim}$ and subsets $I$ of $(1, \ldots, \rdim)$
and $J$ of $(1, \dots, \cdim)$, we write \(\matsub{\sys}{I}{J}\) for the
submatrix of $\sys$ obtained by selecting rows indexed by $I$ and columns
indexed by $J$. We let  $\matrows{\sys}{I} = \matsub{\sys}{I}{\{1..n\}}$ denote
the submatrix of $\sys$ obtained by selecting the rows indexed by $I$ and
keeping all columns and $\matcols{\sys}{J} = \matsub{\sys}{\{1..m\}}{J}$ for
the submatrix of $\sys$ obtained by keeping all rows and selecting columns
indexed by $J$.

For a tuple of integers \(\shifts = (s_1,\ldots,s_\rdim) \in\ZZ^\rdim\), the
sum of its entries is denoted by \(\sumTuple{\shifts} = s_1+\cdots+s_\rdim\).
When this concerns an input shift \(\shifts\), we will often write \(\dd\) for
this quantity, i.e.~\(\dd = \sumTuple{\shifts}\).

\subsection{Kernel, row space, modules of relations}
\label{sec:preliminaries:notation}

For a matrix $\sys$ in $\pmatRing{\rdim}{\cdim}$ of rank $r$, the set 
\[
  \modKer{\sys} := \left\{ \row{p} \in \pmatRing{1}{\rdim} \mid \row{p} \sys = \matz \right\}
\]
is a $\polRing$-module of rank $\rdim-\rk$ and is called the (left)
\emph{kernel} of $\sys$. The \emph{row space} of \(\sys\) is the module
\[
  \{\row{p} \sys \mid \row{p} \in \pmatRing{1}{\rdim}\} \subseteq \pmatRing{1}{\cdim}.
\]
A basis for one of these modules (a kernel basis or a row basis) is typically
organized into a single polynomial matrix, for example, a basis of
\(\modKer{\sys}\) being represented by a full rank matrix \(\mat{K} \in
\pmatRing{(\rdim-\rk)}{\rdim}\). Also, for a nonsingular matrix $\mmat$ in
$\pmatRing{\cdim}{\cdim}$,
\[
  \modRel{\mmat}{\sys} :=  \left\{ \row{p} \in \pmatRing{1}{\rdim} \mid \row{p} \sys = \row{0} \bmod \mmat \right\}
\]
is a $\polRing$-module of rank $m$, called the (left) \emph{relation module}
of $\sys$ modulo $\mmat$. Here, the notation $\mat{A} = \matz \bmod \mmat$
means that $\mat{A} = \mat{Q}\mmat$ for some matrix $\mat{Q}$. 

Important particular cases are the relations of \emph{approximation} and those
of \emph{interpolation} \cite{Beckermann92,BarBul92,BecLab94}. For the latter,
\(\mmat = \diag{M_1,\ldots,M_\cdim}\) with \(M_k = (\var - \pt_{k,1}) \cdots
(\var-\pt_{k,\order_k})\) for some \(\order_k \in \ZZp\) and \(1 \le k \le
\cdim\), where the \(\pt_{k,j}\)'s are known elements from \(\field\).
Approximation is when these elements are zero: \(\mmat =
\diag{\var^{\order_1},\ldots,\var^{\order_\cdim}}\), so that working mod
\(\mmat\) amounts to truncating the column \(j\) modulo \(\var^{\order_j}\).

The notions of right kernel, column space, column bases, and right relations
are of course defined similarly.

\subsection{Shifted degrees, leading matrix}
\label{sec:preliminaries:sdegrees}

For a row vector \(\row{p} = [p_1 \;\cdots\; p_\cdim]\) in
\(\pmatRing{1}{\cdim}\), its \emph{degree} is \[\rdeg{\row{p}} = \max_{1 \le i \le \cdim} \deg(p_i)\]
that is, the largest degree of all its entries. Here we take the
convention that the degree of a zero polynomial or zero row is \(-\infty\). In many cases it is useful to shift (or re-weigh) the degrees.
Given a shift \(\shifts = (s_1, \dots, s_\cdim) \in
\ZZ^\cdim\), the \emph{\(\shifts\)-degree} of \(\row{p}\) is defined as
\[\rdeg[\shifts]{\row{p}} = \max_{1 \le i \le \cdim} (s_i+\deg(p_i)).\] Note that this is
equal to \(\rdeg[]{\row{p}\xdiag{\shifts}}\), where \(\xdiag{\shifts}\) is the
diagonal matrix with diagonal entries \(x^{s_1}, \ldots, x^{s_\cdim}\).

For a matrix $\pmat \in \pmatRing{\rdim}{\cdim}$ and a shift $\shifts \in
\ZZ^\cdim$, the row degree $\rdeg[]{\pmat}$ of $\pmat$ is the list of the
degrees of its rows, and similarly the \emph{$\shifts$-row degree}
$\rdeg[\shifts]{\pmat}$ is the list of the \(\shifts\)-degrees of its rows.
Then, the \emph{$\shifts$-leading matrix} $\lmat[\shifts]{\pmat}$ of $\pmat$ is
the matrix in \(\matRing{\rdim}{\cdim}\) formed by the coefficients of degree
zero of $\xdiag{-\shiftt}\pmat \xdiag{\shifts}$, where $\shiftt =
\rdeg[\shifts]{\pmat}$. By convention, a zero row in \(\pmat\) yields a zero
row in \(\lmat[\shifts]{\pmat}\). 

\subsection{Pivot and rank profiles}
\label{sec:preliminaries:hnf_rrp}

If the row vector \(\row{p}\) is nonzero, the \emph{$\shifts$-pivot index} of
$\row{p}$ is the largest index $\pivInd$ such that $\deg(p_\pivInd) +
s_\pivInd = \rdeg[\shifts]{\row{p}}$. In this case $p_\pivInd$ and
$\deg(p_\pivInd)$ are the \emph{$\shifts$-pivot entry} and the
\emph{$\shifts$-pivot degree} of $\row{p}$. Note that \(\pivInd\) is also the
index of the rightmost nonzero entry in \(\lmat[\shifts]{\row{p}}\).

The pair \((\pivInds,\pivDegs) = (\pivInd_i,\pivDeg_i)_{1\le i\le\rdim}\) where
\(\pivInds = (\pivInd_i)_{1\le i\le\rdim}\) and \(\pivDegs = (\pivDeg_i)_{1\le
i\le\rdim}\) are the \(\shifts\)-pivot index and degree for each row of the
matrix \(\mat{P}\), is called the \emph{\(\shifts\)-pivot profile} of $\pmat$.
Observe that \(\rdeg[\shifts]{\mat{P}}\) is equal to \((\pivDeg_{i} +
s_{\pivInd_i})_{1\le i\le\rdim}\).

The \emph{(column) rank profile} of \(\mat{P}\) is the lexicographically
minimal list of integers \(J = (j_1,\ldots,j_\rk)\) such that
\(\matcols{\mat{P}}{J}\) has rank \(\rk = \rank{\sys}\). In what follows, rank
profile always means \emph{column} rank profile; otherwise, we will write
explicitly \emph{row} rank profile.

A matrix \(\mat{H} = [h_{i,j}] \in \pmatRing{\rk}{\cdim}\) with \(\rk \le
\cdim\) is in \emph{Hermite normal form}
\cite{Hermite1851,MacDuffee33,Newman72} if there are indices $1 \le j_1 <
\cdots < j_\rk \le \cdim$ such that
\begin{itemize}
  \item[(i)] for $1\le i\le\rk$, $h_{i,j_i} \neq 0$ is monic and $h_{i,j} = 0$ for
    $1 \le j < j_i$;
  \item[(ii)] for $1\le k < i\le\rk$, $\deg(h_{k,j_i})<\deg(h_{i,j_i})$.
\end{itemize}
In this case, $(j_1,\ldots,j_\rk)$ is the rank profile of $\mat{H}$.

The Hermite normal form of $\mat{P} \in \pmatRing{\rdim}{\cdim}$ is its unique
row basis \(\mat{H} \in \pmatRing{\rk}{\cdim}\), with \(\rk=\rank{\mat{P}}\),
which is in Hermite normal form. Then, the rank profile of \(\mat{P}\)
is equal to that of \(\mat{H}\), since \(\mat{U}\mat{P} = [\begin{smallmatrix}
\mat{H} \\ \matz \end{smallmatrix}]\) for some unimodular matrix \(\mat{U} \in
\pmatRing{\rdim}{\rdim}\).

\subsection{Reduced forms, predictable degree}
\label{sec:preliminaries:reduced_forms}

With the above definitions, \(\mat{P}\) is said to be \emph{$\shifts$-row
reduced} if $\lmat[\shifts]{\pmat}$ has full row rank. A core feature of these
matrices is the \emph{predictable degree property}, which says that there
cannot be any cancellation of high-degree terms of the matrix via
\(\polRing\)-linear combinations of the rows (see \citep{Forney75,Kailath80}
for the case \(\shifts=\shiftz\); \citep[Lem.\,3.6]{BeLaVi99} for any
\(\shifts\); and \citep[Thm.\,1.11]{Neiger16b} for a proof of the equivalence
in the next lemma).

\begin{lemma}[Predictable degree]
  \label{lem:predictable_degree}
  Let \(\pmat \in \pmatRing{\rdim}{\cdim}\) have no zero row, let
  \(\shifts\in\ZZ^\cdim\) and \(\shiftt = \rdeg[\shifts]{\pmat}\). Then,
  \(\pmat\) is in \(\shifts\)-reduced form if and only if
  \(\rdeg[\shifts]{\mat{Q}\pmat} = \rdeg[\shiftt]{\mat{Q}}\) for all \(\mat{Q}
  \in \pmatRing{k}{\rdim}\).
\end{lemma}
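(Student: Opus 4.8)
The plan is to prove the nontrivial content by reducing to a single row and identifying the $\shifts$-leading matrix of a product with a product of leading matrices. First note that one direction of the claimed equality is free and uses nothing about reducedness: for any \(\mat{Q}\in\pmatRing{k}{\rdim}\), each row of \(\mat{Q}\pmat\) is a \(\polRing\)-linear combination of the rows of \(\pmat\), and the \(i\)-th row of \(\pmat\) has \(\shifts\)-degree \(t_i = \subTuple{\shiftt}{i}\); writing such a combination with coefficients of degree at most \(e - t_i\) produces \(\shifts\)-degree at most \(e\). Hence \(\rdeg[\shifts]{\mat{Q}\pmat} \le \rdeg[\shiftt]{\mat{Q}}\) always, componentwise on the lists of row degrees. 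So it remains to prove that equality holds for all \(\mat{Q}\) if and only if \(\pmat\) is \(\shifts\)-reduced; and since the \(\ell\)-th entry of each side only depends on the \(\ell\)-th row of \(\mat{Q}\), it suffices to work with a single row vector \(\kvec \in \pmatRing{1}{\rdim}\) (the zero row being trivial). Note also that \(\shiftt\in\ZZ^\rdim\) has only finite entries, since \(\pmat\) has no zero row.

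The key step, which holds \emph{regardless} of whether \(\pmat\) is reduced, is a leading-matrix factorization. Let \(\kvec \neq \matz\) and put \(d = \rdeg[\shiftt]{\kvec}\). Tracking the coefficient of \(x^{d}\) in \(\kvec\pmat\xdiag{\shifts}\) — equivalently, the coefficient of \(x^{d - s_j}\) in the \(j\)-th entry of \(\kvec\pmat\) — one sees that the only contributions to this top degree come from the product of the \(\shiftt\)-leading term of each entry of \(\kvec\) with the \(\shifts\)-leading term of the corresponding entry of \(\pmat\), since \(\deg(p_i)\le d-t_i\) and \(\deg(\pmat_{i,j}) \le t_i - s_j\) for all \(i,j\). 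This yields
\[
  \bigl[\text{coefficient of } x^{d} \text{ in } \kvec\pmat\xdiag{\shifts}\bigr] \;=\; \lmat[\shiftt]{\kvec}\,\lmat[\shifts]{\pmat} \;\in\; \matRing{1}{\cdim},
\]
because the \((i,j)\) entry of \(\lmat[\shifts]{\pmat}\) is precisely the coefficient of \(x^{t_i - s_j}\) in \(\pmat_{i,j}\) (from the definition of the \(\shifts\)-leading matrix with \(\shiftt = \rdeg[\shifts]{\pmat}\)), while the \(i\)-th entry of \(\lmat[\shiftt]{\kvec}\) is the coefficient of \(x^{d - t_i}\) in \(p_i\). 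Consequently \(\rdeg[\shifts]{\kvec\pmat} = d\) exactly when \(\lmat[\shiftt]{\kvec}\,\lmat[\shifts]{\pmat} \neq \matz\), and \(\rdeg[\shifts]{\kvec\pmat} < d\) otherwise.

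Both implications now follow. If \(\pmat\) is \(\shifts\)-reduced then \(\lmat[\shifts]{\pmat}\) has full row rank \(\rdim\); since \(\lmat[\shiftt]{\kvec}\) is a \emph{nonzero} vector in \(\matRing{1}{\rdim}\) whenever \(\kvec \neq \matz\) (it records the coefficients attaining the \(\shiftt\)-degree), the product \(\lmat[\shiftt]{\kvec}\,\lmat[\shifts]{\pmat}\) is nonzero, so \(\rdeg[\shifts]{\kvec\pmat} = \rdeg[\shiftt]{\kvec}\) for each row \(\kvec\), hence \(\rdeg[\shifts]{\mat{Q}\pmat} = \rdeg[\shiftt]{\mat{Q}}\) for all \(\mat{Q}\). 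Conversely, if \(\pmat\) is not \(\shifts\)-reduced, choose a nonzero \(\mu \in \matRing{1}{\rdim}\) with \(\mu\,\lmat[\shifts]{\pmat} = \matz\), set \(d^{*} = \max\{\, t_i : \mu_i \neq 0 \,\}\) (finite, as all \(t_i\) are), and define \(\kvec\) entrywise by \(p_i = \mu_i x^{d^{*} - t_i}\). Then \(\rdeg[\shiftt]{\kvec} = d^{*}\) and \(\lmat[\shiftt]{\kvec} = \mu\), so by the factorization \(\rdeg[\shifts]{\kvec\pmat} < d^{*} = \rdeg[\shiftt]{\kvec}\), contradicting the hypothesis applied to \(\mat{Q} = \kvec\). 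The only point requiring care — and essentially the sole bookkeeping in the proof — is this last construction: the support of \(\mu\) may spread over rows of \(\pmat\) with distinct \(\shifts\)-degrees \(t_i\), and the monomials \(x^{d^{*}-t_i}\) with the choice \(d^{*} = \max\{t_i : \mu_i\neq0\}\) are exactly what aligns all these contributions at the single degree \(d^{*}\), so that the null relation \(\mu\,\lmat[\shifts]{\pmat}=\matz\) forces a genuine degree drop.
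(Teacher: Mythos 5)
Your proof is correct. The paper itself does not prove this lemma (it defers to the cited references \cite{Forney75,Kailath80}, \cite[Lem.\,3.6]{BeLaVi99}, \cite[Thm.\,1.11]{Neiger16b}), and your argument is essentially the standard one found there: the unconditional coefficient identity \(\bigl[x^{d}\bigr]\,\kvec\pmat\xdiag{\shifts} = \lmat[\shiftt]{\kvec}\,\lmat[\shifts]{\pmat}\) at \(d=\rdeg[\shiftt]{\kvec}\) --- which holds whether or not \(\pmat\) is reduced, so there is no circularity with \cref{cor:leading_matrix_product} --- combined with the full row rank of \(\lmat[\shifts]{\pmat}\) for one implication and the explicit monomial witness \(p_i=\mu_i x^{d^{*}-t_i}\) for the converse.
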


\begin{corollary}
  \label{cor:leading_matrix_product}
  Let \(\pmat \in \pmatRing{\rdim}{\cdim}\) and let \(\shifts\in\ZZ^\cdim\) be
  such that \(\pmat\) is \(\shifts\)-reduced. Let \(\shiftt =
  \rdeg[\shifts]{\pmat}\). Then, \(\lmat[\shifts]{\mat{Q}\pmat} =
  \lmat[\shiftt]{\mat{Q}}\lmat[\shifts]{\pmat}\) for any \(\mat{Q} \in
  \pmatRing{k}{\rdim}\).
\end{corollary}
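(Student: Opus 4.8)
The plan is to reduce the claim to the case where \(\mat{Q}\) is a single row, and then to prove that case by inserting a cancelling pair of diagonal matrices of \(\var\)-powers and comparing coefficients of degree \(0\). First I would record a preliminary: since \(\pmat\) is \(\shifts\)-reduced, it has no zero row — a zero row of \(\pmat\) would produce a zero row of \(\lmat[\shifts]{\pmat}\), contradicting the full-row-rank condition — hence each entry of \(\shiftt = \rdeg[\shifts]{\pmat}\) is an integer, and \cref{lem:predictable_degree} applies to \(\pmat\) with this \(\shiftt\).

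Next I would observe that both sides of the asserted identity are computed row by row in \(\mat{Q}\): the \(i\)-th row of \(\lmat[\shifts]{\mat{Q}\pmat}\) depends only on the \(i\)-th row of \(\mat{Q}\), and likewise the \(i\)-th row of \(\lmat[\shiftt]{\mat{Q}}\lmat[\shifts]{\pmat}\) depends only on the \(i\)-th row of \(\lmat[\shiftt]{\mat{Q}}\). So it suffices to prove \(\lmat[\shifts]{\row{q}\pmat} = \lmat[\shiftt]{\row{q}}\lmat[\shifts]{\pmat}\) for a single row \(\row{q} \in \pmatRing{1}{\rdim}\). If \(\row{q} = \matz\) both sides are the zero row; otherwise \cref{lem:predictable_degree} gives \(\rdeg[\shifts]{\row{q}\pmat} = \rdeg[\shiftt]{\row{q}} =: d\), which is an integer because \(\row{q}\) is nonzero and the entries of \(\shiftt\) are integers.

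Then I would write \(\xdiag{\shiftt}\xdiag{-\shiftt} = \idMat{\rdim}\) and use the factorization
\[
  \var^{-d}\,\row{q}\pmat\,\xdiag{\shifts}
  = \bigl(\var^{-d}\,\row{q}\,\xdiag{\shiftt}\bigr)\bigl(\xdiag{-\shiftt}\,\pmat\,\xdiag{\shifts}\bigr).
\]
By the definitions of \(d = \rdeg[\shiftt]{\row{q}}\) and \(\shiftt = \rdeg[\shifts]{\pmat}\), each factor on the right-hand side has all entries of degree at most \(0\), and their degree-\(0\) coefficient matrices are \(\lmat[\shiftt]{\row{q}}\) and \(\lmat[\shifts]{\pmat}\), respectively. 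Since multiplying two matrices whose entries all have degree at most \(0\) yields a matrix whose degree-\(0\) coefficient matrix is the product of the degree-\(0\) coefficient matrices of the factors (the only way to form \(\var^0\) from a product of two terms of degree \(\le 0\) is \(\var^0\cdot\var^0\)), the degree-\(0\) coefficient matrix of the left-hand side is \(\lmat[\shiftt]{\row{q}}\lmat[\shifts]{\pmat}\). On the other hand, since \(d = \rdeg[\shifts]{\row{q}\pmat}\), that degree-\(0\) coefficient matrix is by definition \(\lmat[\shifts]{\row{q}\pmat}\), which completes the proof.

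I do not anticipate a real obstacle: the only point requiring a little care is the bookkeeping with zero rows (ensuring \(\var^{-d}\) and \(\xdiag{-\shiftt}\) are well defined), which is handled by the reduction to a single nonzero row together with \cref{lem:predictable_degree}; after that the argument is a direct comparison of degree-\(0\) coefficients.
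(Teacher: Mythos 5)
Your proof is correct and follows essentially the same route as the paper's: you invoke the predictable degree property to identify \(\rdeg[\shifts]{\mat{Q}\pmat}\) with \(\rdeg[\shiftt]{\mat{Q}}\) and then compare degree-zero coefficients in exactly the factorization \(\xdiag{-\tuple{d}}\mat{Q}\pmat\xdiag{\shifts} = (\xdiag{-\tuple{d}}\mat{Q}\xdiag{\shiftt})(\xdiag{-\shiftt}\pmat\xdiag{\shifts})\) that the paper uses. The only difference is that you spell out the row-by-row reduction and the zero-row bookkeeping, which the paper leaves implicit.
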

\begin{proof}
  Let \(\tuple{d} = \rdeg[\shiftt]{\mat{Q}} \in \ZZ^k\). By the predictable
  degree property, \(\tuple{d} = \rdeg[\shifts]{\mat{Q}\pmat}\). The
  conclusion then follows from the identity
  \[
    \xdiag{-\tuple{d}} \mat{Q} \pmat \xdiag{\shifts}
    =
    (\xdiag{-\tuple{d}} \mat{Q} \xdiag{\shiftt}) (\xdiag{-\shiftt} \pmat \xdiag{\shifts}).
    \qedhere
  \]
\end{proof}

As a consequence, shifted reduced forms are preserved by multiplication,
provided the shifts are appropriately chosen. This result is at the core of
divide and conquer algorithms for bases of relation modules
\citep{BecLab94,GiJeVi03} and kernel bases \citep[Thm.\,3.9]{ZhLaSt12}.

\begin{lemma}
  \label{lem:reduced_product}
  Let \(\pmat \in \pmatRing{\rdim}{\cdim}\) and \(\shifts \in \ZZ^\cdim\) such
  that \(\pmat\) is in \(\shifts\)-reduced form. Let \(\shiftt =
  \rdeg[\shifts]{\pmat} \in \ZZ^\rdim\), and let \(\mat{Q} \in
  \pmatRing{k}{\rdim}\) be in \(\shiftt\)-reduced form. Then, \(\mat{Q} \pmat\)
  is in \(\shifts\)-reduced form.
\end{lemma}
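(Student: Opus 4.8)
The plan is to show directly that $\lmat[\shifts]{\mat{Q}\pmat}$ has full row rank, which is exactly the definition of $\mat{Q}\pmat$ being $\shifts$-reduced. First I would invoke \cref{cor:leading_matrix_product}: since $\pmat$ is $\shifts$-reduced and $\shiftt = \rdeg[\shifts]{\pmat}$, we have the identity
\[
  \lmat[\shifts]{\mat{Q}\pmat} = \lmat[\shiftt]{\mat{Q}}\,\lmat[\shifts]{\pmat}.
\]
This immediately reduces the problem to a statement about constant (over $\field$) matrices: I must argue that the product of the two constant matrices $\lmat[\shiftt]{\mat{Q}}$ and $\lmat[\shifts]{\pmat}$ has full row rank.

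Next I would use the two reducedness hypotheses at the level of leading matrices: $\mat{Q}$ is $\shiftt$-reduced means $\lmat[\shiftt]{\mat{Q}} \in \matRing{k}{\rdim}$ has full row rank $k$, and $\pmat$ is $\shifts$-reduced means $\lmat[\shifts]{\pmat} \in \matRing{\rdim}{\cdim}$ has full row rank $\rdim$. A product $\mat{B}\mat{A}$ of constant matrices with $\mat{B}$ of full row rank and $\mat{A}$ of full row rank need not in general have full row rank, so the subtlety — and what I expect to be the main obstacle — is that one cannot simply cite ``rank of a product''. The structural fact that saves us is that $\lmat[\shifts]{\pmat}$ has \emph{full row rank equal to its number of rows} $\rdim$, i.e.\ its rows are $\field$-linearly independent; hence left-multiplication by $\lmat[\shifts]{\pmat}$ (viewed as the map sending a row vector $\row{v}\in\field^{1\times\rdim}$ to $\row{v}\,\lmat[\shifts]{\pmat}$) is injective on $\field^{1\times\rdim}$. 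Applying this to the rows of $\lmat[\shiftt]{\mat{Q}}$, which are themselves $\field$-linearly independent (full row rank of $\lmat[\shiftt]{\mat{Q}}$), we conclude that the $k$ rows of the product $\lmat[\shiftt]{\mat{Q}}\,\lmat[\shifts]{\pmat}$ are $\field$-linearly independent, so the product has full row rank $k$.

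Combining, $\lmat[\shifts]{\mat{Q}\pmat} = \lmat[\shiftt]{\mat{Q}}\,\lmat[\shifts]{\pmat}$ has full row rank $k$, which by definition means $\mat{Q}\pmat$ is $\shifts$-reduced, as claimed. I should be a little careful about the convention for zero rows: if $\mat{Q}$ has a zero row then $\lmat[\shiftt]{\mat{Q}}$ would too and could not have full row rank, so the hypothesis that $\mat{Q}$ is $\shiftt$-reduced already forbids this; similarly \cref{cor:leading_matrix_product} is stated for arbitrary $\mat{Q}$ but the identity is only needed under the reducedness hypothesis, and $\pmat$ having a zero row is likewise excluded by $\pmat$ being $\shifts$-reduced. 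With those edge cases dispatched, the argument above is complete.
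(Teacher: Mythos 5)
Your proposal is correct and follows essentially the same route as the paper: apply \cref{cor:leading_matrix_product} to write \(\lmat[\shifts]{\mat{Q}\pmat} = \lmat[\shiftt]{\mat{Q}}\lmat[\shifts]{\pmat}\) and observe that the product of the two full-row-rank leading matrices has full row rank. You merely spell out the rank argument (injectivity of right-multiplication by \(\lmat[\shifts]{\pmat}\) on row vectors) that the paper states in one line.
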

\begin{proof}
  The assumptions imply that
  both \(\lmat[\shiftt]{\mat{Q}}\) and \(\lmat[\shifts]{\pmat}\) have full row
  rank.  Thus their product has full row rank as well, and according to
  \cref{cor:leading_matrix_product} this product is
  \(\lmat[\shifts]{\mat{Q}\pmat}\).
\end{proof}

\subsection{Weak Popov forms, predictable pivot}
\label{sec:preliminaries:weak_popov}

A matrix $\pmat = [p_{i,j}] \in \pmatRing{\rk}{\cdim}$ with no zero row is
\emph{$\shifts$-Popov} if
\begin{itemize}
  \item[(i)] its $\shifts$-pivot index \((\pivInd_1,\ldots,\pivInd_\rk)\) is strictly
    increasing;
  \item[(ii)] for \(1\le i \le \rk\), \(p_{i,\pivInd_i}\) is monic;
  \item[(iii)] for each \(k , i \in \{1,\ldots,\rk\}\) with \(k\neq i\),
    \(\deg(p_{k,\pivInd_i}) < \deg(p_{i,\pivInd_i})\).
\end{itemize}

\noindent
If $\pmat$ only satisfies the first condition, it is said to be \emph{$\shifts$-weak
Popov}. Any \(\shifts\)-weak Popov matrix is \(\shifts\)-reduced.
Furthermore, each matrix has a unique row basis in \(\shifts\)-Popov form.

We remark that, for weak Popov forms, it is sometimes only required (see
e.g.~\cite{MulSto03}) that the pivot indices be pairwise distinct, instead of
increasing. Then, the forms with the added requirement of increasing indices
were called ordered weak Popov forms. Here, we will only manipulate ordered
weak Popov forms, and therefore we call them weak Popov forms for ease of
presentation.

The shifted weak Popov form satisfies the following refinement of the
predictable degree property and is also compatible with multiplication under
well-chosen shifts (see \citep[Sec.\,5]{BeLaVi06} for related considerations
and \cite[Lem.\,2.6]{NeigerPernet2021} for a proof of the next lemmas).

\begin{lemma}[Predictable pivot]
  \label{lem:predictable_pivot}
  Let \(\pmat \in \pmatRing{\rdim}{\cdim}\) have no zero row, let
  \(\shifts\in\ZZ^\cdim\) and \(\shiftt = \rdeg[\shifts]{\pmat}\), and let
  \((\pivInd_i,\pivDeg_i)_{1\le i\le\rdim}\) be the \(\shifts\)-pivot profile
  of \(\pmat\). If \(\pmat\) is in \(\shifts\)-weak Popov form, then the
  \(\shifts\)-pivot profile of \(\mat{Q}\pmat\) is
  \((\pivInd_{j_i},\pivDeg_{j_i}+d_i)_{1\le i\le k}\) for all \(\mat{Q} \in
  \pmatRing{k}{\rdim}\), where \((j_i,d_i)_{1\le i\le k}\) is the
  \(\shiftt\)-pivot profile of \(\mat{Q}\).
\end{lemma}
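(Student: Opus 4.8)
The plan is to combine the predictable degree property (\cref{lem:predictable_degree}) with a careful bookkeeping of which entry achieves each row degree. First I would fix \(\mat{Q} \in \pmatRing{k}{\rdim}\) with \(\shiftt\)-pivot profile \((j_i,d_i)_{1\le i\le k}\), and set \(\tuple{d} = \rdeg[\shiftt]{\mat{Q}} \in \ZZ^k\). By definition of the \(\shiftt\)-pivot profile, we have \(d_i = d_{j_i}\) wait — more precisely, the \(i\)-th row \(\row{q}_i\) of \(\mat{Q}\) has \(\shiftt\)-pivot index \(j_i\) and \(\shiftt\)-pivot degree \(d_i\), so that its \(\shiftt\)-degree equals \(d_i + t_{j_i}\), where \(\shiftt = (t_1,\ldots,t_\rdim)\). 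Since \(\pmat\) is \(\shifts\)-weak Popov it is \(\shifts\)-reduced, so \cref{cor:leading_matrix_product} gives \(\lmat[\shifts]{\mat{Q}\pmat} = \lmat[\tuple{d}]{\mat{Q}} \lmat[\shifts]{\pmat}\), where I also use the predictable degree property to know that the \(i\)-th row of \(\mat{Q}\pmat\) has \(\shifts\)-degree exactly \((\tuple{d})_i = d_i + t_{j_i}\) (in particular it is nonzero).

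The heart of the argument is then to read off the pivot index of the \(i\)-th row of \(\mat{Q}\pmat\) from the product of leading matrices. Recall (from the excerpt) that the \(\shifts\)-pivot index of a nonzero row is precisely the index of the rightmost nonzero entry of its \(\shifts\)-leading row. So I must show that the rightmost nonzero entry of the \(i\)-th row of \(\lmat[\tuple{d}]{\mat{Q}} \lmat[\shifts]{\pmat}\) sits in column \(\pivInd_{j_i}\). The key structural fact is that the \(i\)-th row of \(\lmat[\tuple{d}]{\mat{Q}}\), being the \(\shiftt\)-leading row of \(\row{q}_i\), has its rightmost nonzero entry in column \(j_i\); moreover, since \(\pmat\) is in \(\shifts\)-\emph{weak Popov} form, the pivot indices \(\pivInd_1 < \cdots < \pivInd_\rdim\) of the rows of \(\pmat\) are strictly increasing, and the \(\shifts\)-leading matrix \(\lmat[\shifts]{\pmat}\) is, up to column permutation, in row echelon shape: row \(\ell\) of \(\lmat[\shifts]{\pmat}\) has its rightmost nonzero entry in column \(\pivInd_\ell\), and all entries of row \(\ell\) in columns strictly to the right of \(\pivInd_\ell\) vanish. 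Combining these, when we form \(\sum_\ell (\text{entry }(i,\ell)\text{ of }\lmat[\tuple{d}]{\mat{Q}}) \cdot (\text{row }\ell\text{ of }\lmat[\shifts]{\pmat})\), every term with \(\ell > j_i\) contributes zero (that entry of \(\lmat[\tuple{d}]{\mat{Q}}\) is zero), and among the terms with \(\ell \le j_i\), only the term \(\ell = j_i\) can put a nonzero coefficient in column \(\pivInd_{j_i}\) or further right — because for \(\ell < j_i\) we have \(\pivInd_\ell < \pivInd_{j_i}\) so row \(\ell\) of \(\lmat[\shifts]{\pmat}\) is already zero in columns \(\ge \pivInd_{j_i}\) (indeed \(\ge \pivInd_\ell+1\) wait, \(> \pivInd_\ell\)). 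Finally, the \((i,j_i)\) entry of \(\lmat[\tuple{d}]{\mat{Q}}\) is nonzero and the \((j_i,\pivInd_{j_i})\) entry of \(\lmat[\shifts]{\pmat}\) is nonzero (it is the pivot), so their product lands a nonzero scalar in column \(\pivInd_{j_i}\). Hence the rightmost nonzero entry of row \(i\) of the product is exactly in column \(\pivInd_{j_i}\), i.e.~the \(\shifts\)-pivot index of the \(i\)-th row of \(\mat{Q}\pmat\) is \(\pivInd_{j_i}\).

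It remains to extract the \(\shifts\)-pivot degree. The \(\shifts\)-degree of the \(i\)-th row of \(\mat{Q}\pmat\) is \(d_i + t_{j_i}\) by predictable degree; since its pivot index is \(\pivInd_{j_i}\), its \(\shifts\)-pivot degree is this value minus the shift entry \(s_{\pivInd_{j_i}}\), i.e.~\(d_i + t_{j_i} - s_{\pivInd_{j_i}}\). But \(t_{j_i} = \rdeg[\shifts]{\text{row }j_i\text{ of }\pmat}\) and, since \(\pmat\) is weak Popov with pivot index \(\pivInd_{j_i}\) and pivot degree \(\pivDeg_{j_i}\) on that row, \(t_{j_i} = \pivDeg_{j_i} + s_{\pivInd_{j_i}}\). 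Substituting gives pivot degree \(d_i + \pivDeg_{j_i}\), as claimed. I would close by noting the conclusion holds rowwise for each \(i \in \{1,\ldots,k\}\), which yields the stated \(\shifts\)-pivot profile \((\pivInd_{j_i},\pivDeg_{j_i}+d_i)_{1\le i\le k}\) of \(\mat{Q}\pmat\).

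The main obstacle I anticipate is the echelon-shape argument on \(\lmat[\shifts]{\pmat}\): one must be careful that ``weak Popov'' only guarantees the pivot indices are increasing (not a full echelon form, and not that entries to the left of the pivot vanish), but this \emph{is} enough because we only need that entries strictly to the \emph{right} of each pivot index vanish, which is immediate from the definition of the pivot index as the rightmost nonzero entry of the leading row. Getting the indexing of the two nested pivot profiles (that of \(\mat{Q}\) relative to shift \(\shiftt\), that of \(\pmat\) relative to \(\shifts\)) to line up cleanly, and keeping the \(d_i\) versus \((\tuple{d})_i\) distinction straight, is the part that needs the most care but is ultimately bookkeeping.
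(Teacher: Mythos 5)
Your argument is correct, and it is worth noting that the paper itself does not prove this lemma: it defers to \cite[Lem.~2.6]{NeigerPernet2021}, so your write-up supplies a self-contained proof. The route you take (factor the leading matrix via \cref{cor:leading_matrix_product}, observe that the rows of \(\lmat[\shifts]{\pmat}\) have their rightmost nonzero entries in the strictly increasing columns \(\pivInd_1<\cdots<\pivInd_\rdim\), and track which term of the row-by-matrix product can reach column \(\pivInd_{j_i}\) or beyond) is exactly the standard argument, and your closing bookkeeping \(t_{j_i}=\pivDeg_{j_i}+s_{\pivInd_{j_i}}\) correctly yields the pivot degree \(\pivDeg_{j_i}+d_i\). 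The only blemish is notational: you write \(\lmat[\tuple{d}]{\mat{Q}}\) where \(\tuple{d}=\rdeg[\shiftt]{\mat{Q}}\), but in the paper's notation the subscript of \(\mathrm{lm}\) is the \emph{shift}, so the matrix appearing in \cref{cor:leading_matrix_product} is \(\lmat[\shiftt]{\mat{Q}}\); the object you actually describe and use (the \(\shiftt\)-leading matrix of \(\mat{Q}\), whose row \(i\) has rightmost nonzero entry in column \(j_i\)) is the correct one, so this is a typo rather than a gap.
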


\begin{lemma}
  \label{lem:weak_popov_multiplication}
  Let \(\pmat \in \pmatRing{\rdim}{\cdim}\) and \(\shifts \in \ZZ^\cdim\) such
  that \(\pmat\) is in \(\shifts\)-weak Popov form. Let \(\shiftt =
  \rdeg[\shifts]{\pmat} \in \ZZ^\rdim\), and let \(\mat{Q} \in
  \pmatRing{k}{\rdim}\) be in \(\shiftt\)-weak Popov form. Then, \(\mat{Q}
  \pmat\) is in \(\shifts\)-weak Popov form.
\end{lemma}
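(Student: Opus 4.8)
The plan is to build on the already-established Lemma~\ref{lem:predictable_pivot} (predictable pivot) and the structure of the proof of Lemma~\ref{lem:reduced_product}. First I would record what needs to be shown: by definition, $\mat{Q}\pmat$ is in $\shifts$-weak Popov form precisely when it has no zero row and its $\shifts$-pivot index tuple is strictly increasing. Since $\mat{Q}$ is $\shiftt$-weak Popov it has no zero row, and since $\pmat$ is $\shifts$-weak Popov (hence $\shifts$-reduced by the remark in \cref{sec:preliminaries:weak_popov}) the predictable degree property applies, so $\mat{Q}\pmat$ has no zero row either; alternatively this also follows from the pivot-profile description below, since each row of $\mat{Q}\pmat$ inherits a well-defined pivot. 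Then the whole content is the strict monotonicity of the pivot indices.

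Second, I would invoke Lemma~\ref{lem:predictable_pivot} with $k=\rdim$ (here $\mat{Q}\in\pmatRing{k}{\rdim}$ is the multiplier and $\pmat$ plays the role of the reduced matrix): letting $(\pivInd_i,\pivDeg_i)_{1\le i\le\rdim}$ be the $\shifts$-pivot profile of $\pmat$ and $(j_i,d_i)_{1\le i\le k}$ the $\shiftt$-pivot profile of $\mat{Q}$, the $\shifts$-pivot index tuple of $\mat{Q}\pmat$ is exactly $(\pivInd_{j_1},\ldots,\pivInd_{j_k})$. Because $\pmat$ is in $\shifts$-weak Popov form, the tuple $(\pivInd_1,\ldots,\pivInd_\rdim)$ is strictly increasing, so the map $i\mapsto\pivInd_i$ is strictly order-preserving on $\{1,\ldots,\rdim\}$. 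Because $\mat{Q}$ is in $\shiftt$-weak Popov form, the tuple $(j_1,\ldots,j_k)$ is strictly increasing. Composing these two strictly increasing maps, $(\pivInd_{j_1},\ldots,\pivInd_{j_k})$ is strictly increasing, which is exactly condition~(i) of the weak Popov definition for $\mat{Q}\pmat$. This completes the argument.

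I do not expect a serious obstacle here: the lemma is essentially a corollary of the predictable pivot property, in the same way that Lemma~\ref{lem:reduced_product} is a corollary of Corollary~\ref{cor:leading_matrix_product}. The only point requiring a little care is bookkeeping of which matrix is the ``base'' and which is the ``multiplier'' in the statement of Lemma~\ref{lem:predictable_pivot} — the $\shifts$-weak Popov matrix $\pmat$ must be the right factor whose pivot profile indexes the columns, and $\mat{Q}$ the left factor whose $\shiftt$-pivot profile selects and shifts those indices — and making sure the hypothesis $\shiftt=\rdeg[\shifts]{\pmat}$ is precisely what lets us feed $\mat{Q}$'s $\shiftt$-pivot profile into that lemma. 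Once that alignment is in place, the proof is a two-line composition-of-monotone-maps argument, and I would present it as such.
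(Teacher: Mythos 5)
Your proposal is correct and follows essentially the same route as the paper's proof: apply \cref{lem:predictable_pivot} to identify the $\shifts$-pivot index of $\mat{Q}\pmat$ as the subtuple $(\pivInd_{j_i})_{1\le i\le k}$, then note that the composition of two strictly increasing index sequences is strictly increasing. (The stray ``with $k=\rdim$'' is an inconsequential slip; your subsequent bookkeeping of which matrix is the multiplier is exactly right.)
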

\begin{proof}
  By assumption, the \(\shifts\)-pivot index \((\pivInd_i)_{1\le i\le\rdim}\)
  of \(\pmat\) and the \(\shiftt\)-pivot index \((j_i)_{1\le i\le k}\) of
  \(\mat{Q}\) are both strictly increasing. Then, by
  \cref{lem:predictable_pivot}, the \(\shifts\)-pivot index of \(\mat{Q}\pmat\)
  is the subtuple \((\pivInd_{j_i})_{1\le i\le k}\), which is strictly
  increasing.  Hence \(\mat{Q}\pmat\) is in \(\shifts\)-weak Popov form.
\end{proof}

Note however that a similar product of shifted Popov forms does not yield a
shifted Popov form, but only a shifted \emph{weak} Popov form.

\subsection{Example}

We will use the following as a running example in this paper.

\begin{example}
  \label{example:main}
  Working over \(\field=\mathbb{F}_2\), let $\sys \in \pmatRing{5}{5}$ be given by
\[
  \begin{bmatrix}
    {x}^{2}&{x}^{3}+1&{x}^{8}+{x}^{6}+{x}^{4
    }+{x}^{3}+{x}^{2}+x&{x}^{4}+1&{x}^{3}+1\\ 
    0&{x}^{4}+ 1&{x}^{5}+{x}^{4}+{x}^{3}+{x}^{2}&x+1&{x}^{2}+1\\
    0& {x}^{2}+1&x+1&0&1\\ 0&0&{x}^{8}+1&{x}^{4}+1&0 \\ 
    0&0&{x}^{4}+1&1&0
  \end{bmatrix}
\]
Then the matrix 
\[
  \begin{bmatrix}
      0&1&{x}^{2}+1&0&x+1\\ 
    0 &1&{x}^{2}+1&1&{x}^{4}+x
  \end{bmatrix}
  \in \pmatRing{2}{5}
\]
is a weak Popov basis of $\modKer{\sys}$ (which is not in Popov form). It has
pivot index $\pivInds = (3, 5)$ and pivot degree $\pivDegs =
 (2,4)$. Here is now an
\(\shifts\)-weak Popov basis of  \(\modKer{\sys}\) for the shift
\(\shifts = \rdeg{\sys} = (8,5,2,8,4)\): 
\[
  \kbas =
  \begin{bmatrix}
    0  & x^3 & x^5 + x^3 & 1 & x^3 + 1 \\
    0  &   1 &   x^2 + 1 & 0 &   x + 1
  \end{bmatrix}
  \in \pmatRing{2}{5}.
\]
Its \(\shifts\)-pivot index is \((4,5)\) and its $\shifts$-pivot degree is
\((0, 1)\).
\qed
\end{example}

The above kernel bases were computed using the SageMath software, as described
in \cref{fig:sage_code} on \cpageref{fig:sage_code}.

\section{Rank and degree properties related to kernel bases}
\label{sec:rank_properties}

In this section we discuss rank and degree properties related to kernel bases
and which are central for the correctness and complexity of the algorithms in
\cref{sec:kbas_rkprof,sec:rkprof_indeprows}.

\begin{lemma}
  \label{lem:ker_submatrix}
  Let \(\sys \in \pmatRing{\rdim}{\cdim}\) have rank \(\rk\). For any \(\mat{V}
  \in \pmatRing{\cdim}{k}\) such that \(\sys\mat{V}\) has rank \(\rk\), we have
  \(\modKer{\sys} = \modKer{\sys\mat{V}}\). As a corollary, if \(\rps \subseteq
  \{1,\ldots,\cdim\}\) is such that \(\matcol{\sys}{\rps}\) has rank \(\rk\),
  then \(\modKer{\sys} = \modKer{\matcol{\sys}{\rps}}\).
\end{lemma}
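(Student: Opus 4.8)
The plan is to establish the two containments $\modKer{\sys\mat{V}} \supseteq \modKer{\sys}$ and $\modKer{\sys\mat{V}} \subseteq \modKer{\sys}$ separately, and then derive the corollary by taking $\mat{V}$ to be the appropriate column-selection matrix. The first containment is immediate: if $\row{p}\sys = \matz$ for some $\row{p} \in \pmatRing{1}{\rdim}$, then $\row{p}(\sys\mat{V}) = (\row{p}\sys)\mat{V} = \matz$, so $\row{p} \in \modKer{\sys\mat{V}}$. This uses nothing about the rank hypothesis. The content is all in the reverse containment.

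For the reverse containment, suppose $\row{p} \in \pmatRing{1}{\rdim}$ satisfies $\row{p}\sys\mat{V} = \matz$; I want to conclude $\row{p}\sys = \matz$. The key point is a rank/dimension count over the fraction field $\field(\var)$. Since $\sys$ has rank $\rk$ and $\sys\mat{V}$ has rank $\rk$ as well, and $\sys\mat{V}$ factors through $\sys$, the column space of $\sys\mat{V}$ (as a $\field(\var)$-subspace of $\field(\var)^{\rdim}$) is contained in the column space of $\sys$ and has the same dimension $\rk$, hence the two column spaces coincide. Now $\row{p}\sys\mat{V} = \matz$ says $\row{p}$ annihilates every column of $\sys\mat{V}$, hence $\row{p}$ annihilates the whole column space of $\sys\mat{V}$, which is the column space of $\sys$; therefore $\row{p}$ annihilates every column of $\sys$, i.e.\ $\row{p}\sys = \matz$. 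All of this is linear algebra over the field $\field(\var)$; the equalities $\row{p}\sys = \matz$ and $\row{p}\sys\mat{V} = \matz$ are unaffected by passing between $\polRing$ and $\field(\var)$ since $\polRing \subseteq \field(\var)$ and a polynomial row vector is zero iff it is zero over $\field(\var)$.

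The corollary follows by specialization: given $\rps = (j_1,\ldots,j_\ell) \subseteq \{1,\ldots,\cdim\}$ with $\matcol{\sys}{\rps}$ of rank $\rk$, let $\mat{V} \in \pmatRing{\cdim}{\ell}$ be the matrix whose $i$-th column is the $j_i$-th standard basis vector, so that $\sys\mat{V} = \matcol{\sys}{\rps}$ has rank $\rk$ by hypothesis. Applying the first part gives $\modKer{\sys} = \modKer{\matcol{\sys}{\rps}}$.

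I do not anticipate a genuine obstacle here; the only thing to be a little careful about is phrasing the column-space argument correctly, namely that $\mathrm{colspace}(\sys\mat{V}) \subseteq \mathrm{colspace}(\sys)$ combined with equality of dimensions forces equality of the two subspaces, and that "$\row{p}$ annihilates a spanning set" is the same as "$\row{p}$ annihilates the span." One should also note explicitly that $\rk \le \min(\rdim,\cdim)$ and $k$ plays no special role beyond $\sys\mat{V}$ being well-defined, so the statement is not vacuous even when $k < \rk$ is impossible (indeed $\sys\mat{V}$ having rank $\rk$ forces $k \ge \rk$, but we do not need this).
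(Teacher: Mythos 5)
Your proof is correct, and it takes a genuinely different route from the paper's. The paper argues at the level of modules: it notes the easy inclusion \(\modKer{\sys} \subseteq \modKer{\sys\mat{V}}\), observes that both kernels have rank \(\rdim-\rk\), writes a basis \(\kbas_1\) of the smaller module as \(\mat{U}\kbas_2\) for a basis \(\kbas_2\) of the larger one, and then invokes the external fact that kernel bases have unimodular column bases (citing Zhou--Labahn 2014) to conclude that \(\mat{U}\) is unimodular and hence that the modules coincide. That last step is essential in the paper's argument, because an inclusion of \(\polRing\)-modules of equal rank does not by itself force equality (e.g.\ \(x\polRing \subsetneq \polRing\)); saturation of kernels is what rules this out. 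Your argument sidesteps the module-theoretic subtlety entirely: you prove the reverse \emph{set} containment directly by passing to the fraction field \(\field(\var)\), noting that the column space of \(\sys\mat{V}\) sits inside that of \(\sys\) with equal dimension \(\rk\), hence equals it, so any polynomial row vector annihilating \(\sys\mat{V}\) annihilates \(\sys\). This is more elementary and self-contained (no appeal to unimodular completability of kernel bases), at the cost of being specific to kernels of matrices rather than illustrating the general mechanism of comparing bases of saturated submodules. Your handling of the corollary matches the paper's.
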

\begin{proof}
  The second statement follows from the first one, by building \(\mat{V}\) from
  the columns of the identity matrix \(\idMat{\cdim}\) with index in \(\rps\).

  Concerning the first statement, the rank assumption implies that the left
  kernels \(\modKer{\sys}\) and \(\modKer{\sys\mat{V}}\) have the same rank
  \(\rdim-\rk\). Let \(\kbas_1 \in \pmatRing{(\rdim-\rk)}{\rdim}\) and
  \(\kbas_2 \in \pmatRing{(\rdim-\rk)}{\rdim}\) be bases of \(\modKer{\sys}\)
  and \(\modKer{\sys\mat{V}}\), respectively. It is clear that \(\modKer{\sys}
  \subseteq \modKer{\sys\mat{V}}\), hence \(\kbas_1 = \mat{U}\kbas_2\) for some
  nonsingular \(\mat{U} \in \pmatRing{(\rdim-\rk)}{(\rdim-\rk)}\). The fact
  that kernel bases have unimodular column bases \cite{ZhoLab14} ensures that
  \(\mat{U}\) is unimodular, and thus \(\modKer{\sys} =
  \modKer{\sys\mat{V}}\).
\end{proof}

\begin{lemma}
  \label{lem:ker_crp_equal}
  If \(\sys \in \pmatRing{\rdim}{\cdim}\) and \(\mat{G} \in
  \pmatRing{\ell}{\cdim}\) are two matrices which have the same right kernel,
  then \(\sys\) and \(\mat{G}\) have the same column rank profile. As a
  corollary, if \(I \subseteq \{1,\ldots,\rdim\}\) is such that
  \(\matrow{\sys}{I}\) has the same rank as \(\sys\), then \(\matrow{\sys}{I}\)
  has the same rank profile as \(\sys\).
\end{lemma}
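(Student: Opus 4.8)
The plan is to reduce the statement to a claim about the $\field(\var)$-vector-space right kernel of a polynomial matrix. The key observation is that the column rank profile of any matrix $M$ with $\cdim$ columns is determined entirely by the ranks of its column submatrices $\matcol{M}{S}$, $S \subseteq \{1,\ldots,\cdim\}$: writing $\rk = \rank{\matcol{M}{\{1,\ldots,\cdim\}}}$, the rank profile $(j_1,\ldots,j_\rk)$ is obtained greedily, $j_{i+1}$ being the least index $j > j_i$ with $\rank{\matcol{M}{\{j_1,\ldots,j_i,j\}}} = i+1$. So it suffices to prove that $\rank{\matcol{\sys}{S}} = \rank{\matcol{\mat{G}}{S}}$ for every $S$.

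First I would record an elementary identity: for any matrix $M$ with $\cdim$ columns and any $S \subseteq \{1,\ldots,\cdim\}$, one has $\rank{\matcol{M}{S}} = |S| - \dim_{\field(\var)} \mathcal{K}_S$, where $\mathcal{K} \subseteq \field(\var)^\cdim$ is the right kernel of $M$ over $\field(\var)$ and $\mathcal{K}_S = \{ v \in \mathcal{K} : v_j = 0 \text{ for all } j \notin S \}$. This follows from the rank--nullity theorem applied to $\matcol{M}{S}$ over $\field(\var)$, together with the $\field(\var)$-linear bijection between the right kernel of $\matcol{M}{S}$ and $\mathcal{K}_S$ obtained by zero-padding a vector supported on $S$ to a vector of length $\cdim$ (and conversely restricting).

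Next I would argue that the hypothesis — $\sys$ and $\mat{G}$ have the same right kernel as $\polRing$-modules — implies they have the same right kernel $\mathcal{K}$ over $\field(\var)$: a basis of the rank-$(\cdim-\rk)$ kernel module consists of $\cdim-\rk$ vectors that are $\polRing$-linearly independent, hence $\field(\var)$-linearly independent, hence a $\field(\var)$-basis of the $(\cdim-\rk)$-dimensional space $\mathcal{K}$ (and the rank of a polynomial matrix is its rank over $\field(\var)$, so this is consistent). Applying the identity of the previous paragraph to both $\sys$ and $\mat{G}$ with the same $\mathcal{K}$, hence the same $\mathcal{K}_S$ for each $S$, yields $\rank{\matcol{\sys}{S}} = \rank{\matcol{\mat{G}}{S}}$, and the greedy characterization gives equal column rank profiles.

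For the corollary, take $\mat{G} = \matrow{\sys}{I}$. Selecting a subset of rows only discards equations, so the right kernel of $\matrow{\sys}{I}$ over $\field(\var)$ contains that of $\sys$; under the rank hypothesis both have dimension $\cdim-\rk$, hence coincide, and the polynomial right kernels coincide too since the right kernel module of any matrix $M$ is $\{v \in \polRing^\cdim : M v = \matz\} = \mathcal{K}\cap\polRing^\cdim$, which depends only on $\mathcal{K}$. The main statement then applies. I do not expect a serious obstacle here; the only points requiring care are the passage between the $\polRing$-module kernel appearing in the hypothesis and the $\field(\var)$-kernel used in the counting argument, and the precise formulation of the zero-padding correspondence in the rank--nullity identity.
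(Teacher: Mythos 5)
Your proof is correct, but it takes a noticeably different route from the paper's. The paper argues pointwise: an index \(j\) lies outside the column rank profile of \(\sys\) exactly when some \(\col{u}\in\pmatRing{j}{1}\) with \(u_j\neq 0\) satisfies \(\matcols{\sys}{1..j}\,\col{u}=\matz\); zero-padding \(\col{u}\) places it in the common right kernel, so the same dependency holds for \(\mat{G}\), giving one inclusion of rank profiles, and symmetry gives the other. You instead establish the stronger fact that \emph{all} column-selection ranks coincide, i.e.\ the two matrices have the same column matroid, via rank--nullity over \(\field(\var)\) applied to every support set \(S\), and then recover the lexicographically minimal independent set greedily. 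This costs you some extra bookkeeping in passing between the \(\polRing\)-module kernel of the hypothesis and its \(\field(\var)\)-span (which you handle correctly: a module basis of size \(\cdim-\rk\) is a \(\field(\var)\)-basis of the kernel, and conversely the module kernel is the intersection of the \(\field(\var)\)-kernel with \(\polRing^{\cdim}\)), but it buys a stronger intermediate conclusion and, for the corollary, a more elementary derivation: your dimension count over \(\field(\var)\) gives the equality of the right kernels of \(\sys\) and \(\matrow{\sys}{I}\) directly, whereas the paper routes this through \cref{lem:ker_submatrix}, whose proof invokes the unimodularity of column bases of kernel bases from \cite{ZhoLab14}.
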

\begin{proof}
  The second statement follows from the first: applying
  \cref{lem:ker_submatrix} to \(\trsp{\sys}\) and \(\trsp{(\matrow{\sys}{I})} =
  \matcol{(\trsp{\sys})}{I}\) shows that these matrices have the same left
  kernel, i.e.~\(\sys\) and \(\matrow{\sys}{I}\) have the same right kernel.

  Let \(\rps_1\) and \(\rps_2\) be the rank profiles of \(\sys\) and
  \(\mat{G}\), respectively. Let \(j \in \{1,\ldots,\cdim\}\) be such that \(j
  \not\in \rps_1\), meaning that there exists a vector \(\col{u} = \trsp{[u_1
  \cdots u_j]} \in \pmatRing{j}{1}\) such that \(u_j \neq 0\) and
  \(\matcol{\sys}{1..j} \, \col{u} = 0\). Since the right kernel of \(\sys\) is
  contained in that of \(\mat{G}\), it follows that \(\matcol{\mat{G}}{1..j}
  \col{u} = 0\), and therefore \(j\not\in\rps_2\). We have proved \(\rps_2
  \subseteq \rps_1\), and the same arguments prove \(\rps_1 \subseteq \rps_2\),
  by symmetry.  Hence \(\rps_2 = \rps_1\).
\end{proof}

\begin{theorem}
  \label{thm:degree_rank}
  Let \(\sys \in \pmatRing{\rdim}{\cdim}\) have rank \(\rk\), let \(\shifts \in
  \ZZ^\rdim\), and let \(\kbas \in \pmatRing{(\rdim-\rk)}{\rdim}\) be an
  \(\shifts\)-weak Popov basis of \(\modKer{\sys}\). Let
  \((\pivInds,\pivDegs)\) be the \(\shifts\)-pivot profile of \(\kbas\), and
  let \(\nonPivInds = \{1,\ldots,\rdim\}\setminus\pivInds\) be the indices of
  the columns of \(\kbas\) which do not contain an \(\shifts\)-pivot entry of
  \(\kbas\). Assume also that \(\sys\) factors as \(\sys = \mat{S} \mat{R}\)
  where \(\mat{R} \in \pmatRing{\rk}{\cdim}\) and \(\mat{S} \in
  \pmatRing{\rdim}{\rk}\). Then,
  \begin{enumerate}[(i)]
    \item \label{thm:degree_rank:rank}
      \(\matrows{\sys}{\nonPivInds} \in \pmatRing{\rk}{\cdim}\)
      has rank \(\rk\), which is the size of \(\nonPivInds\);
    \item \label{thm:degree_rank:pivdeg}
      \(\matrows{\mat{S}}{\nonPivInds} \in \pmatRing{\rk}{\rk}\) is nonsingular
      and \(\sumTuple{\pivDegs} \le
      \deg(\det(\matrows{\mat{S}}{\nonPivInds}))\), hence in particular
      \(\sumTuple{\pivDegs} \le \sumTuple{\rdeg{\matrows{\sys}{\nonPivInds}}}
      \le \rk\deg(\sys)\);
    \item \label{thm:degree_rank:rdeg}
      if \(\shifts\ge \rdeg{\sys}\), then \(\sumTuple{\rdeg[\shifts]{\kbas}}
      \le \sumTuple{\shifts}\).
  \end{enumerate}
\end{theorem}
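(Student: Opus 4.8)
\medskip
\noindent\textbf{Proof plan.}
The plan is to reduce all three items to a single determinant identity relating the pivot columns of $\kbas$ to the rows of $\mat{S}$ indexed by $\nonPivInds$. First, the easy preliminaries: since $\kbas \in \pmatRing{(\rdim-\rk)}{\rdim}$ is a basis it has full row rank, so its pivot indices $\pivInds$ are pairwise distinct and $|\nonPivInds| = \rdim - (\rdim-\rk) = \rk$ (settling the size claim in (i)); and the rank hypothesis forces both $\mat{S}$ and $\mat{R}$ to have full rank $\rk$, so $\mat{R}$ has a right inverse over $\field(\var)$ and therefore $\row{p}\sys = \matz \Leftrightarrow \row{p}\mat{S} = \matz$, i.e.\ $\modKer{\sys} = \modKer{\mat{S}}$ and $\kbas\mat{S} = \matz$. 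Next I would establish the auxiliary fact that $\matcols{\kbas}{\pivInds} \in \pmatRing{(\rdim-\rk)}{(\rdim-\rk)}$ is nonsingular with $\deg(\det(\matcols{\kbas}{\pivInds})) = \sumTuple{\pivDegs}$: restricting the shift to $\shifts|_{\pivInds}$, each row keeps its $\shifts$-pivot entry, so $\matcols{\kbas}{\pivInds}$ has $\shifts|_{\pivInds}$-row degree $\rdeg[\shifts]{\kbas}$ and its $\shifts|_{\pivInds}$-leading matrix is the submatrix of $\lmat[\shifts]{\kbas}$ on columns $\pivInds$; because the rightmost nonzero of row $i$ of $\lmat[\shifts]{\kbas}$ lies in column $\pivInd_i$, sorting those columns by increasing pivot index makes this submatrix triangular with nonzero diagonal. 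Hence $\matcols{\kbas}{\pivInds}$ is $\shifts|_{\pivInds}$-reduced, so nonsingular, and the standard degree formula for square reduced matrices gives $\deg(\det(\matcols{\kbas}{\pivInds})) = \sumTuple{\rdeg[\shifts]{\kbas}} - \sumTuple{\shifts|_{\pivInds}} = \sum_i(\pivDeg_i + s_{\pivInd_i}) - \sum_i s_{\pivInd_i} = \sumTuple{\pivDegs}$.

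For the determinant identity I would pick a unimodular $\mat{U}$ with $\mat{U}\mat{S} = \begin{bmatrix}\mat{T}\\\matz\end{bmatrix}$, $\mat{T} \in \pmatRing{\rk}{\rk}$ nonsingular. Its bottom $\rdim-\rk$ rows $\mat{U}_2$ then form a basis of $\modKer{\mat{S}}$, so $\kbas = \mat{N}\mat{U}_2$ for a unimodular $\mat{N}$; writing $\mat{U}^{-1} = \begin{bmatrix}\mat{A}_1 & \mat{A}_2\end{bmatrix}$ (first $\rk$, last $\rdim-\rk$ columns), we get $\mat{S} = \mat{A}_1\mat{T}$ and $\mat{U}_2\mat{A}_2 = \idMat{\rdim-\rk}$. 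Now set $\mat{Z} = \matrows{\idMat{\rdim}}{\nonPivInds}$, $\mat{W} = \begin{bmatrix}\kbas\\\mat{Z}\end{bmatrix}$ and $\matt{S} = \begin{bmatrix}\mat{S} & \mat{A}_2\end{bmatrix} \in \pmatRing{\rdim}{\rdim}$. Expanding $\det(\mat{W})$ along the $\rk$ rows coming from $\mat{Z}$ leaves only the minor of $\kbas$ on columns $\pivInds$, so $\det(\mat{W}) = \pm\det(\matcols{\kbas}{\pivInds})$; from $\mat{S} = \mat{A}_1\mat{T}$ one gets $\det(\matt{S}) = \det(\mat{U}^{-1})\det(\mat{T})$, i.e.\ $\det(\mat{T})$ up to a nonzero constant; and, using $\kbas\mat{S} = \matz$ and $\kbas\mat{A}_2 = \mat{N}$, the product $\mat{W}\matt{S}$ becomes block-triangular after swapping its two block columns, with diagonal blocks $\mat{N}$ and $\matrows{\mat{S}}{\nonPivInds}$. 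Equating the two evaluations of $\det(\mat{W}\matt{S})$ and cancelling the nonzero constants yields $\det(\matrows{\mat{S}}{\nonPivInds}) = c\,\det(\matcols{\kbas}{\pivInds})\,\det(\mat{T})$ for some nonzero $c \in \field$ (equivalently, this is Jacobi's identity on complementary minors of $\mat{U}$ and $\mat{U}^{-1}$). Since $\det(\matcols{\kbas}{\pivInds}) \neq 0$ and $\det(\mat{T}) \neq 0$, this shows $\matrows{\mat{S}}{\nonPivInds}$ is nonsingular and, comparing degrees, $\sumTuple{\pivDegs} = \deg(\det(\matcols{\kbas}{\pivInds})) \le \deg(\det(\matrows{\mat{S}}{\nonPivInds}))$.

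From here the rest is bookkeeping. Part~(i): $\matrows{\sys}{\nonPivInds} = \matrows{\mat{S}}{\nonPivInds}\mat{R}$ with $\matrows{\mat{S}}{\nonPivInds}$ nonsingular and $\rank{\mat{R}} = \rk$, so $\rank{\matrows{\sys}{\nonPivInds}} = \rk = |\nonPivInds|$. Part~(ii): picking columns $J_0$ with $\det(\matcols{\mat{R}}{J_0}) \neq 0$, Cauchy--Binet gives $\det(\matsub{\sys}{\nonPivInds}{J_0}) = \det(\matrows{\mat{S}}{\nonPivInds})\det(\matcols{\mat{R}}{J_0})$, hence $\sumTuple{\pivDegs} \le \deg(\det(\matrows{\mat{S}}{\nonPivInds})) \le \deg(\det(\matsub{\sys}{\nonPivInds}{J_0})) \le \sumTuple{\rdeg{\matrows{\sys}{\nonPivInds}}} \le \rk\deg(\sys)$, using $|\nonPivInds| = \rk$. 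Part~(iii): the pivot profile gives $\sumTuple{\rdeg[\shifts]{\kbas}} = \sum_i(\pivDeg_i + s_{\pivInd_i}) = \sumTuple{\pivDegs} + \sum_{k \in \pivInds} s_k$; every row of $\sys$ indexed by $\nonPivInds$ is nonzero (by part~(i)), so $\shifts \ge \rdeg{\sys}$ yields $\sumTuple{\pivDegs} \le \sumTuple{\rdeg{\matrows{\sys}{\nonPivInds}}} = \sum_{k \in \nonPivInds} \rdeg{\matrow{\sys}{k}} \le \sum_{k \in \nonPivInds} s_k$, and adding $\sum_{k \in \pivInds} s_k$ to both sides gives $\sumTuple{\rdeg[\shifts]{\kbas}} \le \sumTuple{\shifts}$.

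The hard part will be the determinant identity of the second paragraph: choosing the auxiliary square matrices $\mat{W}$ and $\matt{S}$ so that the block structure of $\mat{W}\matt{S}$ collapses cleanly, and — upstream of it — pinning $\deg(\det(\matcols{\kbas}{\pivInds}))$ down to \emph{exactly} $\sumTuple{\pivDegs}$ rather than merely bounding it, which is precisely where the weak Popov hypothesis is used in an essential way. Everything after that identity is routine manipulation with Cauchy--Binet and shifted degrees.
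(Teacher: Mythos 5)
Your proof is correct, but it reaches the heart of the theorem --- the bound $\sumTuple{\pivDegs} \le \deg(\det(\matrows{\mat{S}}{\nonPivInds}))$ --- by a genuinely different route. The paper proves (\ref{thm:degree_rank:rank}) first by a direct kernel argument (a vector $\row{v}$ killing $\matrows{\sys}{\nonPivInds}$ extends to an element of $\modKer{\sys}$, hence is a combination of rows of $\kbas$ whose coefficients must vanish because $\matcols{\kbas}{\pivInds}$ is nonsingular), uses this to get nonsingularity of $\matrows{\mat{S}}{\nonPivInds}$, and then obtains the degree inequality by identifying $\matcols{\kbas}{\pivInds}$ as a shifted weak Popov basis of the relation module $\modRel{\matrows{\mat{S}}{\nonPivInds}}{\matrows{\mat{S}}{\pivInds}}$ and invoking the determinant-degree bound for such bases from Neiger--Vu (Cor.\,2.4). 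You instead reverse the logical order --- nonsingularity of $\matrows{\mat{S}}{\nonPivInds}$ comes first, and (\ref{thm:degree_rank:rank}) falls out of $\matrows{\sys}{\nonPivInds} = \matrows{\mat{S}}{\nonPivInds}\mat{R}$ --- and you replace the relation-module argument by an explicit complementary-minors identity $\det(\matrows{\mat{S}}{\nonPivInds}) = c\,\det(\matcols{\kbas}{\pivInds})\det(\mat{T})$ with $c\in\field\setminus\{0\}$ and $\mat{T}$ the nonsingular top block of a unimodular triangularization of $\mat{S}$; combined with your (correct) triangularity argument showing $\deg(\det(\matcols{\kbas}{\pivInds})) = \sumTuple{\pivDegs}$, this is self-contained and arguably sharper, since it exhibits the exact gap $\deg(\det(\mat{T}))$ rather than only an inequality. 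Your treatment of the remaining bound $\deg(\det(\matrows{\mat{S}}{\nonPivInds})) \le \sumTuple{\rdeg{\matrows{\sys}{\nonPivInds}}}$ is also more elementary than the paper's: you use Cauchy--Binet on a nonsingular column subset plus the Leibniz degree bound, whereas the paper routes through a $-\shiftt$-column Popov form and the minimality of shifted column degrees. Part (\ref{thm:degree_rank:rdeg}) is handled identically in both proofs. The only points worth making explicit in a final write-up are the degenerate cases $\rk=0$ and $\rk=\rdim$ (empty $\mat{S}$, resp.\ empty $\kbas$), which both arguments handle vacuously.
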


Concerning the matrices \(\mat{R}\) and \(\mat{S}\), note that they have rank
\(\rk\), since otherwise we would have \(\rank{\sys} = \rank{\mat{S}\mat{R}} <
\rk\). Taking a row basis of \(\sys\) for \(\mat{R}\) proves the existence of
such matrices.

\cref{thm:degree_rank:rank} states that, from any shifted weak Popov basis of
the left kernel of \(\sys\), we can immediately deduce a set of
\(\rk=\rank{\sys}\) rows of \(\sys\) which are \(\polRing\)-linearly
independent.  \cref{thm:degree_rank:rdeg} is the main degree property that was
exploited in the design of the fastest known minimal kernel basis algorithm
\cite{ZhLaSt12} (see \cite[Thm.\,3.4]{ZhLaSt12}), explaining also why this
algorithm restricts to shifts such that \(\shifts\ge \rdeg{\sys}\). Here we
prove it as a consequence of the property in \cref{thm:degree_rank:pivdeg},
which gives more precise degree information in particular through better
accounting for the rank of \(\sys\).

We now prove \cref{thm:degree_rank}.

\begin{proof}
  For \cref{thm:degree_rank:rank} it suffices to prove that
  \(\modKer{\matrows{\sys}{\nonPivInds}} = \{\matz\}\).  Let \(\row{v} \in
  \pmatRing{1}{\rk}\) be such that \(\row{v} \matrows{\sys}{\nonPivInds} =
  \matz\).  Construct \(\row{w} \in \pmatRing{1}{\rdim}\) such that
  \(\matcols{\row{w}}{\pivInds} = \matz\) and \(\matcols{\row{w}}{\nonPivInds}
  = \row{v}\). The vector \(\row{w}\) is in \(\modKer{\sys}\), so  \(\row{w} =
  \row{u} \kbas\) for some \(\row{u} \in \pmatRing{1}{(\rdim-\rk)}\), and hence
  \(\matz = \matcols{\row{w}}{\pivInds} = \row{u} \matcols{\kbas}{\pivInds}\).
  Thus  \(\row{u}=\matz\) since \(\matcols{\kbas}{\pivInds}\) is nonsingular,
  implying \(\row{v} = \matcols{\row{w}}{\nonPivInds} = \row{u}
  \matcols{\kbas}{\nonPivInds} = \matz\).

  To prove \cref{thm:degree_rank:pivdeg} set
  \[
    \mat{S}_1=\matrows{\mat{S}}{\nonPivInds} \in \pmatRing{\rk}{\rk}
    \text{ and }
    \mat{S}_2=\matrows{\mat{S}}{\pivInds} \in \pmatRing{(\rdim-\rk)}{\rk}
  \]
  as well as
  \[
    \kbas_1=\matcols{\kbas}{\nonPivInds} \in \pmatRing{(\rdim-\rk)}{\rk}
    \text{ and }
    \kbas_2=\matcols{\kbas}{\pivInds} \in \pmatRing{(\rdim-\rk)}{(\rdim-\rk)}.
   \]
   Then $\mat{S}_1$ is nonsingular since \(\matrows{\sys}{\nonPivInds} = \mat{S}_1
  \mat{R}\) has rank \(\rk\).
 We are going to prove that \(\kbas_2\)
  is an \(\subTuple{\shifts}{\pivInds}\)-weak Popov basis of
  \(\modRel{\mat{S}_1}{\mat{S}_2}\), where \(\subTuple{\shifts}{\pivInds} \in
  \ZZ^{\rdim-\rk}\) is the subshift of \(\shifts\) formed by its entries with
  index in \(\pivInds\). From this, \citep[Cor.\,2.4]{NeiVu17} ensures that
  \(\deg(\det(\kbas_2)) = \sumTuple{\pivDegs} \le \deg(\det(\mat{S}_1))\).

  Since \(\sys = \mat{S}\mat{R}\), with \(\mat{R}\) full row rank, we
  have \(\modKer{\sys} = \modKer{\mat{S}}\) and so \(\kbas\) is an
  \(\shifts\)-weak Popov basis of \(\modKer{\mat{S}}\). From \(\kbas \mat{S} =
  \matz\) we obtain \(\kbas_1 \mat{S}_1 + \kbas_2 \mat{S}_2 = \matz\) and so
  the rows of \(\kbas_2\) are in \(\modRel{\mat{S}_1}{\mat{S}_2}\).  It remains
  to show that any \(\kvec \in \modRel{\mat{S}_1}{\mat{S}_2}\) is a
  \(\polRing\)-linear combination of the rows of \(\kbas_2\). By definition of
  \(\modRel{\mat{S}_1}{\mat{S}_2}\), there exists \(\row{q} \in
  \pmatRing{1}{\rk}\) such that \(\kvec \mat{S}_2 = \row{q} \mat{S}_1\).
  Considering \(\row{v} \in \pmatRing{1}{\rdim}\) such that
  \(\matcols{\row{v}}{\pivInds} = \kvec\) and \(\matcols{\row{v}}{\nonPivInds}
  = -\row{q}\), we have \(\row{v} \mat{S} =
  \matcols{\row{v}}{\nonPivInds}\mat{S}_1 + \matcols{\row{v}}{\pivInds}
  \mat{S}_2 = -\row{q} \mat{S}_1 + \kvec \mat{S}_2 = \matz\), that is,
  \(\row{v} \in \modKer{\mat{S}}\). Thus, \(\row{v} = \row{u} \kbas\) for some
  \(\row{u} \in \pmatRing{1}{(\rdim-\rk)}\), and we obtain \(\kvec =
  \matcols{\row{v}}{\pivInds} = \row{u} \matcols{\kbas}{\pivInds} = \row{u}
  \kbas_2\).

  Let \(\sys_1 = \matrows{\sys}{\nonPivInds} \in \pmatRing{\rk}{\cdim}\) and
  \(\shiftt = \rdeg{\sys_1} \in \ZZ^{\rk}\). In order to prove the last two
  bounds on \(\sumTuple{\pivDegs}\), observe that \(\sumTuple{\shiftt} \le
  \rk\deg(\sys)\) is clear since \(\sys_1\) consists of \(\rk\) rows of
  \(\sys\). It remains to show that \(\deg(\det(\mat{S}_1)) \le
  \sumTuple{\shiftt}\). Let \(\mat{U} \in \pmatRing{\cdim}{\rk}\) be such that
  \(\sys_1 \mat{U}\) is the \(-\shiftt\)-column Popov form of \(\sys_1\).
  Since \(\cdeg[-\shiftt]{\sys_1} \le \shiftz\), the minimality of the
  shifted column degrees of shifted reduced forms \cite[Sec.\,2.7]{Zhou12}
  implies \(\cdeg[-\shiftt]{\sys_1 \mat{U}} \le \shiftz\) as well. According to
  \cite[Lem.\,2.2]{ZhoLab13}, this translates as \(\rdeg{\sys_1\mat{U}} \le
  \shiftt\), and so \(\sumTuple{\rdeg{\sys_1\mat{U}}} \le
  \sumTuple{\shiftt}\). Since \(\sys_1 \mat{U}\) is \(-\shiftt\)-column Popov,
  it is also row reduced, and therefore \(\sumTuple{\rdeg{\sys_1\mat{U}}} =
  \deg(\det(\sys_1\mat{U}))\) \cite[Sec.\,6.3.2]{Kailath80}. It follows that
  \(\deg(\det(\sys_1\mat{U})) \le \sumTuple{\shiftt}\) and, using \(\sys_1 =
  \mat{S}_1 \mat{R}\), we obtain
  \[
    \deg(\det(\mat{S}_1)) + \deg(\det(\mat{R}\mat{U}))
    = \deg(\det(\mat{S}_1 \mat{R}\mat{U}))
    \le \sumTuple{\shiftt}.
  \]

  To prove \cref{thm:degree_rank:rdeg}, recall
  that \(\rdeg[\shifts]{\kbas} = (\pivDeg_{i} + s_{\pivInd_i})_{1\le
  i\le\rdim-\rk}\), and therefore \(\sumTuple{\rdeg[\shifts]{\kbas}} =
  \sumTuple{\pivDegs} + \sumTuple{\shifts_{\pivInds}}\). From
  \cref{thm:degree_rank:pivdeg} we get that \(\sumTuple{\rdeg[\shifts]{\kbas}}
  \le \sumTuple{\rdeg{\matrows{\sys}{\nonPivInds}}} +
  \sumTuple{\shifts_{\pivInds}}\), and from the assumption \(\shifts \ge
  \rdeg{\sys}\) we conclude that \(\sumTuple{\rdeg[\shifts]{\kbas}} \le
  \sumTuple{\shifts_{\nonPivInds}} + \sumTuple{\shifts_{\pivInds}} =
  \sumTuple{\shifts}\).
\end{proof}

\begin{example}
  \label{example:cont_main}
  Following on from \cref{example:main}, consider the matrices $\sys$ and
  $\kbas$ and the shift $\shifts = \rdeg{\sys} = (8,5,2,8,4)$. Since the
  $\shifts$-pivot index of $\kbas$ is $\pivInds = (4, 5)$, the indices of the
  columns of $\kbas$ which do not contain an \(\shifts\)-pivot entry are
  $\pivInds^c = (1,2,3)$.   

  Regarding \cref{thm:degree_rank:rank}, from the above \(\shifts\)-pivot
  information we get that the rank of $\sys$ is $3$ and the rows $(1,2,3)$ of
  $\sys$ are $\polRing$-linearly independent. The other kernel basis considered
  in \cref{example:main} shows that the rows \((1,2,4)\) are also
  \(\polRing\)-linearly independent.

  Regarding \cref{thm:degree_rank:pivdeg}, observe that the row degree of
  $\matrow{\sys}{\nonPivInds}$ is \((8, 5, 2)\), so
  $\sumTuple{\rdeg[\shifts]{\matrow{\sys}{\nonPivInds}}} = 15$. From
  \cref{example:main}, the $\shifts$-pivot degree of \(\mat{K}\) is \(\pivDegs
  = (8, 7)\), so $\sumTuple{\pivDegs} = 15$. Furthermore here \(\rk \deg(\sys)
  = 3 \cdot 8 = 24\). Thus, here we have \(\sumTuple{\pivDegs} =
  \sumTuple{\rdeg{\matrows{\sys}{\nonPivInds}}} \le \rk\deg(\sys)\).

  Finally, regarding \cref{thm:degree_rank:rdeg},
  $\sumTuple{\rdeg[\shifts]{\kbas}} = \sumTuple{(8,5)} = 13$, which is bounded
  from above by $\sumTuple{\shifts} = 27$.  \qed
\end{example}
 
\section{Computing the rank profile and a kernel basis}
\label{sec:kbas_rkprof}

In this section we give an improved version of the minimal kernel
basis algorithm in \cite{ZhLaSt12}. In addition to the new algorithm
we also  include a proof of correctness and determine its complexity. 

\subsection{Algorithm}
\label{sec:kbas_rkprof:algo}

Our improvements of the algorithm, compared to the versions in in
\cite{ZhLaSt12} \cite[Sec.\,11]{Zhou12}, is summarized as follows:

\begin{itemize}
\item[(i)]
Besides a kernel basis, the algorithm also finds the \emph{column rank profile}
of \(\sys\), without additional operations, based on the approach in
\cite[Sec.\,11.1]{Zhou12}.
\item[(ii)]
The output kernel basis \(\kbas\) is in \emph{\(\shifts\)-weak Popov form}
instead of \(\shifts\)-reduced form. This has the advantage of revealing the
\(\shifts\)-pivot profile, which can be used for example to further transform
\(\kbas\) into \(\shifts\)-Popov form \cite[Sec.\,5]{NeigerPernet2021}. Thanks
to \cref{thm:degree_rank:rank} of \cref{thm:degree_rank}, this also reveals a
set of \(\rank{\sys}\) rows of \(\sys\) that are \(\polRing\)-linearly
independent, a property that we exploit in \cref{algo:column_rank_profile}.
\item[(iii)]
The algorithm \emph{supports any input matrix \(\sys\)}, without assumption on
its rank or dimensions. In comparison, the assumption \(\rdim\ge\cdim\) is made
in the complexity analysis in \cite{ZhLaSt12,Zhou12}, which implicitly requires
that the input \(\sys\) have full column rank (indeed, if \(\sys\) is
rank-deficient, the algorithm in these references cannot guarantee that the
assumption \(\rdim\ge\cdim\) is satisfied in recursive calls).
\item[(iv)]
The algorithm may use \emph{any relation basis}
(\crefrange{step:algo:kbas_rkprof:rbas:start}{step:algo:kbas_rkprof:rbas:end}),
instead of restricting to approximant bases. As early experiments have showed
\cite[Sec.\,4.2]{HyuNeiSch19}, this can lead to speed-ups at least by constant
factors, for example by relying on well-chosen interpolation bases. Still, as
seen in \cref{thm:algo:kbas_rkprof,sec:kbas_rkprof:complexity}, for one
specific point of the complexity analysis we restrict to relation bases modulo
a diagonal matrix.
\end{itemize}

\begin{theorem}
  \label{thm:algo:kbas_rkprof}
  Let \(\sys\in\pmatRing{\rdim}{\cdim}\) have rank \(\rk\), and let
  \(\shifts\in\NN^\rdim\) such that \(\shifts \ge \rdeg{\sys}\). The call
  \(\Call{KernelBasis-RankProfile}{\sys,\shifts}\) returns an \(\shifts\)-weak
  Popov basis \(\kbas \in \pmatRing{(\rdim-\rk)}{\rdim}\) of \(\modKer{\sys}\)
  and the column rank profile \((\rp_1,\ldots,\rp_\rk) \in \ZZp^\rk\) of
  \(\sys\). Assuming that \(\rdim \in \bigO{\cdim}\) and that one chooses a
  matrix \(\mmat\) at \cref{step:algo:kbas_rkprof:rbas:chooseM} which is diagonal with
  all entries of degree \(\tau\), this algorithm uses $\softO{\rdim^{\expmm-2}
  (\rdim+\cdim) (\rdim+\dd)}$ operations in \(\field\), where \(\dd =
  \sumTuple{\shifts}\).
\end{theorem}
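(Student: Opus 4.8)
The plan is to establish the theorem in two parts: first correctness of $\textsc{KernelBasis-RankProfile}$, then its complexity. Since the algorithm follows the divide-and-conquer structure of \cite{ZhLaSt12}, I would prove correctness by induction on $\cdim$ (or on the recursion depth). The base case handles $\cdim=1$ or a matrix that is already of full column rank. For the recursive step, I expect the algorithm splits the columns of $\sys$ into a left block and a right block; it first computes a relation basis modulo $\mmat$ capturing the left block, uses \cref{lem:reduced_product} and \cref{lem:weak_popov_multiplication} to argue that the $\shifts$-weak Popov form is preserved when composing with the kernel basis obtained recursively on the right block, and tracks the column rank profile by recording which columns of the left block are pivot columns of the relation basis (in the sense of \cite[Sec.\,11.1]{Zhou12}) together with the recursively-computed rank profile of the residual right block, suitably re-indexed. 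The key invariant to maintain is that at each level the shift passed down is $\ge$ the row degree of the current working matrix, so that \cref{thm:degree_rank:rdeg} of \cref{thm:degree_rank} applies and controls the degrees of all intermediate kernel/relation bases.

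For the degree bookkeeping I would rely centrally on \cref{thm:degree_rank}: part~\eqref{thm:degree_rank:rdeg} gives $\sumTuple{\rdeg[\shifts]{\kbas}} \le \sumTuple{\shifts} = \dd$ for the output, and the analogous statement at each recursive level bounds the sum of $\shifts$-row degrees of the intermediate bases by $\dd$ plus contributions from the chosen order $\tau$. This is exactly the place where the hypothesis $\shifts \ge \rdeg{\sys}$ is used, and where restricting $\mmat$ to be diagonal with all entries of degree $\tau$ matters: it guarantees that the relation basis computed at \cref{step:algo:kbas_rkprof:rbas:chooseM} is an approximant/interpolant basis whose cost and output degrees are governed by $\tau$ and $\cdim$ rather than by less controlled quantities.

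For the complexity, I would set up the recursion $C(\cdim)$ for the cost on an input with $\rdim$ rows and $\cdim$ columns, under $\rdim \in \bigO{\cdim}$. The dominant costs at a node are: (a) the relation basis computation modulo a diagonal $\mmat$ of degree $\tau$, which by the fast (P)M-basis / interpolation-basis machinery costs $\softO{\rdim^{\expmm-1}\tau}$ per appropriate block, hence $\softO{\rdim^{\expmm-2}\cdim\tau}$ after summing over $\cdim/\rdim$-many column blocks; and (b) the polynomial matrix products assembling the kernel basis, whose sizes are bounded using the $\shifts$-degree bound $\dd$ from \cref{thm:degree_rank:rdeg} and the unbalanced-multiplication cost $\softO{\rdim^{\expmm-2}(\rdim+\cdim)(\rdim+\dd)}$ in the style of \cite{ZhLaSt12,ZhoLab14}. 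One would choose $\tau \approx (\rdim+\dd)/\cdim$ (so that $\cdim\tau \approx \rdim+\dd$) to balance these, and then check that the recursion tree has $\bigO{\log\cdim}$ levels with the per-level cost telescoping to $\softO{\rdim^{\expmm-2}(\rdim+\cdim)(\rdim+\dd)}$; the extra $\rdim$ additive term (versus just $\dd$) accounts for low-rank situations and for the $\cdim$-many columns each needing at least constant work. The extraction of the column rank profile is free: it is read off from pivot indices already computed, as in point~(i) of the list preceding \cref{thm:algo:kbas_rkprof}.

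**The main obstacle** I anticipate is the degree/size accounting in the recursion when $\sys$ is rank-deficient or $\rdim$ and $\cdim$ are unbalanced — precisely the generality this algorithm adds over \cite{ZhLaSt12,Zhou12}. In those references $\rdim \ge \cdim$ and full column rank are assumed, which keeps the recursive subproblems well-behaved; here one must verify that recursive calls still receive a shift dominating the relevant row degree and that the sum-of-degrees bound $\dd$ is not inflated across levels (it should be preserved, by \cref{thm:degree_rank:rdeg} applied to each intermediate matrix, since the intermediate matrices are built as subcolumn-blocks composed with relation bases whose $\shifts$-row degrees are themselves bounded by $\dd + \bigO{\cdim\tau}$). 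Making this telescoping rigorous — rather than hand-waving "the degrees stay bounded" — is the technical heart of the proof; everything else is an application of the preservation lemmas for (weak) Popov forms and the standard costs for fast polynomial matrix multiplication and relation-basis computation.
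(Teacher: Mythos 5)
Your proposal captures the right general ingredients (weak Popov preservation under products, degree control via \cref{thm:degree_rank}\,\eqref{thm:degree_rank:rdeg}, relation-basis and unbalanced-multiplication costs), but it misses the actual structure of the recursion and, with it, the key lemma that makes the whole thing work. The algorithm has \emph{two distinct branches}: when \(\rdim<2\cdim\) it splits the \emph{columns} in half and recurses twice with the same row dimension; when \(\cdim\le\rdim/2\) it computes a relation basis of the \emph{whole} matrix modulo \(\mmat\) (not of a ``left block''), discards the rows already in the kernel, and recurses once on a residual with the \emph{same} column dimension but fewer rows. An induction on \(\cdim\) alone therefore does not terminate: in the tall branch \(\cdim\) is unchanged, and progress comes from \cref{lem:partial_kernel_via_approx:ker}, which shows that at most \(\rk+\lfloor k\rfloor\) rows of the relation basis fail to be kernel rows when the order is \(\order=\lceil\sumTuple{\shifts}/k\rceil\). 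With \(k=(\rdim-\cdim)/2\) and \(\cdim\le\rdim/2\) this caps the residual at \(3\rdim/4\) rows. This is also why \(\order=\lceil 2\sumTuple{\shiftr}/(\rdim-\cdim)\rceil\) is forced rather than a free cost-balancing parameter: your suggestion \(\order\approx(\rdim+\dd)/\cdim\) would not yield the geometric decrease in row dimension that drives the recursion. Without this lemma (the paper's generalization of \cite[Thm.\,3.6]{ZhLaSt12} to rank-deficient input) neither termination nor the complexity bound follows.

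The rank-profile correctness also needs two specific arguments you do not supply. In the column-splitting branch, one completes \(\kbas_1\) to a nonsingular \(\mat{U}\) so that \(\mat{U}\sys\) is block triangular with a full-row-rank top-left block; the rank profile of \(\sys\) is then the concatenation of that of \(\sys_1\) and the shifted rank profile of the residual \(\sys_2\). In the tall branch, the rank profile is preserved under \(\sys\mapsto\abas\sys\mmat^{-1}\) because \(\abas\) is nonsingular and \(\mmat\) is \emph{upper triangular} (Hermite form) --- this, not cost control, is the reason for restricting \(\mmat\); the diagonality assumption in the theorem statement only serves to make the division by \(\mmat^{-1}\) and one degree bound in the complexity analysis easy. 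Your sketch of ``recording which columns of the left block are pivot columns of the relation basis'' does not correspond to either mechanism.
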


\begin{algorithm}[ht]
  \caption{\textsc{KernelBasis-RankProfile}\((\sys,\shifts)\)}
  \label{algo:kbas_rkprof}
  \begin{algorithmic}[1]
    \Require{a matrix \(\sys\in\pmatRing{\rdim}{\cdim}\), a shift \(\shifts\in\NN^\rdim\)}
    \Assume{\(\shifts \ge \rdeg{\sys}\) entrywise}
    \Ensure{an \(\shifts\)-ordered weak Popov basis \(\kbas \in
      \pmatRing{(\rdim-\rk)}{\rdim}\) of \(\modKer{\sys}\) and the column rank
      profile \((\rp_1,\ldots,\rp_\rk) \in \ZZp^\rk\) of \(\sys\)}

    \If{\(\sys = \matz\)} \Comment{kernel of zero is identity} \label{step:algo:kbas_rkprof:zero:start}
      \State\Return \(\idMat{\rdim} \in \pmatRing{\rdim}{\rdim}\), \(() \in \ZZp^0\) \label{step:algo:kbas_rkprof:zero:end}
    \EndIf

    \If{\(\rdim=1\)} \Comment{kernel of nonzero \(1\times\cdim\) matrix is empty} \label{step:algo:kbas_rkprof:onerow:start}
      \State \(\rp \in \{1,\ldots,\cdim\} \gets\) index of first nonzero entry in \(\sys\)
      \State \Return \([] \in \pmatRing{0}{1}\), \((\rp) \in \ZZp^1\) \label{step:algo:kbas_rkprof:onerow:end}
    \EndIf

    \If{\(\rdim < 2\cdim\)}\Comment{``wide'' matrix: divide and conquer on columns}
        \State \(\sys_1 \in \pmatRing{\rdim}{\lfloor \cdim/2 \rfloor} \gets \matcols{\sys}{\{1,\ldots,\lfloor\cdim/2\rfloor\}}\)
              \label{step:algo:kbas_rkprof:split_first}
        \State \(\kbas_1 \in \pmatRing{\ell_1}{\rdim},\rps_1 \in \ZZp^{\rk_1} \gets\)
              \label{step:algo:kbas_rkprof:split_rec_one}
        \Statex \hfill \(\Call{KernelBasis-RankProfile}{\sys_1,\shifts}\) 
        \State \(\sys_2 \in \pmatRing{\ell_1}{\lceil \cdim/2 \rceil} \gets \kbas_1 \cdot \matcols{\sys}{\{\lfloor\cdim/2\rfloor+1,\ldots,\cdim\}}\)
               \label{step:algo:kbas_rkprof:split_residual}
        \State \(\kbas_2 \in \pmatRing{\ell_2}{\ell_1},\rps_2 \in \ZZp^{\rk_2} \gets\)
               \label{step:algo:kbas_rkprof:split_rec_two}
        \Statex \hfill \(\Call{KernelBasis-RankProfile}{\sys_2,\rdeg[\shifts]{\kbas_1}}\)
        \State\CommentLine{note:
            \(\rk_1 = \rank{\sys_1}\),
            \(\rk_2 = \rank{\sys_2}\),
            \(\ell_1 = \rdim-\rk_1\),
            and \(\ell_2 = \ell_1 - \rk_2\)}
        \State \(\rps_2 \in \ZZp^{\rk_2} \gets\) shift \(\rps_2\) by adding \(\lfloor n/2 \rfloor\) to all entries
        \State \Return \(\kbas_2 \cdot \kbas_1, (\rps_1,\rps_2) \in \ZZp^{\rk_1+\rk_2}\)
            \label{step:algo:kbas_rkprof:split_return}
    \EndIf

    \State \CommentLine{from here we are in the case \(\sys\neq\mat{0}\), \(\rdim\ge2\), and \(\cdim\le\frac{m}{2}\)}
    \State \CommentLine{minimize shift while preserving \(\shiftr \ge \rdeg{\sys}\)}
            \label{step:algo:kbas_rkprof:rbas_params:start}
    \State \(\mu \gets \min(\shifts-\rdeg{\sys})\);
            \(\shiftr \gets \shifts - (\mu,\ldots,\mu)\);
            \(\order \gets \left\lceil \frac{2 \sumTuple{\shiftr}}{\rdim-\cdim} \right\rceil\)
            \label{step:algo:kbas_rkprof:rbas_params:end}
    \State \CommentLine{choose type of relations and compute relation basis}
           \label{step:algo:kbas_rkprof:rbas:start}
    \State \(\mmat \in \pmatRing{\cdim}{\cdim}\gets\) choose any
            matrix in Hermite normal form with \(\min(\cdeg{\mmat}) \ge \order\)
            \Comment{for example \(\mmat = \diag{\var^\tau,\ldots,\var^\tau}\)}
           \label{step:algo:kbas_rkprof:rbas:chooseM}
    \State \(\abas \gets\) \(\shiftr\)-weak Popov basis of \(\modRel{\mmat}{\sys}\)
           \label{step:algo:kbas_rkprof:rbas:end}
    \State \CommentLine{compute residual and indices \(I\) of rows already in kernel}
         \label{step:algo:kbas_rkprof:residual:start}
    \State \(I \gets \{i \in \{1,\ldots,\rdim\} \mid \rdeg[\shiftr]{\matrow{\abas}{i}} < \order \}\) \Comment{rows in kernel}
    \State \(I^c \gets \{1,\ldots,\rdim\} \setminus I\) \Comment{rows expected not to be in kernel}
    \State \(\mat{G} \gets \matrows{\abas}{I^c} \, \sys \, \mmat^{-1}\)
         \label{step:algo:kbas_rkprof:residual:product}
    \Comment{if \(\mmat = \diag{\var^\tau,\ldots,\var^\tau}\), this is \(\var^{-\order}\matrows{\abas}{I^c} \, \sys\)}
    \State if \(\mat{G}\) has zero rows, update \((I,I^c,\mat{G})\) accordingly
         \label{step:algo:kbas_rkprof:residual:end}
    \State \CommentLine{compute kernel of residual recursively and merge results}
            \label{step:algo:kbas_rkprof:reccall:start}
    \State \(\shiftt \gets \rdeg[\shiftr]{\matrows{\abas}{I^c}}-(\gamma,\ldots,\gamma)\) where \(\gamma = \min(\cdeg{\mmat})\)
            \label{step:algo:kbas_rkprof:reccall:shift}
    \State \(\kbas_2,\rps \gets \Call{KernelBasis-RankProfile}{\mat{G},\shiftt}\)
            \label{step:algo:kbas_rkprof:reccall:reccall}
    \State \(\kbas \gets\) matrix formed by both the rows of \(\kbas_2\cdot
    \matrows{\abas}{I^c}\) and those of \(\matrows{\abas}{I}\), sorted by
    increasing \(\shiftr\)-pivot index
            \label{step:algo:kbas_rkprof:reccall:product}
    \State \Return \(\kbas\), \(\rps\)
          \label{step:algo:kbas_rkprof:rbas_return}
  \end{algorithmic}
\end{algorithm}

\begin{example}
  \label{example:continued}
  Let $\sys \in \mathbb{F}_2[x]^{5 \times 5}$ be the matrix from
  \cref{example:main}, and consider the shift \(\shifts=\rdeg{\sys} =
  (8,5,2,8,4)\). At the top level of the recursion, \cref{algo:kbas_rkprof}
  first finds the kernel basis of the $5 \times 2$ submatrix $\sys_1 =
  \matcol{\sys}{1..2}$, via a recursive call. With \(5 \ge 2\cdot 2\), this
  call runs
  \crefrange{step:algo:kbas_rkprof:split_first}{step:algo:kbas_rkprof:split_return},
  with \(\order = \lceil \frac{2\cdot 27}{3} \rceil = 18\). This eventually
  yields
  \[
    \kbas_1 =
    \begin{bmatrix}
      0 & 1 & x^2 + 1 & 0 & 0 \\
      0 & 0 & 0 & 1 & 0 \\
      0 & 0 & 0 & 0 & 1
    \end{bmatrix}
    \text{ and rank profile }
    \rps_1 = (1,2).
  \]
  In this case, using an approximant basis of $\sys_1$ at order \(\order\)
  yields the three above rows of the kernel, and two additional rows (this can
  be observed by running the code in \cref{fig:sage_code} on
  \cpageref{fig:sage_code}). On this specific example it is easily observed
  that \(\rank{\sys_1} = 2\), so one may infer that \(\kbas_1\) and \(\rps_1\)
  are directly deduced from \(\abas\), without running the recursive call at
  \cref{step:algo:kbas_rkprof:reccall:reccall}.

  Multiplying $\kbas_1$ by the last three columns of $\sys$ gives
  \[
  \sys_2 =
  \begin{bmatrix}
    x^5+x^4+x+1 & x+1 & 0\\
    x^8+1 & x^4+1 & 0\\
    x^4+1 & 1 & 0
  \end{bmatrix}
  \in \pmatRing{3}{3}.
  \]
  Since this matrix has \(\rdim<2\cdim\), we then recurse along the first
  column of $\sys_2$, with shift \(\rdeg[\shifts]{\kbas_1} = (5,8,4)\). This
  gives a \((5,8,4)\)-weak Popov basis of the kernel of that column, as
  \[
    \kbas_1' =
    \begin{bmatrix}
      x^3 & 1 & x^3 + 1 \\
        1 & 0 &   x + 1
    \end{bmatrix}
    \in \pmatRing{2}{3}.
  \]
  This also provides the rank profile \((1)\) of that column.

  Multiplying \(\kbas_1'\) by the last two columns of $\sys_2$ gives a zero
  matrix, with the identity as the kernel basis. Hence $\kbas_2 = \kbas_1'$ is
  the sought \((5,8,4)\)-weak Popov basis of $\modKer{\sys_2}$, and the rank
  profile of \(\sys_2\) is \(\rps_2 = (1)\). Then the latter is shifted to
  \(\rps_2 = (1+\lfloor 5/2 \rfloor) = (3)\), to keep track of the position of
  the column block \(\sys_2\) in the input \(\sys\).

  Concatenating \(\rps_1\) and \(\rps_2\) yields the rank profile \((1,2,3)\) of
  \(\sys\), and the product $\kbas_2 \kbas_1$ is the kernel basis \(\kbas\)
  given in \cref{example:main}.
  \qed
\end{example}

\subsection{Proof of correctness}
\label{sec:kbas_rkprof:correctness}

In this subsection, we prove the correctness of \cref{algo:kbas_rkprof}.

\textbf{Cases \(\sys=\matz\) or \(\rdim=1\).} The correctness of
\crefrange{step:algo:kbas_rkprof:zero:start}{step:algo:kbas_rkprof:onerow:end}
is clear.

\textbf{Case \(2\cdim>\rdim\).} Here the algorithm runs
\crefrange{step:algo:kbas_rkprof:split_first}{step:algo:kbas_rkprof:split_return}
and returns. We assume correctness for the recursive calls at
\cref{step:algo:kbas_rkprof:split_rec_one,step:algo:kbas_rkprof:split_rec_two}.
From \(\shifts\ge\rdeg{\sys}\), we get \(\rdeg[\shifts]{\kbas_1} \ge
\rdeg{\kbas_1\sys} \ge \rdeg{\sys_2}\) and the requirement of the call at
\cref{step:algo:kbas_rkprof:split_rec_two} is satisfied.
\cref{lem:weak_popov_multiplication} implies that the matrix \(\kbas_2
\kbas_1\) is in \(\shifts\)-weak Popov form.  Furthermore,
\[
  \kbas_2 \kbas_1 \sys
  = \kbas_2 \kbas_1 [\sys_1 \;\; \matcols{\sys}{\{\lfloor\cdim/2\rfloor+1,\ldots,\cdim\}}]
  = \kbas_2 [\matz \;\; \sys_2]
  = \matz.
\]
To prove that \(\kbas_2\kbas_1\) generates the kernel \(\modKer{\sys}\), we let \(\row{p}
\in \modKer{\sys}\) and prove that \(\row{p} = \row{u} \kbas_2
\kbas_1\) for some \(\row{u} \in \pmatRing{1}{\ell_2}\). Since \(\row{p}\) is
in \(\modKer{\sys_1}\), and \(\kbas_1\) is a basis of the latter
kernel, we have that \(\row{p} = \row{v} \kbas_1\) for some \(\row{v} \in
\pmatRing{1}{\ell_1}\). By construction of \(\sys_2\), \(\row{p} \sys = [\matz
\;\; \row{v} \sys_2]\).  Then \(\row{p}\sys = \matz\) implies \(\row{v} \in
\modKer{\sys_2}\) and,  since \(\kbas_2\) is a basis of the latter kernel, we
have \(\row{v} = \row{u} \kbas_2\) for some \(\row{u} \in
\pmatRing{1}{\ell_2}\). This yields \(\row{p} = \row{u}
\kbas_2\kbas_1\). Thus  \(\kbas_2\kbas_1\) is an \(\shifts\)-weak
Popov basis of \(\modKer{\sys}\).

In order to prove that \((\rps_1,\rps_2)\) is the rank profile of \(\sys\), the
main observation is that since \(\kbas_1\) has full row rank, it can be
completed into a nonsingular matrix \(\mat{U} = [\begin{smallmatrix} \anyMat \\
\kbas_1 \end{smallmatrix}] \in \pmatRing{\rdim}{\rdim}\). Then,
\[
  \mat{U} \sys
  =
  \begin{bmatrix}
    \mat{V} & \anyMat \\
    \matz & \sys_2
  \end{bmatrix}
  \text{ for some } \mat{V} \in \pmatRing{\rk_1}{\lfloor \cdim/2 \rfloor}.
\]
\(\mat{V}\) has the same rank profile as \([\begin{smallmatrix} \mat{V} \\
  \matz \end{smallmatrix}]\) and, since \(\mat{U}\) is nonsingular,
  \(\mat{U}\sys_1 = [\begin{smallmatrix} \mat{V} \\ \matz \end{smallmatrix}]\)
  has the same rank profile as \(\sys_1\), which is \(\rps_1\). In
  particular, \(\mat{V}\) has full row rank, and then the triangular form of
  \(\mat{U}\sys\) implies that its rank profile is the concatenation of
  \(\rps_1\) and of \(\rps_2\), the latter being the rank profile of \(\sys_2\)
  shifted by adding \(\lfloor \cdim/2 \rfloor\) to all entries. Since
  \(\mat{U}\sys\) and \(\sys\) have the same rank profile,
  \(\sys\) has rank profile \((\rps_1,\rps_2)\).

\textbf{Case \(\cdim \le \rdim/2\).} The algorithm runs
\crefrange{step:algo:kbas_rkprof:rbas_params:start}{step:algo:kbas_rkprof:rbas_return}
and returns.

The basis \(\abas\) of \(\modRel{\mmat}{\sys}\) at
\cref{step:algo:kbas_rkprof:rbas:end} is such that \(\abas \sys = \mat{Q} \mmat\)
for some \(\mat{Q} \in \pmatRing{\rdim}{\cdim}\).  Since both \(\abas\) and
\(\mmat\) are nonsingular, with  \(\mmat\) being upper triangular, implies that
\(\mat{Q} = \abas\sys\mmat^{-1}\) has the same rank profile as
\(\sys\).  By the construction at
\crefrange{step:algo:kbas_rkprof:residual:start}{step:algo:kbas_rkprof:residual:end}, we see that there is an 
\(\rdim\times\rdim\) permutation matrix \(\mat{P}\) such that
\[
  \mat{P} \mat{Q}
  =
  \mat{P} \abas \sys \mmat^{-1}
  = 
  \begin{bmatrix}
    \matrows{\abas}{I^c} \sys \mmat^{-1} \\
    \matrows{\abas}{I} \sys \mmat^{-1}
  \end{bmatrix}
  =
  \begin{bmatrix}
    \mat{G} \\
    \matz
  \end{bmatrix}.
\]
Thus \(\mat{G}\) has the same rank profile as \(\mat{P}\mat{Q}\), and hence 
 the same rank profile as \(\sys\). Thus, to conclude the proof for the
rank profile, it suffices to verify that the recursive call at
\cref{step:algo:kbas_rkprof:reccall:reccall} computes the rank profile of \(\mat{G}\),
which is true provided that \(\shiftt\) satisfies the requirements \(\shiftt
\ge \tuple{0}\) and \(\shiftt \ge \rdeg{\mat{G}}\). We prove this in the next
paragraph.

Observe that the shift built at \cref{step:algo:kbas_rkprof:rbas_params:end}
satisfies \(\shiftr \ge \rdeg{\sys}\). This implies
\(\rdeg[\shiftr]{\matrows{\abas}{I^c}} \ge \rdeg{\matrows{\abas}{I^c} \sys}\) and,
defining \(\cdegs = \cdeg{\mmat}\),
\begin{align*}
  \shiftt \ge \rdeg{\matrows{\abas}{I^c} \sys} & - (\gamma,\ldots,\gamma) = \rdeg{\mat{G}\mmat} - (\gamma,\ldots,\gamma) \\
                                                                      & = \rdeg[(-\gamma,\ldots,-\gamma)]{\mat{G}\mmat}
                                                                       \ge \rdeg[-\cdegs]{\mat{G}\mmat},
\end{align*}
where the last inequality comes from \((\gamma,\ldots,\gamma) \le \cdegs\). Now
the fact that \(\mmat\) is in Hermite form ensures that it is in
\(-\cdegs\)-reduced form with \(\rdeg[-\cdegs]{\mmat} = \shiftz\), so that the
predictable degree property yields \(\rdeg[-\cdegs]{\mat{G}\mmat} =
\rdeg{\mat{G}}\). Thus \(\shiftt \ge \rdeg{\mat{G}}\), and \(\shiftt
\ge \shiftz\) follows since \(\mat{G}\) has no zero row by construction.

This also ensures that \(\kbas_2\) is a \(\shiftt\)-weak Popov basis of
\(\modKer{\mat{G}} = \modKer{\matrows{\abas}{I^c} \sys}\). 
\cref{lem:predictable_pivot,lem:weak_popov_multiplication} then imply that \(\kbas_2
\matrows{\abas}{I^c}\) is \(\shiftr\)-weak Popov, with \(\shiftr\)-pivot index a
subset of that of \(\matrows{\abas}{I^c}\). Since the latter is disjoint from the
\(\shiftr\)-pivot index of \(\matrows{\abas}{I}\) and since \(\kbas_2
\matrows{\abas}{I^c}\) and \(\matrows{\abas}{I}\) are both \(\shiftr\)-weak
Popov, it follows that \(\kbas\) is \(\shiftr\)-weak Popov. Since
\(\shifts\) and \(\shiftr\) only differ by a constant, \(\kbas\) is
\(\shifts\)-weak Popov.

By construction, \(\matrows{\abas}{I} \sys = \matz\) and \(\matz = \kbas_2
\mat{G} = \kbas_2 \matrows{\abas}{I^c} \sys\), and so  \(\kbas \sys = \matz\). It
remains to prove that any \(\row{p} \in \modKer{\sys}\) is a
\(\polRing\)-linear combination of the rows of \(\kbas\). Since \(\row{p}\) is
in \(\modRel{\mmat}{\sys}\), we get \(\row{p} = \row{q} \abas =
\row{q}_I \matrows{\abas}{I} + \row{q}_I^c \matrows{\abas}{I^c}\) for some
\(\row{q} \in \pmatRing{1}{\rdim}\) and its subvectors \(\row{q}_I\) and
\(\row{q}_I^c\) with indices in \(I\) and \(I^c\), respectively. Then,
\[
  \matz = \row{p} \sys
  = \row{q}_I \matrows{\abas}{I} \sys + \row{q}_I^c \matrows{\abas}{I^c}\sys
  = \row{q}_I^c \matrows{\abas}{I^c}\sys,
\]
which gives
\(\row{q}_I^c \in \modKer{\matrows{\abas}{I^c}\sys} = \modKer{\mat{G}\mmat} =
\modKer{\mat{G}}\). Therefore \(\row{q}_I^c = \row{r} \kbas_2\) for some vector
\(\row{r}\), and we get \(\row{p} = \row{q}_I \matrows{\abas}{I} +
\row{r}\kbas_2 \matrows{\abas}{I^c}\).

\emph{Remark:} As one can see above, \(\mmat\) is required to be in Hermite
normal form only for ensuring that the rank profile is not modified when
right-multiplying by \(\mmat\). Hence, if one is only interested in a kernel
basis, any column reduced matrix \(\mmat\) will do.

\subsection{Proof of complexity}
\label{sec:kbas_rkprof:complexity}

The efficiency is based on three main ingredients. First, a fast algorithm for
computing an \(\shifts\)-weak Popov basis of \(\modRel{\mmat}{\sys}\). Second,
the fast multiplication of matrices which have unbalanced, but controlled,
shifted row degrees. Third, the next lemma, which is a generalization and
variant of \cite[Thm.\,3.6]{ZhLaSt12}: it states that the relation basis at
\cref{step:algo:kbas_rkprof:rbas:end} yields a substantial amount of kernel
rows, effectively reducing the number of rows that remain to be found.

\begin{lemma}
  \label{lem:partial_kernel_via_approx:ker}
  Let $\sys \in \pmatRing{\rdim}{\cdim}$ have rank \(r\), and let $\shifts\in
  \NN^\rdim$ such that \(\shifts \ge \rdeg{\sys}\). Let \(\kbas \in
  \pmatRing{(\rdim-\rk)}{\rdim}\) be an \(\shifts\)-reduced basis of
  \(\modKer{\sys}\). For any \(k>0\), at most \(\lfloor k \rfloor\) rows of
  \(\kbas\) have \(\shifts\)-degree more than or equal to \(\order = \lceil
  \sumTuple{\shifts}/k \rceil\). Then, let \(\mmat \in
  \pmatRing{\cdim}{\cdim}\) be column reduced with \(\min(\cdeg{\mmat}) \ge
  \order\). For any \(\shifts\)-reduced basis \(\abas \in
  \pmatRing{\rdim}{\rdim}\) of \(\modRel{\mmat}{\sys}\), at most \(\rk+\lfloor
  k \rfloor\) rows of \(\abas\) are not in \(\modKer{\sys}\).
\end{lemma}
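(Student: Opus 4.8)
The statement naturally splits into two parts, and I would handle them in sequence. First, the claim about an $\shifts$-reduced kernel basis $\kbas$: using \cref{thm:degree_rank} (which applies since $\sys$ factors through a row basis $\mat R$, and a reduced basis is also in weak Popov form up to reordering — actually I should be careful and argue directly via $\shifts$-reducedness rather than weak Popov, see below), I have $\sumTuple{\rdeg[\shifts]{\kbas}} \le \sumTuple{\shifts}$. If $t$ rows had $\shifts$-degree at least $\order = \lceil \sumTuple{\shifts}/k\rceil$, then $\sumTuple{\rdeg[\shifts]{\kbas}} \ge t \cdot \order \ge t \sumTuple{\shifts}/k$, forcing $t \le k$, hence $t \le \lfloor k\rfloor$ since $t$ is an integer. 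One subtlety: \cref{thm:degree_rank:rdeg} is stated for weak Popov bases, so I would either invoke that any $\shifts$-reduced basis can be permuted and column-operated into weak Popov form with the same multiset of $\shifts$-row degrees, or (cleaner) observe that $\sumTuple{\rdeg[\shifts]{\kbas}} \le \sumTuple{\shifts}$ is a property of the minimal row-degree sum, which is invariant across all $\shifts$-reduced bases of the same module.

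For the second part, the key is the containment $\modKer{\sys} \subseteq \modRel{\mmat}{\sys}$: any $\row p$ with $\row p \sys = \matz$ certainly satisfies $\row p \sys = \matz \bmod \mmat$. So the rows of $\abas$ that lie in $\modKer{\sys}$ form a submodule, and I want to show the "overflow" is small. The plan is to split $\abas$ according to whether each row is in $\modKer{\sys}$: let $\abas_K$ be the rows in the kernel and $\abas_N$ the others, with $|\abas_N| = \nu$ the quantity to bound. Since $\abas$ is an $\shifts$-reduced basis of $\modRel{\mmat}{\sys}$, and $\modRel{\mmat}{\sys}$ has rank $\rdim$ while $\modKer{\sys}$ has rank $\rdim - \rk$, at least $\rdim - \rk$ of the $\rdim$ rows are "essentially" needed to span the kernel part; more precisely, I would argue that the rows $\abas_N$ together with a basis of $\modKer{\sys}$ span $\modRel{\mmat}{\sys}$ up to finite index, so $\nu \ge \rdim - (\rdim - \rk) = \rk$ is automatic — but that's the wrong direction. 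The right approach: I expect the paper considers $\syss = \abas \sys \mmat^{-1}$ (well-defined over $\polRing$ since $\abas \sys \equiv \matz \bmod \mmat$), so $\abas \sys = \syss \mmat$; a row $\matrow{\abas}{i}$ is in $\modKer{\sys}$ iff $\matrow{\syss}{i} = \matz$. Then $\abas$ being $\shifts$-reduced gives, via the predictable degree property applied to $\abas \sys = \syss \mmat$, that $\rdeg[\shifts]{\matrow{\abas}{i}}$ controls $\rdeg{\matrow{\syss}{i}} + \order$ from below for the nonzero rows of $\syss$; summing and using $\sumTuple{\rdeg[\shifts]{\abas}} \le \sumTuple{\shifts} + (\text{something})$ — here I need a bound on $\sumTuple{\rdeg[\shifts]{\abas}}$ for a reduced relation basis, which should come from the determinant degree of $\abas$ being $\le \deg\det\mmat \le \cdim \deg\mmat$, combined with $\shifts \ge \rdeg\sys$ and a Mahler-type/orbit argument. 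Then the nonzero rows of $\syss$, of which there are $\nu$, each contribute at least $\order$ to a sum bounded by roughly $\sumTuple{\shifts} + \cdim\deg\mmat$, but we must subtract off the contribution of the $\rdim - \rk$ "cheap" kernel rows, leaving $\nu \cdot \order \le \sumTuple{\shifts} + (\text{correction})$; matching this against $\order = \lceil \sumTuple{\shifts}/k\rceil$ should yield $\nu \le \rk + \lfloor k\rfloor$, where the $\rk$ accounts for the fact that $\abas$ has $\rdim$ rows but only $\rdim - \rk$ of them can be made kernel rows in an optimal basis, so $\rk$ of them are "forced" to be non-kernel regardless of degree.

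**Main obstacle.** The delicate point is the bookkeeping in the second part: precisely controlling $\sumTuple{\rdeg[\shifts]{\abas}}$ for an arbitrary $\shifts$-reduced relation basis (not the Popov one), and correctly isolating the "$+\rk$" term — i.e., showing that after discarding $\rk$ rows of $\abas$ one is in a situation genuinely analogous to the kernel-basis case of the first part, so that the degree-counting argument (each surviving non-kernel row costs at least $\order$) applies cleanly. Concretely, I expect one must exhibit, inside $\abas$, a subset of $\rdim - \rk$ rows whose images under $\row p \mapsto \row p \sys \mmat^{-1}$ are small (degree $< \order$), using \cref{lem:partial_kernel_via_approx:ker}'s own first assertion applied to the kernel basis together with a comparison between $\abas$ and a kernel basis completed to a relation basis; handling the change of basis between these two (which is unimodular by the usual kernel-basis / relation-basis saturation arguments as in the proof of \cref{lem:ker_submatrix}) without losing control of degrees is the crux. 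Everything else — the containment, the predictable-degree applications via \cref{lem:predictable_degree}, and the final arithmetic with ceilings and floors — is routine.
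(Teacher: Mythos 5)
Your first half is correct and is essentially the paper's argument: the bound $\sumTuple{\rdeg[\shifts]{\kbas}}\le\sumTuple{\shifts}\le k\order$ from \cref{thm:degree_rank:rdeg} of \cref{thm:degree_rank} forces at most $\lfloor k\rfloor$ rows of $\kbas$ to have $\shifts$-degree at least $\order$, and your remark that the sum of $\shifts$-row degrees is the same for every $\shifts$-reduced basis of the module correctly disposes of the weak-Popov-versus-reduced issue.

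The second half has a genuine gap. The global degree-sum you propose cannot deliver the bound $\rk+\lfloor k\rfloor$: for an $\shifts$-reduced basis one has $\sumTuple{\rdeg[\shifts]{\abas}}=\sumTuple{\shifts}+\deg(\det(\abas))$, and for a general column reduced $\mmat$ with $\min(\cdeg{\mmat})\ge\order$ the quantity $\deg(\det(\abas))$ is only bounded by $\deg(\det(\mmat))$, which may greatly exceed $\rk\order$ (already when the quotient of $\pmatRing{1}{\cdim}$ by the row space of $\mmat$ is cyclic of dimension $\deg(\det(\mmat))$ and a single row of $\sys$ generates it); charging $\order$ to each non-kernel row then yields roughly $k+\cdim$, not $k+\rk$, and you explicitly leave the isolation of the ``$+\rk$'' term unresolved. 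The missing ingredient is much lighter than the unimodular-completion bookkeeping you anticipate: it is the \emph{minimality of the sorted $\shifts$-row degrees of a reduced basis}. By the first half, $\kbas$ has $\sigma\ge\rdim-\rk-\lfloor k\rfloor$ rows of $\shifts$-degree $<\order$; these are $\polRing$-linearly independent elements of $\modRel{\mmat}{\sys}$ since $\modKer{\sys}\subseteq\modRel{\mmat}{\sys}$; hence \emph{any} $\shifts$-reduced basis $\abas$ of $\modRel{\mmat}{\sys}$ must itself contain at least $\sigma$ rows of $\shifts$-degree $<\order$. Combined with the pointwise fact you do identify (every $\row{p}\in\modRel{\mmat}{\sys}$ with $\rdeg[\shifts]{\row{p}}<\order$ lies in $\modKer{\sys}$, via $\rdeg{\row{p}\sys}\le\rdeg[\shifts]{\row{p}}<\order$ together with the predictable degree property of the column reduced $\mmat$), this shows that at least $\sigma$ rows of $\abas$ are kernel elements, so at most $\rdim-\sigma\le\rk+\lfloor k\rfloor$ are not --- which is exactly the paper's proof. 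No completion of $\kbas$ to a relation basis, and no degree tracking through a change of basis, is needed.
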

\begin{proof}
  If \(\rho\) is the number of rows of \(\kbas\) whose \(\shifts\)-degree is
  \(\ge \order\), then \(\sumTuple{\rdeg[\shifts]{\kbas}} \ge \rho \order\).
  Thus, from the bound \(\sumTuple{\rdeg[\shifts]{\kbas}} \le
  \sumTuple{\shifts} \le k\order\) (see \cref{thm:degree_rank:rdeg} of
  \cref{thm:degree_rank}), we get \(\rho \le k\). It follows that there are
  \(\sigma = \rdim-\rk-\rho \ge \rdim-\rk-k\) rows of \(\kbas\) whose
  \(\shifts\)-degree is \(<\order\).

  We claim that, as a consequence, there are at least \(\sigma\) rows of
  \(\abas\) which are in \(\modKer{\sys}\). First, since all rows of \(\kbas\)
  are also in \(\modRel{\mmat}{\sys}\), by minimality of the \(\shifts\)-degree
  of \(\abas\), there are at least \(\sigma\) rows of \(\abas\) which have
  \(\shifts\)-degree \(<\order\). The claim then follows from the fact that any
  \(\row{p} \in \modRel{\mmat}{\sys}\) such that \(\rdeg[\shifts]{\row{p}} <
  \order\) is in \(\modKer{\sys}\). Indeed, \(\row{p} \sys = \row{q} \mmat\)
  for some \(\row{q} \in \pmatRing{1}{\cdim}\). On the one hand,
  \(\rdeg{\row{p}\sys} \le \rdeg[\shifts]{\row{p}} < \order\), since
  \(\shifts\ge\rdeg{\sys}\). On the other hand, \(\mmat\) is column reduced
  with \(\cdegs=\cdeg{\mmat}\), and thus it is also \(-\cdegs\)-reduced with
  \(\rdeg[-\cdegs]{\mmat} = \shiftz\). Hence, assuming \(\row{p}\sys\neq\matz\) is
  nonzero (and therefore \(\row{q}\neq\matz\)), and using \(\order \le
  \min(\cdegs)\) as well as the predictable degree property, we obtain
  \begin{align*}
    \order > \rdeg{\row{p}\sys} = \rdeg{\row{q} \mmat} & = \order + \rdeg[(-\order,\ldots,-\order)]{\row{q} \mmat} \\
                                              & \ge \order + \rdeg[-\cdegs]{\row{q} \mmat} 
                                               = \order + \rdeg{\row{q}} \ge \order.
  \end{align*}
  This is a contradiction, hence \(\row{p}\sys = \matz\), i.e.~\(\row{p} \in
  \modKer{\sys}\).

  In conclusion, \(\abas\) has at most \(\rdim - \sigma \le \rk+k\) rows not in
  \(\modKer{\sys}\).
\end{proof}

Note that if we take  \(k\le \frac{\rdim-\rk}{2}\), then this number is \(\rk+k \le
\frac{\rdim+\rk}{2}\). For example, if \(\cdim<\rdim\) and no information on
\(\rk\) is known, one can take \(k=\frac{\rdim-\cdim}{2}\), and then \(\rk+k =
\frac{\rdim+\cdim}{2}\). This is the choice made in \cref{algo:kbas_rkprof},
which leads to \(\order=\lceil 2\sumTuple{\shifts}/(\rdim-\cdim) \rceil\).
Furthermore, since \(2\cdim\le \rdim\) in that algorithm, we obtain \(\rk+k =
\frac{\rdim+\cdim}{2} \le \frac{3\rdim}{4}\).

\begin{corollary}
  \label{cor:size_rbas_residual}
  At \cref{step:algo:kbas_rkprof:residual:product} of \cref{algo:kbas_rkprof},
  the matrix \(\mat{G}\) has at most \(\frac{3\rdim}{4}\) rows.
\end{corollary}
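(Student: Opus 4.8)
The plan is to observe that, at \cref{step:algo:kbas_rkprof:residual:product}, the matrix $\mat{G} = \matrows{\abas}{I^c}\,\sys\,\mmat^{-1}$ has exactly $\card{I^c}$ rows, where $I^c$ is the set of indices $i$ with $\rdeg[\shiftr]{\matrow{\abas}{i}} \ge \order$ formed at \crefrange{step:algo:kbas_rkprof:residual:start}{step:algo:kbas_rkprof:residual:end} (the subsequent clean-up step only removes rows), and then to bound $\card{I^c}$ by means of \cref{lem:partial_kernel_via_approx:ker}. First I would check that, in the branch $\cdim \le \rdim/2$ reached here, the algorithm has set up exactly the data required by that lemma: the shift $\shiftr$ produced at \cref{step:algo:kbas_rkprof:rbas_params:end} lies in $\NN^\rdim$ and satisfies $\shiftr \ge \rdeg{\sys}$; the matrix $\mmat$ chosen at \cref{step:algo:kbas_rkprof:rbas:chooseM} is column reduced, being in Hermite form, with $\min(\cdeg{\mmat}) \ge \order$; and the matrix $\abas$ from \cref{step:algo:kbas_rkprof:rbas:end} is an $\shiftr$-weak Popov --- hence $\shiftr$-reduced --- basis of $\modRel{\mmat}{\sys}$, of size $\rdim \times \rdim$ since that module has rank $\rdim$. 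With the choice $k = (\rdim-\cdim)/2$, which is positive because $\cdim < \rdim$ here, one has $\lceil \sumTuple{\shiftr}/k\rceil = \lceil 2\sumTuple{\shiftr}/(\rdim-\cdim)\rceil = \order$, so the threshold in \cref{lem:partial_kernel_via_approx:ker} matches the algorithm's $\order$.

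Next I would invoke the estimate established inside the proof of \cref{lem:partial_kernel_via_approx:ker}: at least $\rdim - \rk - \lfloor k\rfloor$ rows of $\abas$ have $\shiftr$-degree strictly less than $\order$. Since $I$ is precisely the set of such rows, this gives $\card{I} \ge \rdim - \rk - \lfloor k\rfloor$, hence $\card{I^c} = \rdim - \card{I} \le \rk + \lfloor k\rfloor \le \rk + (\rdim-\cdim)/2$. To conclude, I would use that in this branch $\cdim \le \rdim/2 \le \rdim$, so $\rk = \rank{\sys} \le \min(\rdim,\cdim) = \cdim$; therefore $\card{I^c} \le \cdim + (\rdim-\cdim)/2 = (\rdim+\cdim)/2 \le 3\rdim/4$, the last inequality again using $\cdim \le \rdim/2$. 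This is the claimed bound on the number of rows of $\mat{G}$.

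The one delicate point --- the main thing to get right rather than a genuine obstacle --- is that \cref{lem:partial_kernel_via_approx:ker} is phrased as a bound on the number of rows of $\abas$ that are \emph{not} in $\modKer{\sys}$, whereas $I^c$ is defined by an $\shiftr$-degree threshold. These two sets need not coincide: a reduced relation basis may contain kernel rows of large $\shiftr$-degree, so in general $\{\text{rows of }\abas\text{ not in }\modKer{\sys}\} \subseteq I^c$ without equality, and the headline statement of the lemma does not by itself bound $\card{I^c}$. What does the job is the complementary count that appears along the way in the lemma's proof --- the lower bound on the number of rows of $\abas$ of $\shiftr$-degree $<\order$, obtained by comparing $\abas$ with an $\shiftr$-reduced kernel basis $\kbas$ and invoking $\sumTuple{\rdeg[\shiftr]{\kbas}} \le \sumTuple{\shiftr}$ from \cref{thm:degree_rank:rdeg}. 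So in writing the proof I would cite that intermediate inequality rather than the final sentence of \cref{lem:partial_kernel_via_approx:ker}; everything else is the bookkeeping of matching $\order$ and the chain $\rk \le \cdim \le \rdim/2$.
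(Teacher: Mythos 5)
Your proposal is correct and follows essentially the same route as the paper, which derives the corollary from \cref{lem:partial_kernel_via_approx:ker} with the choice \(k=(\rdim-\cdim)/2\) and the chain \(\card{I^c}\le\rk+\lfloor k\rfloor\le\cdim+(\rdim-\cdim)/2=(\rdim+\cdim)/2\le 3\rdim/4\). Your observation that the lemma's headline statement bounds only the non-kernel rows (a subset of \(I^c\)) and that one must invoke the intermediate count of rows of \(\shiftr\)-degree \(<\order\) from the lemma's proof is a legitimate and careful refinement of a point the paper glosses over.
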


This lemma ensures that, when the algorithm enters
\crefrange{step:algo:kbas_rkprof:rbas_params:start}{step:algo:kbas_rkprof:rbas_return},
then the number of rows becomes at most \(3\rdim/4\) in the recursive call (and
the number of columns is unchanged). On the other hand, when the algorithm
enters
\crefrange{step:algo:kbas_rkprof:split_first}{step:algo:kbas_rkprof:split_return},
then the number of columns becomes at most \(\lceil \cdim/2 \rceil\) in each of
the two recursive calls (and the number of rows remains bounded from above by
\(\rdim\)).

Note that we have proved in \cref{sec:kbas_rkprof:correctness} that in each
recursive call, the input shift is an upper bound on the row degrees of the
input matrix. Now, we observe further that each of these shifts has a sum of
entries at most \(\sumTuple{\shifts}\). This is clear at
\cref{step:algo:kbas_rkprof:split_rec_one} which uses the input shift
\(\shifts\), and at \cref{step:algo:kbas_rkprof:split_rec_two} since the shift
satisfies \(\sumTuple{\rdeg[\shifts]{\kbas_1}} \le \sumTuple{\shifts}\)
according to \cref{thm:degree_rank:rdeg} of \cref{thm:degree_rank}.  Now, at
\cref{step:algo:kbas_rkprof:reccall:shift,step:algo:kbas_rkprof:reccall:reccall},
the shift \(\shiftt\) has entries at most those of the subtuple
\(\shifts_{I^c}\), since
\[
  \rdeg[\shiftr]{\abas} - (\gamma,\ldots,\gamma)
  = \shifts + \pivDegs - (\gamma+\mu,\ldots,\gamma+\mu)
\]
where \(\pivDegs\) is the \(\shifts\)-pivot degree of \(\abas\), with
\(\pivDegs \le (\gamma,\ldots,\gamma)\) under our assumption \(\cdeg{\mmat} =
(\gamma,\ldots,\gamma)\).

Recall the notation \(\dd = \sumTuple{\shifts}\), and note that \(\rdim \lceil
\dd/\rdim \rceil = \Theta(\rdim + \dd)\).

Based on \cite[Thm.\,3.7]{ZhLaSt12}, one can verify that the matrix products at
\cref{step:algo:kbas_rkprof:split_residual,step:algo:kbas_rkprof:split_return,step:algo:kbas_rkprof:residual:product,step:algo:kbas_rkprof:reccall:product},
use \(\softO{\rdim^{\expmm-2} (\rdim+\cdim) (\rdim+\dd)}\) operations in
\(\field\). Note that the right-multiplication by \(\mmat^{-1}\) at
\cref{step:algo:kbas_rkprof:residual:product} is only a matter of univariate
polynomial exact division: the matrix \(\matrows{\abas}{I^c} \, \sys\) is known
to be a left multiple of \(\mmat\) by construction, and the latter matrix is
diagonal by assumption.

It remains to observe that the computation of \(\abas\) at
\cref{step:algo:kbas_rkprof:rbas:end} costs \(\softO{\rdim^{\expmm-1} \cdim
\order}\) operations in \(\field\), since \(\mmat\) is diagonal
\cite[Thm.\,1.4]{Neiger16}.
Since we are in the case \(\cdim \le \rdim/2\), we have \(\rdim-\cdim \ge
\rdim/2\) and thus
\[
  \order =
  \left\lceil \frac{2 \sumTuple{\shiftr}}{\rdim-\cdim} \right\rceil
  \in \bigO{1 + \frac{\dd}{\rdim-\cdim}}
  \subseteq \bigO{1+\frac{\dd}{\rdim}}.
\]
Hence the above cost for computing \(\abas\) is in \(\softO{\rdim^{\expmm-2}
\cdim (\rdim+\dd)}\).

We conclude that all computations apart from recursive calls use a total of
\(\softO{\rdim^{\expmm-2} (\rdim+\cdim) (\rdim+\dd)}\) operations in
\(\field\), leading to the cost bound announced in \cref{thm:algo:kbas_rkprof}.

\section{Finding the column rank profile and linearly independent rows}
\label{sec:rkprof_indeprows}

Let $\sys$ be a polynomial matrix of rank $\rk$. This section presents a
rank-sensitive algorithm to find both the rank profile of \(\sys\) and a set of
\(\rk\) rows of \(\sys\) which are \(\polRing\)-linearly independent. In
particular, this information locates an \(\rk\times\rk\) nonsingular submatrix
of $\sys$.

\subsection{Algorithm}
\label{sec:rkprof_indeprows:algo}

The idea is to maintain a subset \(U\) of the top rows of \(\sys\), which are
known to have full rank, and to incorporate new rows from the bottom part of
\(\sys\). Precisely, \(U\) locates \(k\) rows with index in
\((1,\ldots,\stind-1)\), and the next step finds a set of rows of maximal rank
in the matrix \(\mat{G}\) formed by joining these \(k\) rows
\(\matrow{\sys}{U}\) with the \(k\) rows with indices
\((\stind,\ldots,\stind+k-1)\) of \(\sys\) (or only up to \(\rdim\) if
\(\stind+k-1 \ge m\)).

Finding a set of rows of maximal rank of \(\mat{G}\) is done efficiently via
\cref{algo:kbas_rkprof} and the property in \cref{thm:degree_rank:rank} of
\cref{thm:degree_rank}, which locates independent rows from the
\(\shifts\)-pivot index of the kernel basis. Since the call to
\cref{algo:kbas_rkprof} also provides the column rank profile of \(\mat{G}\),
we eventually obtain \(I\) and \(J\) identifying a nonsingular submatrix of
\(\sys\) of size \(\rk \times \rk\), with \(J\) the rank profile of
\(\matrow{\sys}{I}\). By \cref{lem:ker_crp_equal}, the latter is also the rank
profile of \(\sys\).

Starting with \(k=1\) and \(U\) locating the first nonzero row of \(\sys\),
this leads to a rank-sensitive algorithm, which at any stage considers a
submatrix of \(\sys\) with \(\cdim\) columns and at most \(2k \le 2\rk\) rows.

\begin{algorithm}[t]
  \caption{\textsc{ColumnRankProfile}\((\sys, \stind, U)\)}
  \label{algo:column_rank_profile}
  \begin{algorithmic}[1]
    \Require{a matrix \(\sys \in \pmatRing{\rdim}{\cdim}\),
      an integer \(\stind \in \{1,\ldots,\rdim+1\}\),
      a list $U \subseteq \{1,\ldots,\stind-1\}$ of size \(k\ge 0\)}
    \Assume{\(k=0\) or \(\rank{\matrow{\sys}{U}} = \rank{\matrow{\sys}{1..\stind-1}} = k\)}
    \Ensure{lists \(I \subseteq \{1,\ldots,\rdim\}\) and \(J \subseteq
      \{1,\ldots,\cdim\}\), both of size \(\rk = \rank{\sys}\), such that
      \(\matsub{\sys}{I}{J} \in \pmatRing{\rk}{\rk}\) is nonsingular and \(J\)
      is the rank profile of \(\sys\)}
   
    \State\InlineIf{\(k=\cdim\)}{\Return \(U,(1,\ldots,\cdim)\)}
        \label{step:algo:column_rank_profile:k_is_n}

    \If{\(k=0\)} \label{step:algo:column_rank_profile:k_is_0}
      \State \(i \gets\) index of the first nonzero row of \(\sys\)
          \label{step:algo:column_rank_profile:k_is_0_find_row}
      \State \Return \(\Call{ColumnRankProfile}{\sys,i+1,(i)}\)
          \label{step:algo:column_rank_profile:k_is_0_return}
    \EndIf

    \State\CommentLine{\(k>0\) independent rows are known among the rows \(1,\ldots,\stind-1\)}

    \State \(\ell \gets \min(k, \rdim-\stind+1)\)
        \Comment{now incorporate rows \(\stind,\ldots,\stind+\ell-1\)}
          \label{step:algo:column_rank_profile:choose_ell}

    \State \(V \gets U \cup (\stind,\stind+1,\ldots,\stind+\ell-1)\)
          \label{step:algo:column_rank_profile:build_V}

    \State \(\mat{G} \gets \matrow{\sys}{V} \in \pmatRing{(k+\ell)}{\cdim}\); \(\shifts \gets \rdeg{\mat{G}}\)
          \label{step:algo:column_rank_profile:build_mat_shift}

    \State \(\kbas, \rps \gets \Call{KernelBasis-RankProfile}{\mat{G}, \shifts}\)
          \label{step:algo:column_rank_profile:kbas_rkprof}

    \State \((\pivInds,\pivDegs)\) \(\gets\) \(\shifts\)-pivot
          profile of \(\kbas\); \(\nonPivInds \gets \{1,\ldots,k+\ell\}\setminus\pivInds\)
          \label{step:algo:column_rank_profile:nonpivinds}

    \State \(U' \gets \emptyset\)
          \label{step:algo:column_rank_profile:update_start}
    \For{\(i \in \nonPivInds\)}
      \State\InlineIf{\(i\le k\)}{add the \(i\)th element of \(U\) to \(U'\)}
      \State\InlineElse{add \(\stind+i-k-1\) to \(U'\)}
          \label{step:algo:column_rank_profile:update_end}
    \EndFor

    \State \(\stind' \gets \stind+\ell\)
          \label{step:algo:column_rank_profile:update_stind}

    \State\InlineIf{\(\stind'>\rdim\)}{\Return \(U',\rps\)}
          \label{step:algo:column_rank_profile:exit_if_done}

    \State\Return \(\Call{ColumnRankProfile}{\sys, \stind', U'}\)
          \label{step:algo:column_rank_profile:rec_call}
  \end{algorithmic}
\end{algorithm}

\begin{theorem}
  \label{thm:algo:column_rank_profile}
  Let \(\sys \in \pmatRing{\rdim}{\cdim}\) have rank \(\rk\).  Assume that
  \(\rdeg{\sys}\) is nondecreasing, and that \((\stind,U)\) satisfies the input
  requirements. Then \(\Call{ColumnRankProfile}{\sys,\stind,U}\) uses
  \({\softO{\rk^{\omega-2} \cdim (\rdim+D)}}\) operations in \(\field\), where
  \(\dd\) is the sum of the nonnegative entries of \(\rdeg{\sys}\).  It returns
  lists \(I \subseteq \{1,\ldots,\rdim\}\) and \(J \subseteq
  \{1,\ldots,\cdim\}\), both of size \(\rk\), such that \(\matsub{\sys}{I}{J}
  \in \pmatRing{\rk}{\rk}\) is nonsingular and \(J\) is the rank profile of
  \(\sys\).
\end{theorem}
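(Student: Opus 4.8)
The plan is to establish correctness and the cost bound separately, in both cases by induction following the recursion of \cref{algo:column_rank_profile}; this recursion is finite, since each call with $k\ge1$ strictly increases $\stind$ (by $\ell\ge1$, while $\ell=0$ forces an immediate return) and $\stind\le\rdim+1$ throughout, whereas the branch $k=0$ is a one‑time preprocessing step reducing to a call with $k=1$ and $U=(i_0)$, $i_0$ the first nonzero row. The hypothesis that $\rdeg{\sys}$ is nondecreasing is used in two ways. First, it forces the zero rows of $\sys$ to form an initial segment, so every row of index $\ge i_0$ — in particular every row ever placed in $U$, since the indices produced at \crefrange{step:algo:column_rank_profile:update_start}{step:algo:column_rank_profile:update_end} are always $\ge i_0$ — is nonzero; hence $\mat{G}$ at \cref{step:algo:column_rank_profile:build_mat_shift} has no zero row, so $\shifts=\rdeg{\mat{G}}\in\NN^{k+\ell}$ and the call at \cref{step:algo:column_rank_profile:kbas_rkprof} meets the precondition of \cref{thm:algo:kbas_rkprof}. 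Second, it yields the degree comparison that controls the cost (below).

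For correctness, I would propagate the invariant that $U$ locates $k=\rank{\matrow{\sys}{U}}=\rank{\matrow{\sys}{1..\stind-1}}$ rows of $\sys$ that are $\polRing$‑linearly independent and span, over $\field(\var)$, the row space of $\matrow{\sys}{1..\stind-1}$. The base case $k=\cdim$ is immediate: then $\rk=\cdim$, the rank profile is forced to be $(1,\ldots,\cdim)$, and $\matrow{\sys}{U}$ is the sought nonsingular $\cdim\times\cdim$ submatrix. In the recursive step, with $V=U\cup(\stind,\ldots,\stind+\ell-1)$ and $\stind'=\stind+\ell$, the invariant gives $\rank{\matrow{\sys}{V}}=\rank{\matrow{\sys}{1..\stind'-1}}$, because adjoining the same new rows to $\matrow{\sys}{U}$ and to $\matrow{\sys}{1..\stind-1}$ produces the same $\field(\var)$‑row space. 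Then \cref{thm:degree_rank:rank} of \cref{thm:degree_rank}, applied to $\mat{G}=\matrow{\sys}{V}$ (whose required factorization exists by taking a row basis), shows that the rows of $\mat{G}$ indexed by $\nonPivInds$ are $\polRing$‑linearly independent and that their number equals $\rank{\mat{G}}$; since $U'$ is by construction exactly the list of $\sys$‑row indices corresponding to $\nonPivInds$ in $V$, the invariant is restored at $(\stind',U')$. Finally, when the algorithm exits at \cref{step:algo:column_rank_profile:exit_if_done}, all rows have been processed, so $I=U'$ locates $\rk=\rank{\sys}$ rows of $\sys$ and $\rank{\matrow{\sys}{V}}=\rank{\sys}$; by \cref{lem:ker_crp_equal} the matrices $\matrow{\sys}{I}$, $\mat{G}$ and $\sys$ share the same column rank profile $J$ (the one returned by the call on this last block), so $J$ is the rank profile of $\sys$, and since $J$ is also the rank profile of $\matrow{\sys}{I}$ the submatrix $\matsub{\sys}{I}{J}\in\pmatRing{\rk}{\rk}$ has rank $\rk$, i.e.\ is nonsingular. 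An induction on the finite recursion completes the argument (and treats the $k=0$ branch).

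For the cost, denote by $\stind_0<\stind_1<\cdots$ the successive values of $\stind$ (from $i_0+1$), with associated $k_j$, $\ell_j=\min(k_j,\rdim-\stind_j+1)$, $\mat{G}_j=\matrow{\sys}{V_j}$ and $D_j=\sumTuple{\rdeg{\mat{G}_j}}$. At step $j$, \cref{algo:kbas_rkprof} runs on the $(k_j+\ell_j)\times\cdim$ matrix $\mat{G}_j$ with shift $\rdeg{\mat{G}_j}\in\NN^{k_j+\ell_j}$; since $\rk\le\cdim$ and $k_j+\ell_j\le2k_j\le2\rk$, all hypotheses of \cref{thm:algo:kbas_rkprof} hold (the row‑over‑column ratio is $\bigO{1}$, the internal modulus can be taken diagonal), so this call uses $\softO{\rk^{\expmm-2}\,\cdim\,((k_j+\ell_j)+D_j)}$ operations. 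Hence the total cost is $\softO{\rk^{\expmm-2}\cdim(\sum_j(k_j+\ell_j)+\sum_j D_j)}$, the remaining bookkeeping (row extraction, shifting of index lists, re‑indexing via $\nonPivInds$) being dominated by the combined coefficient count $\bigO{\cdim(\rdim+\dd)}$ of all the $\mat{G}_j$. It thus remains to bound $\sum_j(k_j+\ell_j)=\bigO{\rdim}$ and $\sum_j D_j=\bigO{\dd}$. The first holds because $\ell_j=k_j$ at every step except possibly the last, so $\sum_j k_j\le\sum_j\ell_j+\rk$, while the intervals $(\stind_j,\ldots,\stind_j+\ell_j-1)$ are disjoint subsets of $\{1,\ldots,\rdim\}$, whence $\sum_j\ell_j\le\rdim$. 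For the second, split $D_j=A_j+B_j$ according to whether a row of $\mat{G}_j$ comes from $U$ or is newly incorporated: again $\sum_j B_j\le\dd$ by disjointness, and since $U$ consists of $k_j$ nonzero rows of index $<\stind_j$, monotonicity of $\rdeg{\sys}$ gives $A_j\le\sum_{i=\stind_j-k_j}^{\stind_j-1}\rdeg{\matrow{\sys}{i}}$, which is $\le B_j$ whenever $\ell_j=k_j$ (two length‑$k_j$ intervals compared termwise) and is $\le\dd$ at the single step with $\ell_j<k_j$. Thus $\sum_j D_j\le3\dd$, and the announced bound follows.

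The step I expect to be the real obstacle is this last estimate: the crude bound $A_j\le k_j\deg(\sys)$ is too weak and would introduce an extraneous factor $\deg(\sys)$ into the complexity; it is precisely the nondecreasing‑row‑degree assumption that saves the day, through the termwise comparison $A_j\le B_j$, and the unique step with $\ell_j<k_j$ (necessarily the final one) has to be handled separately. A more routine but indispensable point is to verify at every recursive call that the preconditions of \cref{thm:algo:kbas_rkprof} (and of \cref{thm:degree_rank}) genuinely hold — in particular that $\mat{G}_j$ never has a zero row and that $k_j+\ell_j=\bigO{\cdim}$ at each step.
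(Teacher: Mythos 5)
Your proof is correct, and the correctness half follows the paper's argument essentially verbatim (same invariant on $(\stind,U)$, same use of \cref{thm:degree_rank}\,(\ref{thm:degree_rank:rank}) to rebuild $U'$ from $\nonPivInds$, same appeal to \cref{lem:ker_crp_equal} at the end). Where you genuinely diverge is in the key complexity estimate, the bound on $\sum_j D_j$. The paper bounds $D_i$ by the degree-sum of the contiguous window $S_i=\{\stind_i-k_i,\ldots,\stind_i+\ell_i-1\}$ and then counts how many windows a fixed row can belong to: consecutive membership forces the $k_i$ to at least double, so each row is charged at most $\lfloor\log_2\rk\rfloor+2$ times, giving $\sum_i D_i\le \dd(\lfloor\log_2\rk\rfloor+2)$. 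You instead split $D_j=A_j+B_j$ into the old-row and new-row contributions, observe that the new-row intervals partition a subset of $\{1,\ldots,\rdim\}$ so $\sum_j B_j\le\dd$, and use the nondecreasing row degrees twice — once to dominate $A_j$ by the degree-sum of the $k_j$ rows just below $\stind_j$, and once for the termwise shift-by-$k_j$ comparison giving $A_j\le B_j$ whenever $\ell_j=k_j$ (i.e.\ at every step but the last). This yields $\sum_j D_j\le 3\dd$, which is strictly sharper than the paper's bound: it removes a $\log\rk$ factor that the paper absorbs into $\softO{\cdot}$. Your argument is also arguably more elementary, as it avoids the window-overlap/doubling count entirely; its only cost is that both halves of it (not just one) lean on the nondecreasing-row-degree hypothesis, but that hypothesis is assumed anyway. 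The remaining bookkeeping ($\sum_j(k_j+\ell_j)=\bigO{\rdim}$ via $\ell_j=k_j$ except at the final step, verification of the preconditions of \cref{thm:algo:kbas_rkprof}, absence of zero rows in $\mat{G}_j$) matches the paper. No gaps.
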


Before proving the theorem we note that, if nothing particular is known about
\(\sys\) a priori, one can call this algorithm with \(\stind=1\) and \(k=0\)
(meaning \(U=\emptyset\)). One can also permute the rows of \(\sys\) to ensure
that its row degrees are nondecreasing.

\begin{corollary}
  \label{cor:algo:column_rank_profile}
  Given \(\sys \in \pmatRing{\rdim}{\cdim}\), one can locate an \(\rk \times
  \rk\) nonsingular submatrix \(\matsub{\sys}{I}{J}\) of \(\sys\) using
  \({\softO{\rk^{\omega-2} \cdim (\rdim+D)}}\) operations in \(\field\), where
  \(\rk\) is the rank of \(\sys\), \(J\) is the column rank profile of
  \(\matrow{\sys}{I}\), and \(\dd\) is the sum of the degrees of the nonzero
  rows of \(\sys\).
\end{corollary}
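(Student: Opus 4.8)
The plan is to reduce directly to \cref{thm:algo:column_rank_profile} by a row permutation. If $\sys = \matz$ then $\rk = 0$ and the empty lists $I = J = ()$ satisfy the claim, so assume $\sys \ne \matz$. First I would compute a permutation $\sigma$ of $\{1,\ldots,\rdim\}$ sorting the row degrees of $\sys$ into nondecreasing order: this only requires sorting the list $\rdeg{\sys}$ of $\rdim$ integers, a cost that is negligible next to the target bound and that involves no arithmetic on the polynomial entries. Let $\mat{P}$ be the associated permutation matrix, so that $\matrow{(\mat{P}\sys)}{i} = \matrow{\sys}{\sigma(i)}$ for each $i$ and $\mat{P}\sys$ has nondecreasing row degrees.

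Both relevant quantities are invariant under this permutation. Since $\mat{P}$ is unimodular, $\rank{\mat{P}\sys} = \rank{\sys} = \rk$; and $\mat{P}\sys = \matrow{\sys}{(\sigma(1),\ldots,\sigma(\rdim))}$ has the same column rank profile as $\sys$, which is the instance of \cref{lem:ker_crp_equal} taking $I$ to be a permutation of all the rows. Moreover the multiset of row degrees of $\mat{P}\sys$ equals that of $\sys$, so the sum $\dd$ of the degrees of the nonzero rows is unchanged; since a zero row has degree $-\infty < 0$ while every nonzero row has degree $\ge 0$, this $\dd$ is precisely the sum of the nonnegative entries of $\rdeg{\mat{P}\sys}$, matching the quantity appearing in \cref{thm:algo:column_rank_profile}.

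Next I would apply \cref{thm:algo:column_rank_profile} to $\mat{P}\sys$ with the trivial initial data $\stind = 1$ and $U$ empty (hence $k = 0$), which meets the input requirements vacuously. Within $\softO{\rk^{\expmm-2}\cdim(\rdim+\dd)}$ operations in $\field$ this returns lists $I', J$ of size $\rk$ with $\matsub{(\mat{P}\sys)}{I'}{J}$ nonsingular and $J$ the rank profile of $\mat{P}\sys$. Setting $I = (\sigma(i))_{i \in I'}$, ordered consistently with $I'$, gives $\matrow{\sys}{I} = \matrow{(\mat{P}\sys)}{I'}$, so $\matsub{\sys}{I}{J} = \matsub{(\mat{P}\sys)}{I'}{J}$ is a nonsingular submatrix of $\sys$ of size $\rk$. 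In particular $\rank{\matrow{\sys}{I}} = \rk = \rank{\sys}$, so by \cref{lem:ker_crp_equal} the column rank profile of $\matrow{\sys}{I}$ equals that of $\sys$, which equals that of $\mat{P}\sys$, namely $J$. Adding the dominated cost of the sorting step yields the announced complexity.

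The argument has no genuine difficulty: its entire content is that rank, column rank profile, and the degree sum $\dd$ are preserved by a row permutation, and that the monotonicity hypothesis of \cref{thm:algo:column_rank_profile} can be arranged at no asymptotic cost by such a permutation. The only step deserving a line of care is the index bookkeeping that carries the output set $I'$ for $\mat{P}\sys$ back to a set $I$ for $\sys$ in a way that keeps the row ordering, so that the identity $\matrow{\sys}{I} = \matrow{(\mat{P}\sys)}{I'}$ holds exactly and the statement ``$J$ is the column rank profile of $\matrow{\sys}{I}$'' transfers verbatim.
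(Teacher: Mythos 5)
Your proposal is correct and follows exactly the route the paper intends: the paper's justification for this corollary is the remark preceding \cref{thm:algo:column_rank_profile}, namely to permute the rows into nondecreasing degree order and call the algorithm with \(\stind=1\) and \(U=\emptyset\). Your write-up simply makes explicit the (easy) invariance of the rank, column rank profile, and degree sum under the permutation, together with the index bookkeeping, which the paper leaves implicit.
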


\subsection{Proof of correctness}
\label{sec:rkprof_indeprows:correctness}

If \(k=\cdim\), the requirement \(\rank{\matrow{\sys}{U}} = \cdim\) implies
that the \(\cdim\times\cdim\) matrix \(\matsub{\sys}{U}{1..n}\) is nonsingular,
proving the correctness of \cref{step:algo:column_rank_profile:k_is_n}.

If \(k=0\), then the correctness
\crefrange{step:algo:column_rank_profile:k_is_0}{step:algo:column_rank_profile:k_is_0_return}
follows from the fact the input requirements are satisfied for \(\stind=i+1\)
and \(U = (i)\).

The integer \(\ell\) at \cref{step:algo:column_rank_profile:choose_ell} is such
that \(0 \le \ell \le k\) and \(\stind+\ell-1 \le \rdim\). Then, at
\cref{step:algo:column_rank_profile:build_V}, the list \(V\) contains
\(k+\ell\) distinct indices, with the first \(k\) in \(U\) and the others in
\(\{\stind,\ldots,\rdim\}\). As a result, the matrix \(\mat{G} =
\matrow{\sys}{V}\) at \cref{step:algo:column_rank_profile:build_mat_shift} has
rank between \(k\) and \(k+\ell\). By \cref{thm:degree_rank:rank} of
\cref{thm:degree_rank}, the list \(\nonPivInds\) computed at
\cref{step:algo:column_rank_profile:nonpivinds,step:algo:column_rank_profile:kbas_rkprof}
identifies \(\rank{\mat{G}}\) rows of \(\mat{G}\) which are
\(\polRing\)-linearly independent.

These rows provide a set of rows of \(\sys\) which have maximal rank among its
first \(\stind+\ell-1\) rows. The role of
\crefrange{step:algo:column_rank_profile:update_start}{step:algo:column_rank_profile:update_end}
is simply to link the row indices in \(\mat{G}\), as they appear in
\(\nonPivInds\), to the corresponding row indices in \(\sys\). This therefore
provides \(U'\) such that $U' \subseteq \{1,\ldots,\stind+\ell-1\}$, whose size
\(k'\) is between \(k\) and \(k+\ell\), and such that \(\matrow{\sys}{U'}\) has
full row rank with \(\rank{\matrow{\sys}{U'}} =
\rank{\matrow{\sys}{1..\stind+\ell-1}}\).

Then \cref{step:algo:column_rank_profile:update_stind} updates \(\stind\) to
\(\stind'\), to reflect that we have now covered all rows with indices in
\(\{1,\ldots,\stind+\ell-1\}\), and that we can proceed with the remaining rows
starting at index \(\stind' = \stind+\ell\).

In the case where \(\stind' = \stind+\ell>\rdim\)
(\cref{step:algo:column_rank_profile:exit_if_done}), all rows of \(\sys\) have
been processed and the algorithm can return \(U'\) and \(J\).  The correctness
concerning \(U'\) has been discussed above, and the fact that \(J\) is the rank
profile of \(\matrow{\sys}{U'}\) follows from \cref{thm:algo:kbas_rkprof}.

Otherwise, if \(\stind'\le\rdim\), we proceed with the remaining bottom part of
\(\sys\) recursively (\cref{step:algo:column_rank_profile:rec_call}). The
properties of \(U'\) described above show that the input requirement are
satisfied in this recursive call.

Thus, for correctness, it only remains to observe the algorithm terminates
since all recursive calls involve a strictly larger set \(U\) (whose size is
bounded by the rank \(\rk\) of \(\sys\)), or a strictly larger index
\(\stind\) (which is bounded by \(\rdim+1\)).

\emph{Remark:} the independent rows of \(\sys\) found via the \(\shifts\)-pivot
profile of \(\kbas\) do not necessarily contain the \(k\) independent rows that
were already identified by \(U\); in other words, \(U\) is not necessarily
contained in \(U'\). Having \(U\subseteq U'\) would have guaranteed that \(I\)
is the row rank profile of \(\sys\), yet the straightforward modification of
this algorithm that would ensure \(U\subseteq U'\) involves a different choice
of shift \(\shifts\) which would make
\cref{step:algo:column_rank_profile:kbas_rkprof} too costly.

\subsection{Proof of complexity}
\label{sec:rkprof_indeprows:complexity}

The only costly operation performed in \cref{algo:column_rank_profile}, apart
from recursive calls, is the computation of the kernel basis and rank profile
of the matrix \(\mat{G}\), via \cref{algo:kbas_rkprof} at
\cref{step:algo:column_rank_profile:kbas_rkprof}. The main task for the
complexity analysis is therefore to analyze how the dimensions and row degrees
of \(\mat{G}\) evolve during the run of the algorithm.

Fix some input \((\sys,\stind,U)\), and assume the cardinality \(k\) of \(U\)
is nonzero; otherwise, we are brought to this situation by
\cref{step:algo:column_rank_profile:k_is_0_return} after at most \(mn\) zero
tests performed by \cref{step:algo:column_rank_profile:k_is_0_find_row}. For
this input, let \(\rho\) be the number of recursive steps before arriving at a
base case (either \cref{step:algo:column_rank_profile:k_is_n} or
\cref{step:algo:column_rank_profile:exit_if_done}). Let \((\stind_0,U_0) =
(\stind,U)\) be the original input and \((\stind_1,U_1), \ldots,
(\stind_\rho,U_\rho)\) be the input of the successive recursive calls when
running the algorithm on \((\sys,\stind,U)\).  Let also \(k_i\) be the
cardinality of \(U_i\) for \(1 \le i \le \rho\). Observe that \(\stind_0 <
\stind_1 < \cdots < \stind_\rho\) and \(k_0 \le k_1 \le \cdots \le k_\rho\),
but recall from the remark in \cref{sec:rkprof_indeprows:correctness} that
\(U_i\) is not necessarily a subset of \(U_{i+1}\).

Let \(\ell_i = \min(k_i , \rdim-\stind_i+1)\) as in
\cref{step:algo:column_rank_profile:choose_ell}, and \(\mat{G}_i\) be the
matrix built at \cref{step:algo:column_rank_profile:build_mat_shift}, which has
\(k_i+\ell_i\) rows. By \cref{thm:algo:kbas_rkprof},
\cref{step:algo:column_rank_profile:kbas_rkprof} costs
\[
  \softO{(k_i+\ell_i)^{\expmm-2} (k_i+\ell_i+\cdim)(k_i+\ell_i+\dd_i)},
\]
where \(\dd_i = \sumTuple{\rdeg{\mat{G}_i}}\). Now, since \(k_i =
\rank{\matrow{\sys}{U_i}} \le \rk \le \cdim\) and \(k_i+\ell_i \le 2k_i\), the
above cost bound is within \(\softO{\rk^{\expmm-2} \cdim(k_i+\dd_i)}\).  The
rest of this section shows that \(\sum_{0 \le i \le \rho} k_i+\dd_i\) is in
\(\bigO{\rdim+\dd\log(\rk)}\), which proves the complexity bound
\(\softO{\rk^{\expmm-2} \cdim(\rdim+\dd)}\).

Due to \cref{step:algo:column_rank_profile:update_stind}, \(\stind_{i+1} =
\stind_i + \ell_i\) for \(0\le i < \rho\). Observe that \(\ell_i = k_i\) for
\(i<\rho\); otherwise \(\ell_i = \rdim - \stind_i + 1\) and \(\stind_{i+1} =
\theta_i + \ell_i = \rdim+1\), hence the algorithm would stop at
\cref{step:algo:column_rank_profile:exit_if_done} before the \(\rho\)-th
recursive call. It follows that \(\stind_i = \stind_0 + \sum_{0\le j<i} k_j\)
for \(0 \le i \le \rho\). \Cref{step:algo:column_rank_profile:exit_if_done}
ensures that \(\stind_\rho \le \rdim\) in the last recursive call, hence \(k_0 +
\cdots + k_{\rho-1}  = \stind_\rho - \stind_0 \le \rdim\). We finally deduce
\(k_0 + \cdots + k_\rho \le \rdim + k_\rho \le \rdim + \rk \le 2\rdim\).

It remains to prove \(\dd_0+\cdots+\dd_\rho \in \bigO{\dd\log(\rk)}\). By
construction, \(\mat{G}_i\) consists of \(k_i+\ell_i\) rows among the first
\(\{1,\ldots,\stind_i+\ell_i-1\}\) rows of \(\sys\). Consequently, since
\(\rdeg{\sys}\) is nondecreasing, \(\dd_i \le
\sumTuple{\rdeg{\matrow{\sys}{S_i}}}\) where \(S_i = \{\stind_{i} - k_i,
\stind_i - k_i + 1, \ldots, \stind_{i} + \ell_i -1\}\). We claim that a given
row \(j\in \{1,\ldots,\rdim\}\) of \(\sys\) may appear in at most \(\lfloor
\log_2(\rk) \rfloor + 2\) sets among \(S_0,\ldots,S_\rho\), that is,
\(\card{\{0 \le i \le \rho \mid j \in S_i\}} \le \lfloor \log_2(\rk) \rfloor +
2\). Below we prove this claim, which concludes the proof since then
\begin{align*}
  \textstyle\sum\limits_{0\le i\le \rho} D_i & \le \textstyle\sum\limits_{0\le i\le \rho} \sumTuple{\rdeg{\matrow{\sys}{S_i}}}
                = \textstyle\sum\limits_{0\le i\le \rho} \textstyle\sum\limits_{j \in S_i} \rdeg{\matrow{\sys}{j}} \\
                            & = \textstyle\sum\limits_{1 \le j \le \rdim} \, \sum\limits_{\substack{0 \le i \le \rho \\ j \in S_i}} \rdeg{\matrow{\sys}{j}}
    \le \dd (\lfloor \log_2(\rk) \rfloor+2) .
\end{align*}

Since \((\min(S_i))_i\) and \((\max(S_i))_i\) are nondecreasing, having \(j\in
S_{i_1} \cap S_{i_2}\) for some \(i_1 < i_2\) implies \(j \in S_i\) for all
\(i_1 \le i \le i_2\). So we consider \(j \in S_i,\ldots,S_{i+c-1}\) for some
\(c>0\) and \(0\le i \le \rho-c+1\).

The fact that \(j\in S_i\) implies \(j \le \stind_{i}+\ell_i - 1 = \stind_{i+1}
- 1\). On the other hand, \(j\in S_{i+\gamma}\) for \(0 \le \gamma \le c-1\)
implies \(j \ge \stind_{i+\gamma} - k_{i+\gamma}\). We deduce \(k_{i+\gamma}
\ge \stind_{i+\gamma} - \stind_{i+1} + 1 = k_{i+\gamma-1} + \cdots + k_{i+2} +
k_{i+1} + 1\). Starting from \(k_{i+1} \ge 1\), using this inequality
iteratively for \(\gamma=2,\ldots,c-1\) shows that \(k_{i+c-1} \ge 2^{c-2}\).
Since \(k_{i+c-1} \le \rk\), we get \(c \le \lfloor \log_2(\rk) \rfloor + 2\).

\section{Topics for further research}

Our algorithm can find the rank profile with a rank-sensitive cost. However the
same cannot be said for such computations as kernel basis, column basis, and
approximant/order bases; or for computing normal forms such as Hermite or Popov.
We would like to make progress on filling this gap thanks to the new results in
this paper. In addition, we are interested in applying our work in situations
(including those mentioned in the introduction) where rank-sensitive algorithms
would allow one to tackle significantly larger problems.

Another feature of our algorithms is that the complexity depends on the average
row degree of the input matrix. However, they do not handle unbalanced column
degrees, where the matrix might have average row degree close to the global
degree but average column degree quite smaller. We would like to improve the
cost towards the minimum of the average of both row and column degrees, or even
on a notion of generic determinant bound \cite[Sec.\,6]{GuSaStVa12} generalized
to rectangular matrices. We believe that \emph{partial linearization}
\cite[Sec.\,3]{Storjohann06} \cite[Sec.\,6]{GuSaStVa12} may lead to such an
improvement.

Finally, while it has been popular in recent times to give \(\softO{\cdot}\)
complexities which hide log terms, there remains a strong interest in the more
precise \(\bigO{\cdot}\) measure. In fact this is often the first audience
question asked when an algorithm is presented with \(\softO{\cdot}\)
complexity. We would like to determine the logarithmic terms for the algorithms
presented in this paper. Although technical, this seems feasible since the
logarithmic factors in the complexity of the core tools are now well
understood, specifically approximant bases
\cite{GiJeVi03,ZhoLab12,JeannerodNeigerVillard2020} and multiplication with
unbalanced degrees \cite{ZhLaSt12,JeNeScVi17}.

\begin{figure*}[t]
  \caption{SageMath code for \cref{example:main,example:cont_main,example:continued}.
    \textmd{\emph{This code is written using SageMath (version 9.3 or later required)
      and illustrates many of the points in the three listed examples: running
      this code will show the matrices and some additional information. This
      code can also be easily adapted to make related experiments.}}}
  \label{fig:sage_code}
  \centering
  \begin{minted}[frame=single]{python}
  # For a detailed documentation of functionalities for univariate polynomial matrices, including
  # the minimal_kernel_basis and minimal_approximant_basis methods used below, see
  # https://doc.sagemath.org/html/en/reference/matrices/sage/matrix/matrix_polynomial_dense.html
  # (the code below requires SageMath >=9.3; some other functionalities require SageMath 9.4 or 9.5)

  pR.<x> = GF(2)[]
  F = Matrix(pR, 5, 5, \
          [[x^2,x^3+1,x^8+x^6+x^4+x^3+x^2+x,x^4+1,x^3+1], \
          [0,x^4+1,x^5+x^4+x^3+x^2,x+1,x^2+1], \
          [0,x^2+1,x+1,0,1], \
          [0,0,x^8+1,x^4+1,0], \
          [0,0,x^4+1,1,0]])

  print(f"Input matrix F:\n{F}\n")

  K0 = F.minimal_kernel_basis()
  print(f"Minimal kernel basis with shift s=0\n{K0}")
  piv = [pi+1 for pi in K0.leading_positions()]
  print(f"Its pivot indices are {piv}\n")

  s = F.row_degrees()
  K = F.minimal_kernel_basis(shifts=s)
  print(f"Minimal kernel basis K with shift s=rdeg(F)\n{K}")
  piv_s = [pi+1 for pi in K.leading_positions(shifts=s)]
  print(f"Its s-pivot indices are {piv_s}\n")

  F1 = F[:,:2] ; tau = 18
  A = F1.minimal_approximant_basis(tau,shifts=s)
  print(f"Approximant basis of first 2 columns at order {tau}:\n{A}")
  # set I at Line 14 of Algorithm 1:
  I = [i for i in range(5) if A[i,:].row_degrees(shifts=s)[0] < tau]
  # --> gives 3 indices I == [1,3,4] so we have the whole kernel basis
  # since here we do know rank(F[:,:2]) = 2 hence kernel rank 5-2==3
  print(f"--> indices of rows in kernel: {[i+1 for i in I]}\n")
  K1 = A[I,:]; t = K1.row_degrees(shifts=s)

  F2 = K1 * F[:,2:]
  print(f"Residual matrix F2 for second call:\n{F2}")
  K2 = F2[:,0].minimal_kernel_basis(shifts=t)

  print(f"Kernel basis K1' of first column of F2:\n{K2}\n")

  print(f"Test K1' * F2 == 0 --> {K2*F2 == 0}, so in fact K2 = K1'")

  print(f"Verify K2*K1 is the above s-weak Popov matrix K --> {K2*K1 == K}")
  \end{minted}
\end{figure*}

\bibliographystyle{ACM-Reference-Format}

\begin{thebibliography}{39}


\ifx \showCODEN    \undefined \def \showCODEN     #1{\unskip}     \fi
\ifx \showDOI      \undefined \def \showDOI       #1{#1}\fi
\ifx \showISBNx    \undefined \def \showISBNx     #1{\unskip}     \fi
\ifx \showISBNxiii \undefined \def \showISBNxiii  #1{\unskip}     \fi
\ifx \showISSN     \undefined \def \showISSN      #1{\unskip}     \fi
\ifx \showLCCN     \undefined \def \showLCCN      #1{\unskip}     \fi
\ifx \shownote     \undefined \def \shownote      #1{#1}          \fi
\ifx \showarticletitle \undefined \def \showarticletitle #1{#1}   \fi
\ifx \showURL      \undefined \def \showURL       {\relax}        \fi
\providecommand\bibfield[2]{#2}
\providecommand\bibinfo[2]{#2}
\providecommand\natexlab[1]{#1}
\providecommand\showeprint[2][]{arXiv:#2}

\bibitem[Beckermann(1992)]%
        {Beckermann92}
\bibfield{author}{\bibinfo{person}{B. Beckermann}.}
  \bibinfo{year}{1992}\natexlab{}.
\newblock \showarticletitle{A reliable method for computing {M}-{P}ad{\'e}
  approximants on arbitrary staircases}.
\newblock \bibinfo{journal}{\emph{J. Comput. Appl. Math.}}
  \bibinfo{volume}{40}, \bibinfo{number}{1} (\bibinfo{year}{1992}),
  \bibinfo{pages}{19--42}.
\newblock
\showISSN{0377-0427}
\urldef\tempurl%
\url{https://doi.org/10.1016/0377-0427(92)90039-Z}
\showDOI{\tempurl}


\bibitem[Beckermann and Labahn(1994)]%
        {BecLab94}
\bibfield{author}{\bibinfo{person}{B. Beckermann} {and} \bibinfo{person}{G.
  Labahn}.} \bibinfo{year}{1994}\natexlab{}.
\newblock \showarticletitle{A Uniform Approach for the Fast Computation of
  Matrix-Type {P}ad\'e Approximants}.
\newblock \bibinfo{journal}{\emph{SIAM J. Matrix Anal. Appl.}}
  \bibinfo{volume}{15}, \bibinfo{number}{3} (\bibinfo{date}{July}
  \bibinfo{year}{1994}), \bibinfo{pages}{804--823}.
\newblock
\showISSN{0895-4798}
\urldef\tempurl%
\url{https://doi.org/10.1137/S0895479892230031}
\showDOI{\tempurl}


\bibitem[Beckermann et~al\mbox{.}(1999)]%
        {BeLaVi99}
\bibfield{author}{\bibinfo{person}{B. Beckermann}, \bibinfo{person}{G. Labahn},
  {and} \bibinfo{person}{G. Villard}.} \bibinfo{year}{1999}\natexlab{}.
\newblock \showarticletitle{Shifted Normal Forms of Polynomial Matrices}. In
  \bibinfo{booktitle}{\emph{Proceedings ISSAC 1999}}. \bibinfo{publisher}{ACM},
  \bibinfo{pages}{189--196}.
\newblock
\showISBNx{1-58113-073-2}
\urldef\tempurl%
\url{https://doi.org/10.1145/309831.309929}
\showDOI{\tempurl}


\bibitem[Beckermann et~al\mbox{.}(2006)]%
        {BeLaVi06}
\bibfield{author}{\bibinfo{person}{B. Beckermann}, \bibinfo{person}{G. Labahn},
  {and} \bibinfo{person}{G. Villard}.} \bibinfo{year}{2006}\natexlab{}.
\newblock \showarticletitle{Normal forms for general polynomial matrices}.
\newblock \bibinfo{journal}{\emph{J. Symbolic Comput.}} \bibinfo{volume}{41},
  \bibinfo{number}{6} (\bibinfo{year}{2006}), \bibinfo{pages}{708--737}.
\newblock
\showISSN{0747-7171}
\urldef\tempurl%
\url{https://doi.org/10.1016/j.jsc.2006.02.001}
\showDOI{\tempurl}


\bibitem[Cheung et~al\mbox{.}(2013)]%
        {CheungKwokLau2013}
\bibfield{author}{\bibinfo{person}{H.~Y. Cheung}, \bibinfo{person}{T.~C. Kwok},
  {and} \bibinfo{person}{L.~C. Lau}.} \bibinfo{year}{2013}\natexlab{}.
\newblock \showarticletitle{Fast Matrix Rank Algorithms and Applications}.
\newblock \bibinfo{journal}{\emph{J. ACM}} \bibinfo{volume}{60},
  \bibinfo{number}{5}, Article \bibinfo{articleno}{31} (\bibinfo{year}{2013}).
\newblock
\showISSN{0004-5411}
\urldef\tempurl%
\url{https://doi.org/10.1145/2528404}
\showDOI{\tempurl}


\bibitem[Dumas et~al\mbox{.}(2017)]%
        {DumasPernetSultan2017}
\bibfield{author}{\bibinfo{person}{J.-G. Dumas}, \bibinfo{person}{C. Pernet},
  {and} \bibinfo{person}{Z. Sultan}.} \bibinfo{year}{2017}\natexlab{}.
\newblock \showarticletitle{Fast computation of the rank profile matrix and the
  generalized Bruhat decomposition}.
\newblock \bibinfo{journal}{\emph{J. Symbolic Comput.}}  \bibinfo{volume}{83}
  (\bibinfo{year}{2017}), \bibinfo{pages}{187--210}.
\newblock
\showISSN{0747-7171}
\urldef\tempurl%
\url{https://doi.org/10.1016/j.jsc.2016.11.011}
\showDOI{\tempurl}


\bibitem[{Forney, Jr.}(1975)]%
        {Forney75}
\bibfield{author}{\bibinfo{person}{G.~D. {Forney, Jr.}}}
  \bibinfo{year}{1975}\natexlab{}.
\newblock \showarticletitle{{Minimal Bases of Rational Vector Spaces, with
  Applications to Multivariable Linear Systems}}.
\newblock \bibinfo{journal}{\emph{SIAM Journal on Control}}
  \bibinfo{volume}{13}, \bibinfo{number}{3} (\bibinfo{year}{1975}),
  \bibinfo{pages}{493--520}.
\newblock
\urldef\tempurl%
\url{https://doi.org/10.1137/0313029}
\showDOI{\tempurl}


\bibitem[Geddes et~al\mbox{.}(1992)]%
        {GeddesCzaporLabahn92}
\bibfield{author}{\bibinfo{person}{K.~O. Geddes}, \bibinfo{person}{S.~R.
  Czapor}, {and} \bibinfo{person}{G. Labahn}.} \bibinfo{year}{1992}\natexlab{}.
\newblock \bibinfo{booktitle}{\emph{{Algorithms for computer algebra}}}.
\newblock \bibinfo{publisher}{Kluwer}, \bibinfo{address}{Boston}.
\newblock
\showISBNx{0--7923--9259--0}


\bibitem[Giorgi et~al\mbox{.}(2003)]%
        {GiJeVi03}
\bibfield{author}{\bibinfo{person}{P. Giorgi}, \bibinfo{person}{C.-P.
  Jeannerod}, {and} \bibinfo{person}{G. Villard}.}
  \bibinfo{year}{2003}\natexlab{}.
\newblock \showarticletitle{On the complexity of polynomial matrix
  computations}. In \bibinfo{booktitle}{\emph{Proceedings ISSAC 2003}}.
  \bibinfo{publisher}{ACM}, \bibinfo{pages}{135--142}.
\newblock
\showISBNx{1-58113-641-2}
\urldef\tempurl%
\url{https://doi.org/10.1145/860854.860889}
\showDOI{\tempurl}


\bibitem[Gupta et~al\mbox{.}(2012)]%
        {GuSaStVa12}
\bibfield{author}{\bibinfo{person}{S. Gupta}, \bibinfo{person}{S. Sarkar},
  \bibinfo{person}{A. Storjohann}, {and} \bibinfo{person}{J. Valeriote}.}
  \bibinfo{year}{2012}\natexlab{}.
\newblock \showarticletitle{Triangular $x$-basis decompositions and
  derandomization of linear algebra algorithms over ${K}[x]$}.
\newblock \bibinfo{journal}{\emph{J. Symbolic Comput.}} \bibinfo{volume}{47},
  \bibinfo{number}{4} (\bibinfo{year}{2012}), \bibinfo{pages}{422--453}.
\newblock
\showISSN{0747-7171}
\urldef\tempurl%
\url{https://doi.org/10.1016/j.jsc.2011.09.006}
\showDOI{\tempurl}


\bibitem[Hermite(1851)]%
        {Hermite1851}
\bibfield{author}{\bibinfo{person}{C. Hermite}.}
  \bibinfo{year}{1851}\natexlab{}.
\newblock \showarticletitle{Sur l'introduction des variables continues dans la
  th\'eorie des nombres.}
\newblock \bibinfo{journal}{\emph{Journal f\"ur die reine und angewandte
  Mathematik}}  \bibinfo{volume}{41} (\bibinfo{year}{1851}),
  \bibinfo{pages}{191--216}.
\newblock
\urldef\tempurl%
\url{https://doi.org/10.1515/crll.1851.41.191}
\showDOI{\tempurl}


\bibitem[Hyun et~al\mbox{.}(2019)]%
        {HyuNeiSch19}
\bibfield{author}{\bibinfo{person}{S.~G. Hyun}, \bibinfo{person}{V. Neiger},
  {and} \bibinfo{person}{\'E. Schost}.} \bibinfo{year}{2019}\natexlab{}.
\newblock \showarticletitle{Implementations of efficient univariate
  polynomialmatrix algorithms and application to bivariate resultants}. In
  \bibinfo{booktitle}{\emph{Proceedings ISSAC 2019}}. \bibinfo{publisher}{ACM},
  \bibinfo{pages}{235--242}.
\newblock


\bibitem[Hyun et~al\mbox{.}(2021)]%
        {HyunNeigerSchost2021}
\bibfield{author}{\bibinfo{person}{S.~G. Hyun}, \bibinfo{person}{V. Neiger},
  {and} \bibinfo{person}{\'{E}. Schost}.} \bibinfo{year}{2021}\natexlab{}.
\newblock \showarticletitle{Algorithms for Linearly Recurrent Sequences of
  Truncated Polynomials}. In \bibinfo{booktitle}{\emph{Proceedings ISSAC
  2021}}. \bibinfo{publisher}{ACM}, \bibinfo{pages}{201--208}.
\newblock
\urldef\tempurl%
\url{https://doi.org/10.1145/3452143.3465533}
\showDOI{\tempurl}


\bibitem[Jeannerod(2006)]%
        {Jeannerod2006}
\bibfield{author}{\bibinfo{person}{C.-P. Jeannerod}.}
  \bibinfo{year}{2006}\natexlab{}.
\newblock \bibinfo{title}{LSP matrix decomposition revisited}.
\newblock \bibinfo{howpublished}{Research report 2006-28. Inria -- LIP -- Ens
  de Lyon}.
\newblock
\urldef\tempurl%
\url{http://www.ens-lyon.fr/LIP/Pub/Rapports/RR/RR2006/RR2006-28.pdf}
\showURL{%
\tempurl}


\bibitem[Jeannerod et~al\mbox{.}(2017)]%
        {JeNeScVi17}
\bibfield{author}{\bibinfo{person}{C.-P. Jeannerod}, \bibinfo{person}{V.
  Neiger}, \bibinfo{person}{\'E. Schost}, {and} \bibinfo{person}{G. Villard}.}
  \bibinfo{year}{2017}\natexlab{}.
\newblock \showarticletitle{Computing minimal interpolation bases}.
\newblock \bibinfo{journal}{\emph{J. Symbolic Comput.}}  \bibinfo{volume}{83}
  (\bibinfo{year}{2017}), \bibinfo{pages}{272--314}.
\newblock
\urldef\tempurl%
\url{https://doi.org/10.1016/j.jsc.2016.11.015}
\showDOI{\tempurl}


\bibitem[Jeannerod et~al\mbox{.}(2020)]%
        {JeannerodNeigerVillard2020}
\bibfield{author}{\bibinfo{person}{C.-P. Jeannerod}, \bibinfo{person}{V.
  Neiger}, {and} \bibinfo{person}{G. Villard}.}
  \bibinfo{year}{2020}\natexlab{}.
\newblock \showarticletitle{Fast computation of approximant bases in canonical
  form}.
\newblock \bibinfo{journal}{\emph{J. Symbolic Comput.}}  \bibinfo{volume}{98}
  (\bibinfo{year}{2020}), \bibinfo{pages}{192--224}.
\newblock
\urldef\tempurl%
\url{https://doi.org/10.1016/j.jsc.2019.07.011}
\showDOI{\tempurl}


\bibitem[Jeannerod et~al\mbox{.}(2013)]%
        {JeannerodPernetStorjohann2013}
\bibfield{author}{\bibinfo{person}{C.-P. Jeannerod}, \bibinfo{person}{C.
  Pernet}, {and} \bibinfo{person}{A. Storjohann}.}
  \bibinfo{year}{2013}\natexlab{}.
\newblock \showarticletitle{Rank-profile revealing Gaussian elimination and the
  {CUP} matrix decomposition}.
\newblock \bibinfo{journal}{\emph{J. Symbolic Comput.}}  \bibinfo{volume}{56}
  (\bibinfo{year}{2013}), \bibinfo{pages}{46--68}.
\newblock
\showISSN{0747-7171}
\urldef\tempurl%
\url{https://doi.org/10.1016/j.jsc.2013.04.004}
\showDOI{\tempurl}


\bibitem[Kailath(1980)]%
        {Kailath80}
\bibfield{author}{\bibinfo{person}{T. Kailath}.}
  \bibinfo{year}{1980}\natexlab{}.
\newblock \bibinfo{booktitle}{\emph{{Linear Systems}}}.
\newblock \bibinfo{publisher}{Prentice-Hall}.
\newblock


\bibitem[Labahn et~al\mbox{.}(2017)]%
        {LaNeZh17}
\bibfield{author}{\bibinfo{person}{G. Labahn}, \bibinfo{person}{V. Neiger},
  {and} \bibinfo{person}{W. Zhou}.} \bibinfo{year}{2017}\natexlab{}.
\newblock \showarticletitle{Fast, deterministic computation of the {H}ermite
  normal form and determinant of a polynomial matrix}.
\newblock \bibinfo{journal}{\emph{J. Complexity}}  \bibinfo{volume}{42C}
  (\bibinfo{year}{2017}), \bibinfo{pages}{44--71}.
\newblock


\bibitem[Lucas et~al\mbox{.}(2021)]%
        {LucasNeigerPernetRocheRosenkilde2021}
\bibfield{author}{\bibinfo{person}{D. Lucas}, \bibinfo{person}{V. Neiger},
  \bibinfo{person}{C. Pernet}, \bibinfo{person}{D.~S. Roche}, {and}
  \bibinfo{person}{J. Rosenkilde}.} \bibinfo{year}{2021}\natexlab{}.
\newblock \showarticletitle{Verification protocols with sub-linear
  communication for polynomial matrix operations}.
\newblock \bibinfo{journal}{\emph{J. Symbolic Comput.}}  \bibinfo{volume}{105}
  (\bibinfo{year}{2021}), \bibinfo{pages}{165--198}.
\newblock
\urldef\tempurl%
\url{https://doi.org/10.1016/j.jsc.2020.06.006}
\showDOI{\tempurl}


\bibitem[MacDuffee(1933)]%
        {MacDuffee33}
\bibfield{author}{\bibinfo{person}{C.~C. MacDuffee}.}
  \bibinfo{year}{1933}\natexlab{}.
\newblock \bibinfo{booktitle}{\emph{The Theory of Matrices}}.
\newblock \bibinfo{publisher}{Springer-Verlag Berlin Heidelberg}.
\newblock
\urldef\tempurl%
\url{https://doi.org/10.1007/978-3-642-99234-6}
\showDOI{\tempurl}


\bibitem[Mulders and Storjohann(2003)]%
        {MulSto03}
\bibfield{author}{\bibinfo{person}{T. Mulders} {and} \bibinfo{person}{A.
  Storjohann}.} \bibinfo{year}{2003}\natexlab{}.
\newblock \showarticletitle{On lattice reduction for polynomial matrices}.
\newblock \bibinfo{journal}{\emph{J. Symbolic Comput.}} \bibinfo{volume}{35},
  \bibinfo{number}{4} (\bibinfo{year}{2003}), \bibinfo{pages}{377--401}.
\newblock
\showISSN{0747-7171}
\urldef\tempurl%
\url{https://doi.org/10.1016/S0747-7171(02)00139-6}
\showDOI{\tempurl}


\bibitem[Neiger(2016a)]%
        {Neiger16b}
\bibfield{author}{\bibinfo{person}{V. Neiger}.}
  \bibinfo{year}{2016}\natexlab{a}.
\newblock \emph{\bibinfo{title}{Bases of relations in one or several variables:
  fast algorithms and applications}}.
\newblock \bibinfo{thesistype}{Ph.\,D. Dissertation}.
  \bibinfo{school}{{\'E}cole {N}ormale {S}up{\'e}rieure de {L}yon}.
\newblock
\urldef\tempurl%
\url{https://tel.archives-ouvertes.fr/tel-01431413}
\showURL{%
\tempurl}


\bibitem[Neiger(2016b)]%
        {Neiger16}
\bibfield{author}{\bibinfo{person}{V. Neiger}.}
  \bibinfo{year}{2016}\natexlab{b}.
\newblock \showarticletitle{Fast computation of shifted {Popov} forms of
  polynomial matrices via systems of modular polynomial equations}. In
  \bibinfo{booktitle}{\emph{Proceedings ISSAC 2016}}. \bibinfo{publisher}{ACM},
  \bibinfo{pages}{365--372}.
\newblock
\showISBNx{978-1-4503-4380-0}
\urldef\tempurl%
\url{https://doi.org/10.1145/2930889.2930936}
\showDOI{\tempurl}


\bibitem[Neiger and Pernet(2021)]%
        {NeigerPernet2021}
\bibfield{author}{\bibinfo{person}{V. Neiger} {and} \bibinfo{person}{C.
  Pernet}.} \bibinfo{year}{2021}\natexlab{}.
\newblock \showarticletitle{Deterministic computation of the characteristic
  polynomial in the time of matrix multiplication}.
\newblock \bibinfo{journal}{\emph{J. Complexity}}  \bibinfo{volume}{67}
  (\bibinfo{year}{2021}), \bibinfo{pages}{101572}.
\newblock
\showISSN{0885-064X}
\urldef\tempurl%
\url{https://doi.org/10.1016/j.jco.2021.101572}
\showDOI{\tempurl}


\bibitem[Neiger et~al\mbox{.}(2018)]%
        {NeiRosSol18}
\bibfield{author}{\bibinfo{person}{V. Neiger}, \bibinfo{person}{J. Rosenkilde},
  {and} \bibinfo{person}{G. Solomatov}.} \bibinfo{year}{2018}\natexlab{}.
\newblock \showarticletitle{Computing Popov and Hermite Forms of Rectangular
  Polynomial Matrices}. In \bibinfo{booktitle}{\emph{Proceedings ISSAC 2018}}.
  \bibinfo{publisher}{ACM}, \bibinfo{pages}{295--302}.
\newblock
\showISBNx{978-1-4503-5550-6}
\urldef\tempurl%
\url{https://doi.org/10.1145/3208976.3208988}
\showDOI{\tempurl}


\bibitem[Neiger and Vu(2017)]%
        {NeiVu17}
\bibfield{author}{\bibinfo{person}{V. Neiger} {and} \bibinfo{person}{T.~X.
  Vu}.} \bibinfo{year}{2017}\natexlab{}.
\newblock \showarticletitle{Computing canonical bases of modules of univariate
  relations}. In \bibinfo{booktitle}{\emph{Proceedings ISSAC 2017}}.
  \bibinfo{publisher}{ACM}, \bibinfo{pages}{357--364}.
\newblock
\showISBNx{978-1-4503-5064-8}
\urldef\tempurl%
\url{https://doi.org/10.1145/3087604.3087656}
\showDOI{\tempurl}


\bibitem[Newman(1972)]%
        {Newman72}
\bibfield{author}{\bibinfo{person}{M. Newman}.}
  \bibinfo{year}{1972}\natexlab{}.
\newblock \bibinfo{booktitle}{\emph{Integral Matrices}}.
\newblock \bibinfo{publisher}{Academic Press}.
\newblock


\bibitem[Storjohann(2000)]%
        {Storjohann00}
\bibfield{author}{\bibinfo{person}{A. Storjohann}.}
  \bibinfo{year}{2000}\natexlab{}.
\newblock \emph{\bibinfo{title}{Algorithms for Matrix Canonical Forms}}.
\newblock \bibinfo{thesistype}{Ph.\,D. Dissertation}. \bibinfo{school}{Swiss
  Federal Institute of Technology -- ETH}.
\newblock


\bibitem[Storjohann(2006)]%
        {Storjohann06}
\bibfield{author}{\bibinfo{person}{A. Storjohann}.}
  \bibinfo{year}{2006}\natexlab{}.
\newblock \showarticletitle{Notes on computing minimal approximant bases}. In
  \bibinfo{booktitle}{\emph{Challenges in Symbolic Computation Software}}
  \emph{(\bibinfo{series}{Dagstuhl Seminar Proceedings})}.
\newblock
\showISSN{1862-4405}
\urldef\tempurl%
\url{http://drops.dagstuhl.de/opus/volltexte/2006/776}
\showURL{%
\tempurl}


\bibitem[Storjohann and Mulders(1998)]%
        {StorjohannMulders1998}
\bibfield{author}{\bibinfo{person}{A. Storjohann} {and} \bibinfo{person}{T.
  Mulders}.} \bibinfo{year}{1998}\natexlab{}.
\newblock \showarticletitle{Fast Algorithms for Linear Algebra Modulo {N}}. In
  \bibinfo{booktitle}{\emph{Proceedings Algorithms --- ESA' 98}}.
  \bibinfo{publisher}{Springer}, \bibinfo{pages}{139--150}.
\newblock
\showISBNx{978-3-540-68530-2}
\urldef\tempurl%
\url{https://doi.org/10.1007/3-540-68530-8_12}
\showDOI{\tempurl}


\bibitem[Storjohann and Villard(2005)]%
        {StoVil05}
\bibfield{author}{\bibinfo{person}{A. Storjohann} {and} \bibinfo{person}{G.
  Villard}.} \bibinfo{year}{2005}\natexlab{}.
\newblock \showarticletitle{Computing the Rank and a Small Nullspace Basis of a
  Polynomial Matrix}. In \bibinfo{booktitle}{\emph{Proceedings ISSAC 2005}}.
  \bibinfo{publisher}{ACM}, \bibinfo{pages}{309--316}.
\newblock
\showISBNx{1-59593-095-7}
\urldef\tempurl%
\url{https://doi.org/10.1145/1073884.1073927}
\showDOI{\tempurl}


\bibitem[Storjohann and Yang(2015)]%
        {StorjohannYang2015}
\bibfield{author}{\bibinfo{person}{A. Storjohann} {and} \bibinfo{person}{S.
  Yang}.} \bibinfo{year}{2015}\natexlab{}.
\newblock \showarticletitle{A Relaxed Algorithm for Online Matrix Inversion}.
  In \bibinfo{booktitle}{\emph{Proceedings ISSAC 2015}}.
  \bibinfo{publisher}{ACM}, \bibinfo{pages}{339--346}.
\newblock
\urldef\tempurl%
\url{https://doi.org/10.1145/2755996.2756672}
\showDOI{\tempurl}


\bibitem[Van~Barel and Bultheel(1992)]%
        {BarBul92}
\bibfield{author}{\bibinfo{person}{M. Van~Barel} {and} \bibinfo{person}{A.
  Bultheel}.} \bibinfo{year}{1992}\natexlab{}.
\newblock \showarticletitle{A general module theoretic framework for vector
  {M-Pad\'e} and matrix rational interpolation}.
\newblock \bibinfo{journal}{\emph{Numer. Algorithms}}  \bibinfo{volume}{3}
  (\bibinfo{year}{1992}), \bibinfo{pages}{451--462}.
\newblock
\urldef\tempurl%
\url{https://doi.org/10.1007/BF02141952}
\showDOI{\tempurl}


\bibitem[Zhou(2012)]%
        {Zhou12}
\bibfield{author}{\bibinfo{person}{W. Zhou}.} \bibinfo{year}{2012}\natexlab{}.
\newblock \emph{\bibinfo{title}{Fast Order Basis and Kernel Basis Computation
  and Related Problems}}.
\newblock \bibinfo{thesistype}{Ph.\,D. Dissertation}.
  \bibinfo{school}{University of Waterloo}.
\newblock


\bibitem[Zhou and Labahn(2012)]%
        {ZhoLab12}
\bibfield{author}{\bibinfo{person}{W. Zhou} {and} \bibinfo{person}{G. Labahn}.}
  \bibinfo{year}{2012}\natexlab{}.
\newblock \showarticletitle{Efficient Algorithms for Order Basis Computation}.
\newblock \bibinfo{journal}{\emph{J. Symbolic Comput.}} \bibinfo{volume}{47},
  \bibinfo{number}{7} (\bibinfo{year}{2012}), \bibinfo{pages}{793--819}.
\newblock
\showISSN{0747-7171}
\urldef\tempurl%
\url{https://doi.org/10.1016/j.jsc.2011.12.009}
\showDOI{\tempurl}


\bibitem[Zhou and Labahn(2013)]%
        {ZhoLab13}
\bibfield{author}{\bibinfo{person}{W. Zhou} {and} \bibinfo{person}{G. Labahn}.}
  \bibinfo{year}{2013}\natexlab{}.
\newblock \showarticletitle{Computing Column Bases of Polynomial Matrices}. In
  \bibinfo{booktitle}{\emph{Proceedings ISSAC 2013}}. \bibinfo{publisher}{ACM},
  \bibinfo{pages}{379--386}.
\newblock
\showISBNx{978-1-4503-2059-7}
\urldef\tempurl%
\url{https://doi.org/10.1145/2465506.2465947}
\showDOI{\tempurl}


\bibitem[Zhou and Labahn(2014)]%
        {ZhoLab14}
\bibfield{author}{\bibinfo{person}{W. Zhou} {and} \bibinfo{person}{G. Labahn}.}
  \bibinfo{year}{2014}\natexlab{}.
\newblock \showarticletitle{Unimodular Completion of Polynomial Matrices}. In
  \bibinfo{booktitle}{\emph{Proceedings ISSAC 2014}}. \bibinfo{publisher}{ACM},
  \bibinfo{pages}{413--420}.
\newblock
\showISBNx{978-1-4503-2501-1}
\urldef\tempurl%
\url{https://doi.org/10.1145/2608628.2608640}
\showDOI{\tempurl}


\bibitem[Zhou et~al\mbox{.}(2012)]%
        {ZhLaSt12}
\bibfield{author}{\bibinfo{person}{W. Zhou}, \bibinfo{person}{G. Labahn}, {and}
  \bibinfo{person}{A. Storjohann}.} \bibinfo{year}{2012}\natexlab{}.
\newblock \showarticletitle{Computing Minimal Nullspace Bases}. In
  \bibinfo{booktitle}{\emph{Proceedings ISSAC 2012}}. \bibinfo{publisher}{ACM},
  \bibinfo{pages}{366--373}.
\newblock
\showISBNx{978-1-4503-1269-1}
\urldef\tempurl%
\url{https://doi.org/10.1145/2442829.2442881}
\showDOI{\tempurl}


\end{thebibliography}


\end{document}